%
%
%
\documentclass[11pt]{amsart}
\usepackage{amssymb,mathrsfs,graphicx,enumerate}
\usepackage{amsmath,amsfonts,amssymb,amscd,amsthm,bbm}
\usepackage[retainorgcmds]{IEEEtrantools}
\usepackage{colortbl}

\topmargin-0.1in \textwidth6.in \textheight8.5in \oddsidemargin0in
\evensidemargin0in
\title[A gradient flow approach for the Lohe matrix model]{A gradient flow formulation of the Lohe matrix model with a high-order polynomial coupling}

\author[Ha]{Seung-Yeal Ha}
\address[Seung-Yeal Ha]{\newline Department of Mathematical Sciences\newline Seoul National University, Seoul 08826, and \newline
Korea Institue for Advanced Study, Hoegiro 85, 02455, Seoul, Republic of Korea}
\email{syha@snu.ac.kr}

\author[Park]{Hansol Park}
\address[Hansol Park]{\newline Department of Mathematical Sciences\newline Seoul National University, Seoul 08826, Republic of Korea}
\email{hansol960612@snu.ac.kr}

\newtheorem{theorem}{Theorem}[section]
\newtheorem{lemma}{Lemma}[section]
\newtheorem{corollary}{Corollary}[section]
\newtheorem{proposition}{Proposition}[section]
\newtheorem{remark}{Remark}[section]

\newtheorem{definition}{Definition}[section]

\newcommand{\bbr}{\mathbb R}

\newcommand{\bbu}{\mathbb U}

\newcommand{\bbc}{\mathbb C}

\begin{document}

\date{\today}

\subjclass{82C10, 82C22, 35B37} \keywords{Complete aggregation, emergence, Lohe matrix model, practical aggregation, tensors}

\thanks{\textbf{Acknowledgment.} The work of S.-Y.Ha is supported by NRF-2020R1A2C3A01003881}

\begin{abstract}
We present a generalized Lohe matrix model for a homogeneous ensemble with higher order couplings via the gradient flow approach. For the homogeneous free flow with the same hamiltonian, it is well known that the Lohe matrix model with cubic couplings can recast as a gradient system with a potential which is a squared Frobenius norm of  of averaged state. In this paper, we further derive a generalized Lohe matrix model with higher-order couplings via gradient flow approach for a polynomial potential. For the proposed model, we also provide a sufficient framework in terms of coupling strengths and initial data, which  leads to the emergent dynamics of the homogeneous ensemble.
\end{abstract}

\maketitle \centerline{\date}


\section{Introduction} \label{sec:1}
\setcounter{equation}{0}
Synchronous dynamics of oscillatory systems often appears naturally in our daily life, e.g., synchronous heart beating \cite{Pe} and synchronous firing of fireflies \cite{A-B,B-B, P-R, St, Wi1}, etc. Then, one of natural questions would be a mathematical model which exhibits collective synchronous behaviors. In this direction, Arthur Winfree \cite{Wi2} and Yoshiki Kuramoto \cite{Ku2} proposed analytically manageable simple mathematical models in a half century ago, and they provided a sufficient framework leading to the emergent dynamics of weakly coupled oscillators. Recently, the authors introduced a generalized aggregation model on the space of tensors, namely ``{\it the Lohe tensor model}" which encompasses all the previous Lohe type aggregation models such as the Kuramoto model \cite{B-C-M, C-H-J-K, C-S, D-X,  D-B, D-B1, H-K-R1, H-L-X, Ku1, Ku2, V-M1, V-M2}, the Lohe sphere model \cite{C-H5, O} and the Lohe matrix model \cite{D, H-K-R0,  H-R, Lo-6, Lo-5, Lo-1, Lo-2}. However, all interaction mechanism in  aforementioned models are given by cubic couplings (see Section \ref{sec:2} for details). 

In this paper, we are mainly interested in the generalization of the Lohe matrix model with high-order couplings. More precisely, let $U_j = U_j(t)$ be a $d\times d$ complex matrix, and the dynamics of the state matrix is given by the first-order matrix-valued continuous dynamical system.
\begin{equation} \label{A-1}
{\mathrm i}\dot{U}_jU_j^\dagger=H_j+ \frac{{\mathrm i}\kappa}{2N}\sum_{k=1}^N\left( U_kU_j^\dagger -U_jU_k^\dagger \right), \quad t >0, \quad  j = 1, \cdots, N,
\end{equation}
where $U^{\dagger}_j$ is the Hermitian conjugation of $U_j$. Then, it is easy to see that the quadratic quantity $U_j U_j^{\dagger}$ is conserved along \eqref{A-1}. Hence, system \eqref{A-1} can be rewritten as follows:
\begin{equation} \label{A-2}
\dot{U}_j =-{\mathrm i} H_j U_j + \frac{\kappa}{2N}\sum_{k=1}^N\left( U_kU_j^\dagger U_j -U_jU_k^\dagger U_j \right), \quad j = 1, \cdots, N.
\end{equation}
or equivalently
\begin{equation} \label{A-3}
\dot{U}_j =-{\mathrm i} H_j U_j + \frac{\kappa}{2N}\sum_{k=1}^N\left( U_k -U_jU_k^\dagger U_j \right), \quad j = 1, \cdots, N.
\end{equation}
The above system was first introduced in \cite{Lo-1, Lo-2} without resorting on the first principle as one of possible non-abelian generalizations of the Kuramoto model. In what follows, we are interested in the following two questions: \newline
\begin{itemize}
\item
(Q1):~First of all, why only the cubic couplings is involved in the R.H.S. of the Lohe matrix model \eqref{A-2}?
\vspace{0.05cm}

\item
(Q2):~If cubic couplings are not essential,  what kind of couplings can be allowed for the formation of aggregation?
\end{itemize}

\vspace{0.2cm}

The model was not introduced by a hamiltonian formalism or variational approach. Hence it is not clear why  cubic couplings were involved in. As briefly discussed in \cite{H-P3}, any odd order of couplings will be possible for the Lohe tensor model. In a recent paper by the first author and his collaborators, the Lohe matrix model with the same Hamiltonian $H_j  = H$ can recast as a gradient flow. Thanks to solution splitting property for the Lohe matrix model, we can set $H = 0$ without loss of generality. \newline

Consider the equivalent form of the Lohe matrix model:
\begin{equation} \label{A-4}
\dot{U}_j = \frac{\kappa}{2N}\sum_{k=1}^N\left( U_kU_j^\dagger U_j -U_jU_k^\dagger U_j \right), \quad j = 1, \cdots, N.
\end{equation}
A gradient flow formulation of \eqref{A-4} was first investigated in \cite{H-K-R0}. For a homogeneous Lohe matrix ensemble $\{U_j \}$, we set 
\begin{equation*} \label{A-5}
U_c : = \frac{1}{N} \sum_{j=1}^{N} U_j, \quad {\mathcal V}_1(U) := -\frac{\kappa N}{2} \| U_c \|_F^2. 
\end{equation*}
Then, system \eqref{A-3} can be recast as a gradient flow (Proposition \ref{P2.3}):
\begin{equation*} \label{A-6}
{\dot U}_i = -\left.\frac{\partial \mathcal{V}_1}{\partial U_i}\right|_{T_{U_i}\bbu(d)}, \quad i = 1, \cdots, N.
\end{equation*}
Main question is how to define a potential suitably so that the resulting gradient flow exhibits an emergent aggregation dynamics. \newline

Next, we briefly discuss our main results. First, we introduce a potential as follows. For $m \geq 1$, we set 
\begin{equation*} \label{A-7}
 {\mathcal V}_m(U) :=-\frac{\kappa N}{2m}\mathrm{tr}((U_cU_c^\dagger)^m)=-\frac{\kappa N}{2m}\mathrm{tr}(\underbrace{U_cU_c^\dagger \cdots U_cU_c^\dagger}_{2m}).
\end{equation*} 
Note that it follows from the property of trace: $\mathrm{tr}[AB]=\mathrm{tr}[BA]$ that the potential ${\mathcal V}_m(U)$ can be rewritten as 
\[
\mathrm{tr}[(U_cU_c^\dagger)^m] =\mathrm{tr}[(U_c^\dagger U_c)^m].
\]
So actually $U_c^\dagger U_c$ and $U_c U_c^\dagger$ have same effect in the trace function. \newline

Then the corresponding gradient flow
\[  \dot{U}_j = -\left. \frac{\partial\mathcal{V}_m}{\partial U_j}\right|_{T_{U_j} \mathbb{U}(d)}, \quad j = 1, \cdots, N. \]
can be expressed as 
\begin{align}
\begin{aligned} \label{A-8}
& {\mathrm i} \dot{U}_j  U_j^{\dagger} = H_j  +  \frac{{\mathrm i} \kappa}{2N^{2m-1}} \\
& \hspace{0.5cm} \times \sum_{k_1, \cdots, k_{2m-1} = 1}^{N} \Big(  U_{k_1} U_{k_2}^{\dagger} \cdots U_{k_{2m-2}}^{\dagger} U_{k_{2m-1}} U_j^{\dagger}  - U_j  U_{k_{2m-1}}^{\dagger} U_{k_{2m-2}} \cdots U_{k_2} U_{k_1}^{\dagger}  \Big).
\end{aligned}
\end{align}
Then, it is easy to see that $U_j U_j^{\dagger}$ is a conserved quantity for \eqref{A-8}. Thus for $U_j U_j^{\dagger}  = I_d$, system \eqref{A-8} can be further rewritten as 
\begin{align}
\begin{aligned} \label{A-9}
& \dot{U}_j  = -{\mathrm i}  H_j U_j  +  \frac{\kappa}{2N^{2m-1}} \\
& \hspace{0.5cm} \times  \sum_{k_1, \cdots, k_{2m-1} = 1}^{N} \Big(  U_{k_1} U_{k_2}^{\dagger} \cdots U_{k_{2m-2}}^{\dagger} U_{k_{2m-1}} U_j^{\dagger} U_j  - U_j  U_{k_{2m-1}}^{\dagger} U_{k_{2m-2}} \cdots U_{k_2} U_{k_1}^{\dagger}  U_j \Big).
\end{aligned}
\end{align}
Note that the order in the coupling term in the R.H.S. of \eqref{A-9} is $2m +1$. Second, we provide a general case:
\begin{equation} \label{A-10}
{\mathrm i} \dot{U}_j U_j^{\dagger} = H_j +  \sum_{n=1}^m\frac{{\mathrm i} \kappa_n}{2}(\underbrace{U_cU_c^\dagger U_c\cdots U_c^\dagger U_c}_{2n-1} U_j^{\dagger}-U_j \underbrace{U_c^\dagger U_cU_c^\dagger \cdots U_cU_c^\dagger}_{2n-1}).
\end{equation}
It is easy to see that the R.H.S. of \eqref{A-10} is skew-hermitian so that system \eqref{A-10} conserves the quantity $U_j U_j^{\dagger}$. For an ensemble, we set 
\begin{equation*} \label{A-11}
\mathcal{V}_{poly}:=-N\mathrm{tr}(f(U_cU_c^\dagger)), \qquad f(A) :=\frac{\kappa_1}{2}A+\frac{\kappa_2}{4}A^2+\cdots+\frac{\kappa_m}{2m}A^m.
\end{equation*}
Then, one has emergent dynamics (see Theorem \ref{T4.2}):
\[ \lim_{t \to \infty} \|U_j(t) - U_j^{\infty} \|_F = 0, \qquad \lim_{t \to \infty} \frac{d}{dt} {\mathcal V}_{poly}(U) = 0, \qquad  \lim_{t\rightarrow\infty} \| \dot{U}_j \|_F 
=0, \quad   j = 1, \cdots, N. \]

\vspace{0.5cm}

The rest of this paper is organized as follows. In Section \ref{sec:2}, we briefly discuss the Lohe matrix model and its basic properties, and we also present a gradient flow formulation, and we briefly review basic a priori estimates to be used crucially for a later use. In Section \ref{sec:3}, we consider a monomial potential function and as a gradient flow approach, we derive a generalized Lohe matrix model with a higher-order couplings, and study several emergent estimates and provide several sufficient frameworks leading to the emergent dynamics. In Section \ref{sec:4}, we consider a general case with a polynomial potential and derive a generalized Lohe matrix model with higher-order couplings. Using Barbalat's lemma, we derive an emergent dynamics of the proposed model. In Section \ref{sec:5}, we derive a Gronwall type differential inequality for a ensemble diameter. This yields an exponential decay estimate of ensemble diameter. Finally, Section \ref{sec:6} is devoted to a brief summary of our main results and some unresolved issues for a future work. 

\bigskip

\noindent {\bf Notation}: Let $\bbu(d)$ be a unitary group manifold consisting of unitary $d \times d$ matrix $U^{\dagger} U = U U^{\dagger} = I_d$ and for two unitary matrices $A, B \in \bbu(d)$, we introduce a Frobenius inner product  $\langle \cdot, \cdot \rangle_F$ and its induced norm $\| \cdot \|_F$:
\[ \langle A, B \rangle_F := \mbox{tr}(A^\dagger B), \quad \|A \|_F := \sqrt{\langle A, A \rangle_F}. \]

\section{Preliminaries} \label{sec:2}
\setcounter{equation}{0} 
In this section, we review the Lohe matrix model \cite{Lo-1, Lo-2} on the unitary group $\bbu(d)$ and review the basic properties of the Lohe matrix model such as conservation laws and gradient flow formulation.
\subsection{The Lohe matrix model} \label{sec:2.1}
Let $U_j$ be a $d\times d$ unitary matrix whose dynamics is governed by the first-order continuous-time dynamical system:
\begin{equation} \label{B-1}
{\mathrm i}\dot{U}_jU_j^\dagger=H_j+ \frac{{\mathrm i}\kappa}{2N}\sum_{k=1}^N\left( U_kU_j^\dagger-U_jU_k^\dagger\right), \quad j = 1, \cdots, N,
\end{equation}
where $\kappa$ is a nonnegative coupling strength, $U_j^\dagger$ denotes the hermitian conjugate of the matrix $U_j$, and $H_j$ is the Hermitian matrix with the property $H_j^\dagger = H_j$. This property results in the following relation: 
\begin{equation*} \label{B-2}
\langle U_j, -{\mathrm i} H_j U_j \rangle_F + \langle -{\mathrm i} H_j U_j, U_j \rangle_F =0, \quad j = 1, \cdots, N,
\end{equation*}
where $\langle \cdot, \cdot \rangle_F$ is the Frobenius inner product on ${\mathbb U}(d)$:
\[ \langle A, B \rangle_F := \mbox{tr}(A^\dagger B), \quad A, B \in {\mathbb U}(d). \]
Below, we will see that the quadratic quantity $U_j^\dagger U_j$ is a conserved quantity (see Proposition \ref{P2.1}).  

For the case $U_j U_j^\dagger = I_d$, system \eqref{B-1} can be rewritten as 
\begin{equation} \label{B-3}
\dot{U}_j = -{\mathrm i}\ H_j U_j +\frac{\kappa}{2N}\sum_{k=1}^N\left( U_k  -U_jU_k^\dagger U_j \right),  \quad j = 1, \cdots, N.
\end{equation}
Moreover, system \eqref{B-3} can be further simplified as a mean-field form using the average quantity $U_c:={1\over{N}}\sum_{k=1}^N U_k$ to rewrite system \eqref{B-3} as 
\begin{equation*} \label{B-4}
\dot{U}_j= -{\mathrm i} H_j U_j +  \frac{\kappa}{2} \left( U_c-U_j U_c^\dagger U_j\right).
\end{equation*}
Next, we list several key properties of \eqref{B-1} as follows.
\begin{proposition} \label{P2.1}
\emph{\cite{Lo-1, Lo-2}} 
Let $\{U_j\}$ be a global smooth solution to \eqref{B-1} with the initial data $\{U_j^{in} \}$. Then, the following assertions hold.
\begin{enumerate}
\item
(Conservation of amplitude): The quadratic quantity $U_j U_j^\dagger$ is conserved along the Lohe matrix flow \eqref{B-1}: 
\[  U_j(t) U_j^\dagger(t) = U_j^{in} U_j^{in \dagger}, \quad t > 0. \]
\item
(Unitary invariance): Let ${\tilde U}_j$ be a transformed state by the relation:
\[ {\tilde U}_j := U_j L, \quad L \in {\mathbb U}(d). \]
Then, the transformed state ${\tilde U}_j$ satisfies
\[ {\mathrm i} \dot{{\tilde U}}_j {\tilde U}_j^\dagger = H_j + \frac{{\mathrm i} \kappa}{2N}\sum_{k = 1}^{N} \left( {\tilde U}_k {\tilde U}_j^\dagger - {\tilde U}_j {\tilde U}_k^\dagger \right), \quad t > 0. \]
\end{enumerate}
\end{proposition}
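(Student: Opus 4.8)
The plan is to establish both assertions by direct differentiation together with algebraic manipulation, exploiting only the Hermiticity $H_j^\dagger = H_j$ and the skew structure of the cubic coupling; no analytic input beyond smoothness of the flow is needed. For the conservation law (1), I would abbreviate the right-hand side of \eqref{B-1} as $A_j := H_j + \frac{\mathrm{i}\kappa}{2N}\sum_{k=1}^N(U_kU_j^\dagger - U_jU_k^\dagger)$, so that $\dot{U}_jU_j^\dagger = -\mathrm{i}A_j$. Differentiating the amplitude by the product rule gives
\[ \frac{d}{dt}\big(U_jU_j^\dagger\big) = \dot{U}_jU_j^\dagger + U_j\dot{U}_j^\dagger = -\mathrm{i}A_j + \big(\dot{U}_jU_j^\dagger\big)^\dagger = -\mathrm{i}A_j + \mathrm{i}A_j^\dagger = \mathrm{i}\big(A_j^\dagger - A_j\big). \]
Hence the claim reduces to showing that $A_j$ is Hermitian. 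This follows from $H_j^\dagger = H_j$ and the observation $(U_kU_j^\dagger - U_jU_k^\dagger)^\dagger = U_jU_k^\dagger - U_kU_j^\dagger$, which together with $\overline{\mathrm{i}} = -\mathrm{i}$ makes the coupling sum Hermitian. Therefore the derivative vanishes, and integrating yields $U_j(t)U_j^\dagger(t) = U_j^{in}U_j^{in\dagger}$ for all $t>0$.

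For the unitary invariance (2), the decisive point is that $L \in \mathbb{U}(d)$ is time-independent, so $\dot{\tilde{U}}_j = \dot{U}_jL$. Using $LL^\dagger = I_d$ I would first compute $\mathrm{i}\dot{\tilde{U}}_j\tilde{U}_j^\dagger = \mathrm{i}\dot{U}_jLL^\dagger U_j^\dagger = \mathrm{i}\dot{U}_jU_j^\dagger$, so the left-hand side of the governing equation is unchanged. The same cancellation gives $\tilde{U}_k\tilde{U}_j^\dagger = U_kLL^\dagger U_j^\dagger = U_kU_j^\dagger$ and likewise $\tilde{U}_j\tilde{U}_k^\dagger = U_jU_k^\dagger$ for every pair $(j,k)$. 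Substituting these three identities into \eqref{B-1} for the original ensemble reproduces verbatim the asserted equation for $\{\tilde{U}_j\}$.

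Neither part presents a genuine obstacle; both are short verifications. The only point requiring care is the bookkeeping of the two sign flips under conjugation in part (1): the scalar contributes $\overline{\mathrm{i}} = -\mathrm{i}$, while the dagger reverses the order in each product and interchanges $U_kU_j^\dagger \leftrightarrow U_jU_k^\dagger$. One must check that these two flips cancel, so that the coupling comes out Hermitian, forcing the amplitude to be conserved, rather than skew-Hermitian. This Hermiticity of the right-hand side is precisely the structural feature invoked again in the later sections to guarantee that each higher-order flow, such as \eqref{A-8} and \eqref{A-10}, keeps $U_jU_j^\dagger$ constant.
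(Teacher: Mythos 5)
Your proof is correct; the paper itself cites Proposition 2.1 to Lohe's work without reproving it, but your argument for part (1) — writing $\dot U_j U_j^\dagger = -\mathrm{i}A_j$, taking the Hermitian conjugate, and adding so that everything reduces to the Hermiticity of $A_j$ — is precisely the argument the paper uses for the generalized version in Lemma 3.1, and your part (2) is the standard cancellation $LL^\dagger = I_d$. Nothing is missing.
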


\vspace{0.5cm}

As in the Kuramoto model, system \eqref{B-1} admits a ``{\it  solution splitting property}"  for the identical Hamiltonian case:
\[  H_j = H, \quad j = 1, \cdots, N. \]
In this case, system \eqref{B-1} becomes
\begin{equation} \label{B-5}
\displaystyle \dot{U}_j  = -{\mathrm i} H U_j + \frac{\kappa}{2N}\sum_{k = 1}^{N} (U_k  - U_j U_k^\dagger U_j ), \quad j = 1, \cdots, N.
\end{equation}
Let ${\mathcal R}(t)$ and ${\mathcal L}(t)$ be the two solution operators corresponding to the following two subsystems, respectively:
\begin{equation*} \label{S-L}
\displaystyle \dot{V}_j  = -{\mathrm i} H V_j,  \quad \dot{W}_j  =  \frac{\kappa}{2N}\sum_{k = 1}^{N} (W_k  - W_j W_k^\dagger W_j ).
\end{equation*}
Next, we introduce solutions operators associated with the above two systems:
\[ 
{\mathcal R}(t) {\mathcal V}^{in}  := (e^{-{\mathrm i} H t} V_1^{in}, \cdots, e^{-{\mathrm i} H t} V_N^{in}), \quad  {\mathcal W}(t) L^{in}  := (W_1(t), \cdots, W_N(t)), \quad t \geq 0. 
\]
\begin{proposition} \label{P2.2}
\emph{\cite{H-R}}
Let ${\mathcal S}(t)$ be a solution operator to \eqref{B-5}. Then, one has
\[ 
{\mathcal S}(t) =  {\mathcal R}(t) \circ {\mathcal L}(t),\quad t \geq 0.
\]    
\end{proposition}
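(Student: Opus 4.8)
The plan is to verify the splitting identity directly, by building the full solution out of the two decoupled subflows and then appealing to uniqueness of the Cauchy problem. Let $\{W_j(t)\}$ denote the solution of the nonlinear coupled subsystem
\[ \dot{W}_j = \frac{\kappa}{2N}\sum_{k=1}^N \left( W_k - W_j W_k^\dagger W_j \right), \quad j = 1, \cdots, N, \]
issued from initial data $W_j(0) = U_j^{in}$, so that $\{W_j(t)\} = {\mathcal L}(t)\{U_j^{in}\}$. I would then define the candidate
\[ U_j(t) := e^{-{\mathrm i} H t} W_j(t), \quad j = 1, \cdots, N, \]
which is precisely the $j$-th component of $({\mathcal R}(t)\circ{\mathcal L}(t))\{U_j^{in}\}$, and show that this $\{U_j(t)\}$ solves the full system \eqref{B-5} with the same initial data.

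Differentiating the definition gives $\dot{U}_j = -{\mathrm i} H U_j + e^{-{\mathrm i} H t}\dot{W}_j$, so the whole argument reduces to checking that $e^{-{\mathrm i} H t}\dot{W}_j$ reproduces the cubic coupling term written in the $U$-variables. The decisive structural fact — and the heart of the proof — is that the cubic coupling is equivariant under left multiplication by the unitary free-flow operator. Since $H$ is Hermitian, $L := e^{-{\mathrm i} H t}$ is unitary, hence $L^\dagger L = I_d$, and a direct computation yields
\[ L W_k - (L W_j)(L W_k)^\dagger (L W_j) = L W_k - L W_j W_k^\dagger L^\dagger L W_j = L\left( W_k - W_j W_k^\dagger W_j \right), \]
where the interior cancellation $L^\dagger L = I_d$ is exactly what makes the identity close. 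Summing over $k$ gives $\frac{\kappa}{2N}\sum_k (U_k - U_j U_k^\dagger U_j) = e^{-{\mathrm i} H t}\dot{W}_j$, so that $\{U_j(t)\}$ satisfies \eqref{B-5}.

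To conclude, I would note that at $t = 0$ one has $U_j(0) = W_j(0) = U_j^{in}$, so $\{U_j(t)\}$ and ${\mathcal S}(t)\{U_j^{in}\}$ solve the same initial value problem. Because the right-hand side of \eqref{B-5} is polynomial (hence locally Lipschitz) in the matrix entries, the solution is unique, forcing $\{U_j(t)\} = {\mathcal S}(t)\{U_j^{in}\}$. Since the initial ensemble is arbitrary, this is the operator identity ${\mathcal S}(t) = {\mathcal R}(t)\circ{\mathcal L}(t)$ for all $t \geq 0$.

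I do not expect a genuine obstacle here; the content of the statement is the left-equivariance of the coupling term, which rests entirely on the Hermiticity of $H$ (giving unitarity of $e^{-{\mathrm i} H t}$) together with the alternating pattern of Hermitian conjugates in $U_k - U_j U_k^\dagger U_j$. The only point demanding minor care is the bookkeeping of conjugates, $(L W_k)^\dagger = W_k^\dagger L^\dagger$, in the middle factor. It is worth flagging that this same alternating-dagger structure is what will later allow an analogous splitting to survive for the higher-order couplings of \eqref{A-9}, where each interior pair $L^\dagger L$ again collapses to $I_d$.
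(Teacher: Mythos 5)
Your proof is correct and follows essentially the same route the paper takes: the paper cites \cite{H-R} for Proposition \ref{P2.2} itself, but the substitution $U_j = e^{-\mathrm{i}Ht}W_j$ combined with the unitarity of $e^{-\mathrm{i}Ht}$ (so that the interior $L^\dagger L$ collapses in the cubic coupling) is exactly the argument the paper sketches for the analogous splitting proposition in Section \ref{sec:3}. Your explicit appeal to uniqueness of the Cauchy problem to upgrade the verification to the operator identity is a small but welcome addition of rigor.
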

It follows from Proposition \ref{P2.2} that it suffices to assume $H = 0$ for a homogeneous ensemble in what follows.

\subsection{A gradient flow formulation}  \label{sec:2.2} Consider the Lohe matrix model with $H \equiv 0$:
\begin{equation} \label{B-5-1}
 \dot{U}_j =   \frac{\kappa}{2N}\sum_{k=1}^N \left(U_k -U_j U_k^\dagger  U_j  \right), \quad j = 1, \cdots, N.
\end{equation} 
In \cite{H-K-R2}, the authors introduced an order parameter $R$ and the corresponding potential ${\mathcal V}_1$ for \eqref{B-5-1} with $H_i = 0$:
\begin{equation} \label{B-6}
  R^2 :=\frac{1}{N^2}\sum_{i,j=1}^N \mbox{tr}\left(U^{\dagger}_i U_j \right) = \mbox{tr}(U_c^+ U_c) \quad \mbox{and} \quad  \mathcal{V}_1:=-\frac{\kappa N}{2} R^2 = -\frac{\kappa N}{2} \|U_c \|_F^2,
\end{equation}
Then, it is easy to see that $R^2$ is analytic and 
\begin{equation} \label{B-6-1}
 R =  \|U_c \|_F \leq \frac{1}{N} \sum_{j=1}^{N} \|U_j \|_F  = \sqrt{d}.
\end{equation}

Note that the potential ${\mathcal V}_1$ is an analytic function of states $U_j$'s, and the Riemannian metric on $\bbu(d)$ is induced by the natural inclusion $\bbu(d)\hookrightarrow M_{d,d}(\bbc)$.
\begin{proposition} \label{P2.3}
\emph{\cite{H-K-R2}}
The Lohe matrix model \eqref{B-5-1} with $H = 0$ is a gradient flow with an analytical potential ${\mathcal V}_1$ in \eqref{B-6}:
\[ {\dot U}_i = -\left.\frac{\partial \mathcal{V}_1}{\partial U_i}\right|_{T_{U_i}\bbu(d)}, \quad t > 0, \quad i = 1, \cdots, N.   \]
\end{proposition}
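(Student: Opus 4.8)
The plan is to treat $\mathcal{V}_1$ as a smooth real-valued function on the product manifold $\bbu(d)^N \hookrightarrow M_{d,d}(\bbc)^N$ and to compute its Riemannian gradient block-by-block as the orthogonal projection of the ambient (Euclidean) gradient onto each tangent space $T_{U_i}\bbu(d)$, where orthogonality is taken with respect to the \emph{real} Frobenius inner product $\langle A,B\rangle_{\mathbb{R}} := \mathrm{Re}\,\mathrm{tr}(A^\dagger B)$. The first step is therefore to record the tangent/normal decomposition of $M_{d,d}(\bbc)$ at a point $U\in\bbu(d)$: differentiating the constraint $U^\dagger U = I_d$ shows that $T_U\bbu(d) = \{U\Omega : \Omega^\dagger = -\Omega\}$, and I would check that its $\langle\cdot,\cdot\rangle_{\mathbb{R}}$-orthogonal complement inside $M_{d,d}(\bbc)$ is the space of Hermitian translates $\{US : S^\dagger = S\}$. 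This uses only that $\mathrm{tr}(\Omega S)$ is purely imaginary whenever $\Omega$ is skew-Hermitian and $S$ is Hermitian. Consequently the tangential projection of an arbitrary $A\in M_{d,d}(\bbc)$ is $\Pi_{T_U}(A) = \tfrac12\big(A - U A^\dagger U\big)$.

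Next I would compute the ambient gradient of $\mathcal{V}_1 = -\tfrac{\kappa}{2N}\sum_{k,l}\mathrm{tr}(U_k^\dagger U_l)$ with respect to the $i$-th slot. Perturbing $U_i \mapsto U_i + \varepsilon X$ and differentiating at $\varepsilon = 0$, only the terms with $k=i$ or $l=i$ survive, giving $\tfrac{d}{d\varepsilon}\big|_{0}\mathcal{V}_1 = -\tfrac{\kappa}{2N}\big(\sum_l \mathrm{tr}(X^\dagger U_l) + \sum_k \mathrm{tr}(U_k^\dagger X)\big)$. Since $\mathrm{tr}(U_k^\dagger X) = \overline{\mathrm{tr}(X^\dagger U_k)}$, the bracket equals $2\,\mathrm{Re}\,\mathrm{tr}\big((NU_c)^\dagger X\big) = 2N\langle U_c, X\rangle_{\mathbb{R}}$, so that $d\mathcal{V}_1(X) = \langle -\kappa U_c, X\rangle_{\mathbb{R}}$ and the ambient gradient in the $i$-th block is $-\kappa U_c$, independently of $i$.

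Then I project: applying $\Pi_{T_{U_i}}$ to $A = -\kappa U_c$ gives $\Pi_{T_{U_i}}(-\kappa U_c) = \tfrac12\big(-\kappa U_c + \kappa U_i U_c^\dagger U_i\big) = -\tfrac{\kappa}{2}\big(U_c - U_i U_c^\dagger U_i\big)$, whence $-\left.\tfrac{\partial\mathcal{V}_1}{\partial U_i}\right|_{T_{U_i}\bbu(d)} = \tfrac{\kappa}{2}\big(U_c - U_i U_c^\dagger U_i\big)$, which is exactly the mean-field right-hand side of \eqref{B-5-1}. This closes the identification.

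The main obstacle is conceptual rather than computational: keeping the complex/real bookkeeping straight. The Frobenius product $\mathrm{tr}(A^\dagger B)$ is complex-valued, so the gradient must be defined through its real part, and one must verify that the skew-Hermitian translates and the Hermitian translates are genuinely $\langle\cdot,\cdot\rangle_{\mathbb{R}}$-orthogonal (not merely formally), which is the crux that produces the clean projector $\tfrac12(A - UA^\dagger U)$ and hence the correct factor $\tfrac{\kappa}{2}$ matching the coupling in \eqref{B-5-1}. A secondary point worth stating carefully is that $\mathcal{V}_1$ is real-analytic on $\bbu(d)^N$, which follows immediately since it is a real polynomial in the real and imaginary parts of the entries of the $U_j$'s; this analyticity is what later feeds the \L ojasiewicz-type convergence arguments.
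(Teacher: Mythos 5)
Your argument is correct and follows essentially the same route as the paper's proof: compute the ambient Euclidean gradient of $\mathcal{V}_1$ in the $i$-th slot (obtaining $-\kappa U_c$) and then orthogonally project onto $T_{U_i}\bbu(d)$ via $A \mapsto \tfrac12\bigl(A - U_i A^\dagger U_i\bigr)$, which yields $-\tfrac{\kappa}{2}\bigl(U_c - U_i U_c^\dagger U_i\bigr)$. The only differences are presentational: you obtain the ambient gradient from a coordinate-free directional derivative together with the real Frobenius pairing $\mathrm{Re}\,\mathrm{tr}(A^\dagger B)$, and you realize the tangent space as a left translate $U\,\mathfrak{u}(d)$, whereas the paper grinds through the real and imaginary entries $a_i^{kl}, b_i^{kl}$ and uses the right translate $\mathfrak{u}(d)\,U_i$; both give the identical projector and the identical final identity.
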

\begin{proof}
Although a detailed proof can be found in \cite{H-K-R2}, we briefly sketch its proof here for self-containedness. \newline

\noindent $\bullet$~Step A (Expression of the potential in terms of components of $U_i$):~Note that the function $\mathcal{V}$ has an obvious polynomial extension to all of $M_{d,d}(\bbc)^N=\bbc^{2d^2N}$ viewed as a real analytic manifold. Since each variable $U_i$ is in $M_{d,d}(\bbc)=\bbc^{2d^2}$, the partial derivatives of a matrix can be calculated by the partial derivatives of each real and imaginary component of $U_i$ on $\bbr^{2d^2}$. Let $u^{kl}_i=a^{kl}_i+{\mathrm i}b^{kl}_i$ be the $(k,l)$-element of matrix $U_i$, where $a_i^{kl}$ and $b_i^{kl}$ are real numbers. First, we use 
\[\mbox{tr}\left(U_i U_j^\dagger\right)  = \sum_{k,l = 1}^{d} u_i^{kl} {\bar u_j^{kl}}=\sum_{k,l = 1}^{d}\left[ (a_i^{kl}a_j^{kl}+ b_i^{kl}b_j^{kl})+{\mathrm i}(a_j^{kl}b_i^{kl}-a_i^{kl}b_j^{kl})\right] \]
to see
\[ \label{B-6-2}
{\mathcal V}_1 = -\frac{\kappa N}{2}\|U_c\|_F^2=-\frac{\kappa}{2N}\sum_{i, j=1}^N\mathrm{tr}(U_iU_j^\dagger) =  -\frac{\kappa}{2N}  \sum_{i,j=1}^N   \sum_{k,l = 1}^{d} \left[ a_i^{kl}a_j^{kl}+ b_i^{kl}b_j^{kl}\right],
\]
where we cancel the imaginary term by symmetry of the indices $i$, $j$. \newline

\noindent $\bullet$~Step B: We derive
\begin{equation*} \label{B-7}
 \left.\frac{\partial \mathcal{V}_1}{\partial U_i}\right|_{T_{U_i}M_{d,d}(\bbc)} = -\frac{\kappa}{N}\sum_{j=1}^N U_j = -\kappa U_c.    
\end{equation*} 
By direct calculation with \eqref{B-6}, one has 
\[
\frac{\partial \mathcal{V}_1}{\partial a^{kl}_i}=-\frac{\kappa}{N}\sum_{j=1}^N a^{kl}_j, \quad \frac{\partial \mathcal{V}_1}{\partial b^{kl}_i}=-\frac{\kappa}{N}\sum_{j=1}^N b^{kl}_j,
\]
and thus reverting back to the coordinates of $M_{d,d}(\bbc)^N=\bbr^{2d^2N}$, we have
\[
\left.\frac{\partial \mathcal{V}_1}{\partial U_i}\right|_{T_{U_i}M_{d,d}(\bbc)} =\sum_{k,l=1}^d\left(\frac{\partial \mathcal{V}_1}{\partial a^{kl}_i}+{\mathrm i}\frac{\partial \mathcal{V}_1}{\partial b^{kl}_i}\right)E^{kl} =-\frac{\kappa}{N}\sum_{j=1}^N\sum_{k,l=1}^d u^{kl}_jE^{kl} =-\frac{\kappa}{N}\sum_{j=1}^N U_j,
\]
where $E^{kl}$ denotes the $d\times d$ matrix whose $(k,l)$-coordinate is $1$ and the other coordinates are $0$. \newline

\noindent $\bullet$~Step C: We set 
\[  \mathfrak{u}(d)=T_{I_d}\bbu(d) = \{ X \in M_{d,d}(\bbc) ~|~ X + X^\dagger = 0 \}, \]
and define an orthogonal projection:
\[ \pi:T_{I_d}M_{d,d}(\bbc)\rightarrow\mathfrak{u}(d) \quad \mbox{by}~~ A \mapsto \frac{1}{2}(A-A^\dagger). \]
Since ${T_{U_i}\bbu(d)}$ is the right translate $\mathfrak{u}(d)U_i$ of $\mathfrak{u}(d)$, we can see that the orthogonal projection $\pi_{U_i}:T_{U_i}M_{d,d}(\bbc) \rightarrow {T_{U_i}\bbu(d)}$ is given by $AU_i \mapsto \pi(A)U_i=\frac{1}{2}(A-A^\dagger)U_i$ for an element $AU_i \in T_{U_i}M_{d,d}(\bbc)$. Hence we may calculate
\begin{align*}
\left.\frac{\partial \mathcal{V}_1}{\partial U_i}\right|_{T_{U_i}\bbu(d)} &= \pi_{U_i}\left( \left.\frac{\partial \mathcal{V}_1}{\partial U_i}\right|_{T_{U_i}M_{d,d}(\bbc)}  \right) = \pi\left( \left.\frac{\partial \mathcal{V}_1}{\partial U_i}\right|_{T_{U_i}M_{d,d}(\bbc)} U_i^\dagger \right)U_i \\
&=\pi\left(-\frac{\kappa}{N}\sum_{j=1}^N U_jU_i^\dagger\right)U_i = -\frac{\kappa}{2N}\sum_{j=1}^N \left(U_j U_i^\dagger -U_i U_j^\dagger\right) U_i \\
&= -\frac{\kappa}{2} (U_c - U_i U^{\dagger}_c U_i).
\end{align*}
\end{proof}
As a direct application of Proposition \ref{P2.3}, we have the convergence of the flow $e^{-{\mathrm i} Ht}U_i$ as $t \to \infty$.
\begin{corollary} \label{C2.1}
Let $U_i = U_i(t)$ be a global solution to the Cauchy problem \eqref{B-5-1}. Then, the flow $U_i$ converges for any initial configuration $\{U_i^{in}\}$.
\end{corollary}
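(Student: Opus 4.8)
The plan is to recognize \eqref{B-5-1} as the negative gradient flow of the real-analytic potential $\mathcal{V}_1$ on the compact real-analytic manifold $\bbu(d)^N$, and then to deploy the \L ojasiewicz gradient inequality, which is the standard tool for upgrading convergence along subsequences of an analytic gradient flow to genuine convergence to a single equilibrium. First I would record the two structural ingredients supplied by the previous results. On the one hand, $\bbu(d)$ is compact, so the whole trajectory $U(t)=(U_1(t),\dots,U_N(t))$ remains in the compact set $\bbu(d)^N$; hence its $\omega$-limit set $\omega(U)$ is nonempty, compact, and connected. On the other hand, writing $\nabla_i\mathcal{V}_1:=\left.\frac{\partial\mathcal{V}_1}{\partial U_i}\right|_{T_{U_i}\bbu(d)}$ for the Riemannian gradient, Proposition \ref{P2.3} gives $\dot U_i=-\nabla_i\mathcal{V}_1$, and since each $\dot U_i\in T_{U_i}\bbu(d)$, the chain rule yields the dissipation identity
\[ \frac{d}{dt}\mathcal{V}_1(U(t)) = \sum_{i=1}^N \langle \nabla_i\mathcal{V}_1, \dot U_i\rangle_F = -\sum_{i=1}^N \|\dot U_i\|_F^2 \le 0. \]
Because \eqref{B-6-1} bounds $\mathcal{V}_1\ge -\tfrac{\kappa N}{2}d$ from below, $\mathcal{V}_1(U(t))$ decreases to a finite limit $\mathcal{V}_1^\infty$, and integrating gives $\int_0^\infty\sum_i\|\dot U_i\|_F^2\,dt<\infty$. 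A routine LaSalle-type argument then shows that every point of $\omega(U)$ is an equilibrium of \eqref{B-5-1} (a critical point of $\mathcal{V}_1$) and that $\mathcal{V}_1\equiv \mathcal{V}_1^\infty$ on $\omega(U)$.

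The heart of the argument is to promote this to convergence. Since $\mathcal{V}_1$ is a polynomial in the real and imaginary coordinates of the $U_i$ (Step A of Proposition \ref{P2.3}) and $\bbu(d)\hookrightarrow M_{d,d}(\bbc)$ is a real-analytic embedding, $\mathcal{V}_1$ is real-analytic on $\bbu(d)^N$. I would therefore invoke the \L ojasiewicz gradient inequality on the manifold: for each equilibrium $p\in\omega(U)$ there exist a neighborhood $\mathcal{O}_p$ and constants $C>0$, $\theta\in(0,\tfrac12]$ with
\[ |\mathcal{V}_1(U)-\mathcal{V}_1^\infty|^{1-\theta} \le C\,\Big(\sum_{i=1}^N\|\nabla_i\mathcal{V}_1\|_F^2\Big)^{1/2}, \qquad U\in\mathcal{O}_p. \]
Setting $\phi(t):=(\mathcal{V}_1(U(t))-\mathcal{V}_1^\infty)^{\theta}$ (well defined as long as $\mathcal{V}_1(U(t))>\mathcal{V}_1^\infty$) and differentiating, this inequality together with the dissipation identity yields
\[ -\frac{d}{dt}\phi(t) = \theta\,(\mathcal{V}_1-\mathcal{V}_1^\infty)^{\theta-1}\sum_{i}\|\dot U_i\|_F^2 \ge \frac{\theta}{C}\Big(\sum_i\|\dot U_i\|_F^2\Big)^{1/2}, \]
so the instantaneous speed $\big(\sum_i\|\dot U_i\|_F^2\big)^{1/2}$ of the trajectory is integrable as long as the orbit stays in $\mathcal{O}_p$, and the arc length of $U(\cdot)$ is bounded by a constant times the total variation of $\phi$, which is finite.

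The main obstacle, as always with \L ojasiewicz arguments, is the localization: the inequality is only valid near a fixed $p$, so one must rule out the possibility that the trajectory drifts among different points of $\omega(U)$ before settling. I would handle this by the classical continuation argument: pick $p\in\omega(U)$; since $p$ is a limit point, the orbit enters any prescribed ball $B(p,\rho)\subset\mathcal{O}_p$, and the finite-arc-length estimate shows that once the orbit is deep enough inside the ball, the remaining length it can travel is strictly smaller than its distance to $\partial B(p,\rho)$, so it can never escape. Hence the orbit is trapped near $p$, has finite total length, and is therefore Cauchy in $\bbu(d)^N$; it converges to a single limit, which must be $p$. Consequently each $U_i(t)$ converges as $t\to\infty$ for every initial configuration, as claimed. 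The only genuinely delicate point is the manifold version of the \L ojasiewicz inequality, but this is standard for analytic functions on analytic manifolds and can in any case be reduced to the Euclidean statement via the polynomial extension of $\mathcal{V}_1$ already used in Proposition \ref{P2.3}.
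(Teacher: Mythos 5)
Your proposal is correct and follows essentially the same route as the paper: the paper's proof simply cites Proposition \ref{P2.3} together with the ``standard argument'' from the reference, which is precisely the \L ojasiewicz gradient-inequality argument for analytic gradient flows on a compact analytic manifold that you have written out in detail. You have merely unpacked the argument the paper leaves to the citation, so there is nothing substantively different to compare.
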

\begin{proof} We use a gradient flow formulation in Proposition \ref{P2.3} and a standard argument in \cite{H-K-R2} to derive the convergence of the flow.
\end{proof}
\begin{lemma}  \label{L2.1}
\emph{(Babalat's lemma \cite{Ba})}
Suppose that a real-valued function $f: [0, \infty) \to \bbr$ is continuously differentiable, and $\lim_{t \to \infty} f(t) = \alpha \in \bbr$. If $f^{\prime}$ is uniformly continuous, then 
\[ \lim_{t \to \infty} f^{\prime}(t)  = 0. \]
\end{lemma}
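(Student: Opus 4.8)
The plan is to argue by contradiction, using the uniform continuity of $f'$ to convert a pointwise failure of decay into a lower bound on the oscillation of $f$ over short windows of fixed length; since $f(t)\to\alpha$, the increments $f(t+\delta)-f(t)$ must vanish as $t\to\infty$, and this will clash with the lower bound. First I would suppose the conclusion fails, so that $\limsup_{t\to\infty}|f'(t)|>0$. Then there exist a constant $\varepsilon>0$ and a sequence $t_n\nearrow\infty$ with $|f'(t_n)|\ge\varepsilon$ for every $n$; after passing to a subsequence I may assume $t_{n+1}>t_n+1$, so that the windows introduced below are disjoint.

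Next I would invoke the uniform continuity of $f'$ to choose $\delta\in(0,1)$ with $|f'(s)-f'(t)|<\varepsilon/2$ whenever $|s-t|\le\delta$. Then for every $s\in[t_n,t_n+\delta]$ one has $|f'(s)|\ge|f'(t_n)|-\varepsilon/2\ge\varepsilon/2>0$, so $f'$ is continuous and never vanishes on $[t_n,t_n+\delta]$, and hence keeps the sign of $f'(t_n)$ there. By the fundamental theorem of calculus together with this constancy of sign,
\[
\left|f(t_n+\delta)-f(t_n)\right|=\left|\int_{t_n}^{t_n+\delta}f'(s)\,ds\right|=\int_{t_n}^{t_n+\delta}|f'(s)|\,ds\ge\frac{\varepsilon\delta}{2}.
\]
On the other hand, $f(t_n+\delta)\to\alpha$ and $f(t_n)\to\alpha$ as $n\to\infty$, so the left-hand side tends to $0$, contradicting the uniform lower bound $\varepsilon\delta/2>0$. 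This forces $\lim_{t\to\infty}f'(t)=0$.

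The step I would flag as the crux is the sign-constancy claim that upgrades $\bigl|\int f'\bigr|$ to $\int|f'|$: without it one controls only the net increment of $f$, which can be small even when $|f'|$ is large, owing to cancellation from oscillation. Uniform continuity is precisely what rules this out on a fixed-length window, since it prevents $f'$ from crossing zero within a short interval. The remaining ingredients---the Cauchy behavior of $f$ coming from its convergence, and the fundamental theorem of calculus---are entirely routine.
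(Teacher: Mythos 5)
Your proof is correct, and it is the standard argument for Barb\u{a}lat's lemma: contradiction via a sequence $t_n\to\infty$ with $|f'(t_n)|\ge\varepsilon$, uniform continuity to get $|f'|\ge\varepsilon/2$ on a window of fixed length $\delta$, sign constancy (from continuity and nonvanishing of $f'$ on that window) to upgrade $\bigl|\int f'\bigr|$ to $\int|f'|\ge\varepsilon\delta/2$, and the Cauchy property of the convergent $f$ to force the increment to zero. The paper itself gives no proof of this lemma --- it is stated as a quoted result with a citation to Barb\u{a}lat's original article --- so there is nothing to compare against; your write-up supplies the missing argument correctly. One small remark: the passage to a subsequence with $t_{n+1}>t_n+1$ to make the windows disjoint is harmless but unnecessary, since the contradiction only needs $f(t_n)\to\alpha$ and $f(t_n+\delta)\to\alpha$, which follow from $t_n\to\infty$ alone.
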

\begin{proposition} \label{P2.4}
Let $\{ U_j \}$ be a global solution to the Cauchy problem \eqref{B-5}. Then,  the potential ${\mathcal V}_1$ in \eqref{B-6} satisfies
 \[ \frac{d{\mathcal V}_1}{dt} = -\sum_{i=1}^N \|\dot{U}_i\|_F^2, \quad  \exists~\lim_{t \to \infty} {\mathcal V}_1(U(t)) \quad \mbox{and} \quad \sup_{0 \leq t < \infty} \Big| \frac{d^2}{dt^2} {\mathcal V}_1(U(t)) \Big| < \infty.
\]
\end{proposition}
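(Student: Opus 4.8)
The plan is to read all three assertions off the gradient-flow structure of Proposition \ref{P2.3}, supplemented by uniform bounds coming from unitarity. By the solution-splitting property of Proposition \ref{P2.2} it suffices to treat the homogeneous ensemble with $H=0$, so that \eqref{B-5} reduces to \eqref{B-5-1} and the flow is the genuine gradient flow $\dot U_i = -\left.\frac{\partial \mathcal{V}_1}{\partial U_i}\right|_{T_{U_i}\bbu(d)}$. For the first identity I would use that the Riemannian metric on $\bbu(d)\subset M_{d,d}(\bbc)$ is the real part of the Frobenius inner product and that $\dot U_i$ is tangent to $\bbu(d)$ at $U_i$. Then the chain rule only sees the tangential part of the gradient, giving
\[
\frac{d{\mathcal V}_1}{dt} = \sum_{i=1}^N \mathrm{Re}\left\langle \left.\frac{\partial \mathcal{V}_1}{\partial U_i}\right|_{T_{U_i}\bbu(d)},\, \dot U_i\right\rangle_F = -\sum_{i=1}^N \langle \dot U_i, \dot U_i\rangle_F = -\sum_{i=1}^N \|\dot U_i\|_F^2,
\]
which is the first assertion (the inner product is automatically real since both arguments lie in the same tangent space).

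The existence of the limit is then immediate. The identity just derived shows that $t\mapsto {\mathcal V}_1(U(t))$ is nonincreasing, while $\mathcal{V}_1 = -\frac{\kappa N}{2}\|U_c\|_F^2 \geq -\frac{\kappa N}{2}\,d$ by the a priori bound $\|U_c\|_F \leq \sqrt d$ recorded in \eqref{B-6-1}. A nonincreasing function that is bounded below converges, which gives the middle assertion.

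For the uniform bound on the second derivative I would differentiate the dissipation identity once more to obtain $\frac{d^2{\mathcal V}_1}{dt^2} = -2\sum_{i} \mathrm{Re}\,\mathrm{tr}\big(\dot U_i^\dagger \ddot U_i\big)$, and then control $\|\dot U_i\|_F$ and $\|\ddot U_i\|_F$ uniformly in $t$. From \eqref{B-5-1}, since every $U_j$ remains unitary (operator norm one, Frobenius norm $\sqrt d$), each of the $2N$ summands $U_k$ and $U_iU_k^\dagger U_i$ has Frobenius norm $\sqrt d$, so $\|\dot U_i\|_F \le \kappa\sqrt d$. Differentiating \eqref{B-5-1} again and using the submultiplicativity $\|AB\|_F\le\|A\|_{op}\|B\|_F$ together with the unitarity bounds and the previous estimate for $\|\dot U_k\|_F$ yields a constant bound $\|\ddot U_i\|_F \le 2\kappa^2\sqrt d$. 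Cauchy--Schwarz then gives $\big|\tfrac{d^2}{dt^2}{\mathcal V}_1\big| \le 2\sum_{i} \|\dot U_i\|_F\|\ddot U_i\|_F \le 4N\kappa^3 d$, uniformly in $t$, which is the last assertion.

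The only computational work is this second-derivative estimate, and it is routine precisely because products of unitary matrices have operator norm one, so that no term appearing in $\ddot U_i$ can grow in time; this is where the unitarity constraint does the real work. I would finally remark that the three conclusions are exactly the hypotheses needed to apply Babalat's lemma (Lemma \ref{L2.1}) with $f(t)={\mathcal V}_1(U(t))$: boundedness of $f''$ makes $f'=\frac{d{\mathcal V}_1}{dt}$ uniformly continuous, and combined with the convergence of $f$ this forces $\sum_i\|\dot U_i\|_F^2\to 0$, the emergence statement driving the rest of the paper.
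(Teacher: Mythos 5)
Your proof is correct and follows the same overall strategy as the paper (dissipation identity, then monotone-plus-bounded-below, then a uniform bound on the second derivative), but the two computational steps are organized differently, and in both cases your route is the cleaner one. For the identity $\frac{d{\mathcal V}_1}{dt}=-\sum_i\|{\dot U}_i\|_F^2$ the paper does not invoke the gradient structure at all: it computes $\frac{dR^2}{dt}=\frac{1}{\kappa N}\sum_j\|{\dot U}_j\|_F^2$ directly from the equations of motion \eqref{B-8}, whereas you read the identity off Proposition \ref{P2.3} via the chain rule and the fact that ${\dot U}_i$ is tangent to $\bbu(d)$, which is exactly what the gradient formulation is for. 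For the second-derivative bound the paper first rewrites $\|{\dot U}_j\|_F^2$ as $\frac{\kappa^2}{2}\,\mathrm{Re}\,\mathrm{tr}(U_cU_c^\dagger-U_jU_c^\dagger U_jU_c^\dagger)$ and differentiates that trace polynomial term by term, while you differentiate abstractly to get $\frac{d^2{\mathcal V}_1}{dt^2}=-2\sum_i\mathrm{Re}\,\mathrm{tr}({\dot U}_i^\dagger{\ddot U}_i)$ and bound ${\ddot U}_i$ from the differentiated equation using $\|AB\|_F\le\|A\|_{op}\|B\|_F$; your constants ($\|{\dot U}_i\|_F\le\kappa\sqrt d$, $\|{\ddot U}_i\|_F\le 2\kappa^2\sqrt d$, hence $|\frac{d^2}{dt^2}{\mathcal V}_1|\le 4N\kappa^3 d$) are sharper than the paper's $N\kappa^3 d(1+d)(1+2d)$, which uses Frobenius submultiplicativity throughout. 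One caveat on your opening reduction: solution splitting preserves ${\mathcal V}_1$ (since $U_c^\dagger U_c=L_c^\dagger L_c$), so the second and third assertions do transfer from the $H=0$ system, but it does \emph{not} preserve $\|{\dot U}_i\|_F$, so the first identity does not follow for \eqref{B-5} with $H\neq 0$ by splitting alone; indeed for $H\neq 0$ the identity fails as literally stated, since $\frac{d}{dt}\mathrm{tr}(U_c^\dagger U_c)$ is blind to the $-{\mathrm i}HU_j$ term (it contributes $\mathrm{Re}(-{\mathrm i}\,\mathrm{tr}(U_c^\dagger HU_c))=0$ by Hermiticity of $H$) while $\sum_i\|{\dot U}_i\|_F^2$ is not. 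The paper has the identical imprecision --- its proof silently uses the $H=0$ equations \eqref{B-8} --- so the honest statement of the proposition is for system \eqref{B-5-1}, and with that understanding your argument is complete.
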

\begin{proof}
Note that $U_j$ and $U_c$ satisfy 
\begin{equation} \label{B-8}
\dot{U}_j= \kappa(U_c-U_j U_c^\dagger U_j), \quad  \dot{U}_c=\frac{\kappa }{N}\sum_{j=1}^N (U_c-U_j U_c^\dagger U_j), \quad j  = 1, \cdots, N.
\end{equation}

\noindent $\bullet$ (Estimate of the first and second estimates): We use the above relations \eqref{B-8} to get 
\begin{equation} \label{B-9}
\frac{dR^2}{dt} =\frac{1}{\kappa N}\sum_{j=1}^N \|\dot{U}_j \|_F^2 \geq 0.  
\end{equation}
This yields the first estimate:
\[  \frac{d \mathcal{V}_1}{dt} :=- \sum_{j=1}^N \|\dot{U}_j \|_F^2 \leq 0. \]
Since ${\mathcal V}_1$ is non-decreasing and bounded below by $-\frac{1}{2}\kappa N d$ (see \eqref{B-6-1}), 
\[ \exists~\lim_{t \to \infty} {\mathcal V}_1(U(\cdot)). \]

\vspace{0.2cm}

\noindent $\bullet$ (Estimate of the third estimate): In the sequel, we will derive
\begin{equation*} \label{B-10}
 \|\dot{U}_i\|_F \leq  \frac{\kappa}{2}\sqrt{d}(1+d), \quad \|\dot{U}_c\|_F \leq \frac{\kappa}{2}\sqrt{d}(1+d), \quad 
\left|\frac{d}{dt}\|\dot{U}_j \|_F^2\right| \leq \frac{\kappa^3}{2}d(1+d)(1+2d).
\end{equation*}
For the first estimate, we use \eqref{B-6-1} and $\eqref{B-8}_1$ to get 
\[  \|\dot{U}_j \|_F \leq \frac{\kappa }{2}\Big( \|U_c \|_F + \| U_j U_c^\dagger U_j \|_F  \Big) \leq \frac{\kappa }{2}\Big( \|U_c \|_F + \| U_j \|_F \cdot \| U_c^\dagger \|_F \cdot  \|U_j \|_F  \Big)  \leq \frac{\kappa }{2} \sqrt{d}(1 + d). \] 
This also implies 
\[ \|\dot{U}_c\|_F =\left\|\frac{1}{N}\sum_{j=1}^N \dot{U}_j \right\|_F \le \frac{1}{N}\sum_{j=1}^N \|\dot{U}_j \|_F \le \frac{\kappa}{2}\sqrt{d}(1+d). \]
For the third estimate, we use $\eqref{B-8}_1$ to obtain 
\begin{align*}
\begin{aligned}
\frac{d}{dt}\|\dot{U}_j \|_F^2=&\frac{d}{dt}\frac{\kappa^2}{2} \mbox{Re}~ \mbox{tr} (U_cU_c^\dagger-U_j U_c^\dagger U_j U_c^\dagger)\\
=&\frac{\kappa^2}{2} \mbox{Re}~ \mbox{tr} (\dot{U}_cU_c^\dagger+U_c\dot{U}_c^\dagger-\dot{U}_j U_c^\dagger U_j U_c^\dagger-U_j \dot{U}_c^\dagger U_j U_c^\dagger-U_i U_c^\dagger \dot{U}_j U_c^\dagger-U_j U_c^\dagger U_j \dot{U}_c^\dagger)\\
=&\kappa^2 \mbox{Re}~ \mbox{tr} (\dot{U}_cU_c^\dagger-\dot{U}_j U_c^\dagger U_j U_c^\dagger-U_j \dot{U}_c^\dagger U_j U_c^\dagger).
\end{aligned}
\end{align*}
This yields
\begin{align*}
\begin{aligned}
\left|\frac{d}{dt}\|\dot{U}_j \|_F^2\right| &\le \kappa^2(\|\dot{U}_c \|_F \|U_c^\dagger\|+\|\dot{U}_j U_c^\dagger\|_F \| U_j U_c^\dagger\|_F +\|U_j \dot{U}_c^\dagger\|_F \| U_j U_c^\dagger\|_F) \\
&\le \frac{\kappa^3}{2}d(1+d)(1+2d).
\end{aligned}
\end{align*}
Finally, \eqref{B-9} yields
\[   \frac{d^2R^2}{dt^2} =\frac{2}{\kappa N}\sum_{j=1}^N \frac{d}{dt} \|\dot{U}_j \|_F^2 \leq \kappa^2d(1+d)(1+2d).   \]
This yields
\[  \sup_{0 \leq t < \infty} \Big| \frac{d^2{\mathcal V}_1}{dt^2} \Big| \leq N \kappa^3 d(1+d)(1+2d). \]
\end{proof}
Finally, we combine Lemma \ref{L2.1} and Proposition \ref{P2.4} to get a desired result. 
\begin{corollary} \label{C2.2}
Let $U_j = U_j(t)$ be a global solution to the Cauchy problem \eqref{B-5}. Then, one has 
\[
\lim_{t\rightarrow\infty} \dot{U_j}(t)=0 \text{ for all }~ j = 1, \cdots, N.
\]
\end{corollary}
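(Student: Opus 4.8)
The plan is to combine Barbalat's lemma (Lemma \ref{L2.1}) with the estimates already established in Proposition \ref{P2.4}. The target statement asserts $\lim_{t\to\infty}\dot{U}_j(t)=0$, and the natural scalar-valued quantity to which we apply Barbalat's lemma is the potential ${\mathcal V}_1(U(t))$. From Proposition \ref{P2.4} we already know three facts: the energy identity $\frac{d{\mathcal V}_1}{dt}=-\sum_{i=1}^N\|\dot{U}_i\|_F^2$, the existence of $\lim_{t\to\infty}{\mathcal V}_1(U(t))$, and the uniform bound $\sup_{t}\big|\frac{d^2}{dt^2}{\mathcal V}_1(U(t))\big|<\infty$. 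These are precisely the hypotheses needed to feed $f(t):={\mathcal V}_1(U(t))$ into Barbalat's lemma.

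First I would set $f(t):={\mathcal V}_1(U(t))$ and observe that $f$ is continuously differentiable with $f'(t)=-\sum_{i=1}^N\|\dot{U}_i(t)\|_F^2$, and that $\lim_{t\to\infty}f(t)$ exists and is finite by Proposition \ref{P2.4}. Next I would verify that $f'$ is uniformly continuous: since the uniform bound on $\big|\frac{d^2}{dt^2}{\mathcal V}_1\big|=|f''(t)|$ holds, $f'$ is Lipschitz on $[0,\infty)$ and hence uniformly continuous. Barbalat's lemma then yields $\lim_{t\to\infty}f'(t)=0$, that is,
\begin{equation*}
\lim_{t\to\infty}\sum_{i=1}^N\|\dot{U}_i(t)\|_F^2=0.
\end{equation*}

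Finally, since each summand $\|\dot{U}_i(t)\|_F^2$ is nonnegative, the vanishing of the sum forces each term to vanish: for every $j$, $0\le\|\dot{U}_j(t)\|_F^2\le\sum_{i=1}^N\|\dot{U}_i(t)\|_F^2\to 0$, so $\|\dot{U}_j(t)\|_F\to 0$, which is equivalent to $\dot{U}_j(t)\to 0$ in $M_{d,d}(\bbc)$. One caveat to record is that Proposition \ref{P2.4} is stated for the homogeneous system \eqref{B-5} with general identical Hamiltonian $H$; by the solution splitting property (Proposition \ref{P2.2}) the reduction to $H=0$ is harmless, and since all the estimates in Proposition \ref{P2.4} were derived under $H=0$, the corollary follows directly from those estimates.

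I do not expect a genuine obstacle here, as all the analytic work has been front-loaded into Proposition \ref{P2.4}; the only point requiring a moment of care is confirming that a uniform bound on the second derivative is exactly what promotes $f'$ to a uniformly continuous function, thereby legitimizing the application of Barbalat's lemma.
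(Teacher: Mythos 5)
Your proposal is correct and follows exactly the route the paper intends: the paper's own ``proof'' is simply the one-line remark that Corollary \ref{C2.2} follows by combining Barbalat's lemma (Lemma \ref{L2.1}) with Proposition \ref{P2.4}, and you have filled in precisely those details (uniform continuity of $\frac{d}{dt}\mathcal{V}_1$ from the second-derivative bound, then termwise vanishing of the nonnegative sum $\sum_i\|\dot{U}_i\|_F^2$). No gaps.
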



\subsection{Previous results} \label{sec:2.3}
In this subsection, we brief review two closely related results from \cite{D, H-K-R2} that deal with Lohe type aggregation model on matrix Lie groups.  The original Lohe matrix model was introduced as an aggregation model on the unitary group. After Lohe's  work \cite{Lo-1, Lo-2}, it was further extended to the Lohe group (whose definition is defined below) in \cite{H-K-R2}. Next, we provide a concept of the Lohe group in the following definition.
\begin{definition} \label{D2.1}
\emph{\cite{H-K-R2}}
Let $G$ be a subgroup of  the linear group $GL(d, k)$ with  $k=\bbr$ or $k=\bbc$. Then, we say $G$ is a Lohe group if the following two conditions hold.
\begin{enumerate}
\item
$G$ is a matrix Lie group, which is a closed subgroup of $GL(d, k)$:
\[
 G \leq GL(d, k). 
\]
\item
$G$ satisfies
\[
X-X^{-1}\in \mathfrak{g}\quad \text{ for all } X\in G.
\]
Here, $\mathfrak{g}$ is the Lie algebra associated to $G$, i.e., the tangent space $T_{I}G$ of $G$ at the identity matrix $I$.
\end{enumerate}
\end{definition}
\begin{remark}
The following matrix groups are Lohe groups:
\[  GL(k,d),~O_d(k),~O_{p,q}(k),~U_d,~SP_{n}(k),~SO_d(k),~SO_{p,q}(k),~ SP(n),~ SL(2,k),~ SU(2). \]
\end{remark}

\vspace{0.5cm}

Let $G$ and $\mathfrak{g}$ be a Lohe group and its associated Lie algebra, respectively. Then, a generalized Lohe matrix model \cite{H-K-R2} on $G$ reads as follows.
\begin{equation} \label{GL}
\begin{cases}
\displaystyle \dot{X}_j X_j^{-1}= \Omega_j + \frac{\kappa}{2N}\sum_{k=1}^{N} \Big ( X_k X_j^{-1} - X_j X_k^{-1} \Big), \quad t > 0, \\
\displaystyle X_j(0)=X^{in}_j,~~ j=1,\cdots,N,
\end{cases}
\end{equation}
where $\Omega_i\in \mathfrak{g}$.\newline

 For an ensemble $\{X_j \}$, we set 
\[
{\mathcal D}(X) := \max_{1 \leq i,j \leq N} \| X_i - X_j \|_F. 
\]

\begin{theorem}\label{T2.1}
\emph{\cite{H-K-R2}}
Suppose that $H_i$, the coupling strength and the initial data $\{X^{in} \}$ satisfy
\[ H_i =0, \quad  i = 1, \cdots, N, \quad  \kappa > 0 \quad \mbox{and} \quad {\mathcal D}(X^{in})<1. \]
Then, there exists a smooth global solution $\{ X_j \}$ such that 
\[ \frac{{\mathcal D}(X^{in})}{(1+ {\mathcal D}(X^{in}))e^{\kappa t}- {\mathcal D}(X^{in})}\leq {\mathcal D}(X(t))\leq \frac{ {\mathcal D}(X^{in})}{(1-{\mathcal D}(X^{in}))e^{\kappa t}+ {\mathcal D}(X^{in})}, \quad t \geq 0.
\]
\end{theorem}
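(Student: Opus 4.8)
The plan is to reduce the two-sided estimate to a single Gronwall-type differential inequality for the ensemble diameter $\mathcal{D}(X(t)) = \max_{1\le i,j\le N}\|X_i - X_j\|_F$, and then to integrate it by comparison with an explicitly solvable logistic ODE. Since $\mathcal{D}$ is a maximum of finitely many smooth functions $t\mapsto\|X_i(t)-X_j(t)\|_F$, it is Lipschitz continuous, hence differentiable for almost every $t$; at such $t$ I would pick a pair $(M,m)$ realizing the maximum and invoke the standard fact that the one-sided (Dini) derivative of $\mathcal{D}$ coincides with $\frac{d}{dt}\|X_M-X_m\|_F$ for the extremal pair. This removes the nonsmoothness of the $\max$ and replaces the problem by a differential inequality along a single pair.

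Next I would differentiate along the flow. With $\Omega_j=0$ the model \eqref{GL} reads $\dot X_j=\frac{\kappa}{2N}\sum_k(X_k-X_jX_k^{-1}X_j)$, so writing $A:=X_M-X_m$ one obtains the telescoping identity
$$\dot X_M-\dot X_m=-\frac{\kappa}{2N}\sum_{k=1}^N\big(AX_k^{-1}X_m+X_MX_k^{-1}A\big).$$
Substituting $X_k^{-1}X_m=I+X_k^{-1}(X_m-X_k)$ and $X_MX_k^{-1}=I+(X_M-X_k)X_k^{-1}$ splits the right-hand side into a principal part $-\frac{\kappa}{2N}\sum_k 2A=-\kappa A$ and correction terms whose Frobenius inner product against $A$ I would bound by $\|A\|_F^2$ times $\|X_m-X_k\|_F,\ \|X_M-X_k\|_F\le\mathcal{D}$. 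Using $\frac{d}{dt}\mathcal{D}=\frac{1}{\|A\|_F}\mathrm{Re}\,\langle A,\dot X_M-\dot X_m\rangle_F$ and $\|A\|_F=\mathcal{D}$, this yields
$$-\kappa\,\mathcal{D}(1+\mathcal{D})\ \le\ \frac{d}{dt}\mathcal{D}\ \le\ -\kappa\,\mathcal{D}(1-\mathcal{D}).$$

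Finally I would integrate these inequalities by comparison. A direct check shows that the asserted upper and lower bounds are exactly the solutions $y$ and $z$ of the logistic equations $\dot y=-\kappa y(1-y)$ and $\dot z=-\kappa z(1+z)$ with $y(0)=z(0)=\mathcal{D}(X^{in})$; for instance, from $\tfrac1y-1=\tfrac{1-\mathcal{D}(X^{in})}{\mathcal{D}(X^{in})}e^{\kappa t}$ one recovers $\dot y=-\kappa y(1-y)$. The comparison principle then gives $z(t)\le\mathcal{D}(X(t))\le y(t)$, which is the stated estimate. Since $\mathcal{D}(X^{in})<1$ keeps $y(t)\in(0,1)$ for all $t$, the region $\{\mathcal{D}<1\}$ is forward invariant, and together with local well-posedness of the smooth vector field \eqref{GL} this provides the global solution claimed in the statement.

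The main obstacle I anticipate is the second step, namely producing the sharp coefficients $(1\mp\mathcal{D})$ rather than some larger constant. This requires (i) estimating the correction terms with the operator norm, e.g. $\|AX_k^{-1}(X_m-X_k)\|_F\le\|A\|_F\,\|X_k^{-1}\|_{\mathrm{op}}\,\|X_m-X_k\|_F$, and controlling $\|X_k^{-1}\|_{\mathrm{op}}$, which is immediate on compact Lohe groups such as $U_d$ or $O_d$ but needs extra care on non-compact groups where $X_k^{-1}$ is not norm-bounded a priori; and (ii) justifying the differentiation of the $\max$-function through the Dini-derivative lemma, so that the differential inequality holds in the almost-everywhere, one-sided sense demanded by the comparison principle.
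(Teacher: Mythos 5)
The paper does not actually prove Theorem \ref{T2.1} --- it is imported verbatim from \cite{H-K-R2} --- but your strategy is precisely the one the paper itself deploys in Section \ref{sec:5} for the higher-order model (Lemma \ref{L5.1} and Lemma \ref{L5.2}): derive a two-sided Riccati-type differential inequality for the ensemble diameter along an extremal pair, then integrate by separation of variables. Your telescoping identity, the extraction of the principal part $-\kappa A$, and the verification that the stated bounds solve $\dot y=-\kappa y(1-y)$ and $\dot z=-\kappa z(1+z)$ are all correct, and on the unitary group (where $\|X_k^{-1}\|_{op}=1$) the correction terms are indeed bounded by $\kappa\,\mathcal{D}^3$, giving exactly $-\kappa\mathcal{D}(1+\mathcal{D})\leq \frac{d}{dt}\mathcal{D}\leq -\kappa\mathcal{D}(1-\mathcal{D})$. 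The Dini-derivative reduction of the $\max$ is standard in this literature and is handled the same way in the cited reference.

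The one genuine gap is the point you flag but do not close: Theorem \ref{T2.1} is stated for a general Lohe group $G\leq GL(d,k)$, and the remark after Definition \ref{D2.1} explicitly includes non-compact examples ($GL(k,d)$, $O_{p,q}(k)$, $SL(2,k)$), for which $\|X_k^{-1}\|_{op}$ is not bounded by $1$ a priori. Without that bound your estimate of $\|(X_M-X_k)X_k^{-1}A\|_F$ does not produce the sharp coefficients $1\mp\mathcal{D}$, and the argument only yields the theorem on compact groups such as $\bbu(d)$ --- which is, admittedly, the only case used elsewhere in this paper. Relatedly, on a non-compact group forward invariance of $\{\mathcal{D}<1\}$ does not by itself give global existence, since the ensemble could leave every compact set while keeping its diameter bounded; one needs an additional a priori confinement argument (in \cite{H-K-R2} this is supplied by keeping the normalized configuration in a controlled neighborhood of the identity, using the Lohe-group condition $X-X^{-1}\in\mathfrak{g}$). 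For the unitary case your proof is complete and matches the paper's own technique; for the general statement as written, the missing ingredient is an a priori bound on $\|X_k^{-1}\|_{op}$ along the flow.
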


\vspace{0.5cm}

In \cite{D}, Deville further introduced the generalized model \eqref{GL} by introducing a polynomial coupling, namely ``{\it quantum Kuramoto model}"  on the Lohe group $G$ associated with Lie algebra $\mathfrak{g}$ introduced in Definition \ref{D2.1}.  Let $\Gamma$ be an undirected weighted graph with $N$ vertices with edge weights $\gamma_{ij} \geq 0$ with $\gamma_{ij} = \gamma_{ji}$. To be more specific, let $f$ be a real analytic function. Then, the quantum Kuramoto model reads as 
\begin{equation} \label{QK}
{\dot X}_j  X_j^{-1} = \Omega_j + \frac{1}{2} \sum_{k=1}^{N} \gamma_{jk} \Big( f(X_k X_j^{-1}) - f(X_j X_k^{-1})  \Big),
\end{equation}
where $\Omega_j \in {\mathfrak g}$. \newline

Note that for $f(x) = x$  and $\gamma_{ij} = \frac{2}{N}$, the quantum Kuramoto model \eqref{QK} becomes the generalized Lohe matrix model \eqref{GL}. In the aforementioned work, Deville studied sync and near sync solutions and investigated the stability of these solutions and twist solutions. Other than a linear function $f$, system \eqref{QK} cannot be rewritten as a mean-field form. For example, $f(x) = x^2$ and $\gamma_{jk} = \frac{2}{N}$, system \eqref{QK} becomes 
\begin{equation*} \label{QK-1}
{\dot X}_j  X_j^{-1} = \Omega_j + \frac{1}{N} \sum_{k=1}^{N} \Big( (X_k X_j^{-1})^2 - (X_j X_k^{-1})^2  \Big).
\end{equation*}
This is clearly different from our proposed model \eqref{C-1}: 
\[  \dot{U}_j  U_j^{\dagger} = -{\mathrm i} H_j +  \frac{\kappa}{N^{3}} \sum_{k_1, k_2, k_3 = 1}^{N} \Big(  U_{k_1} U^\dagger_{k_2} U_{k_3} U_j^{\dagger}  - U_j U_{k_3}^{\dagger} U_{k_2} U_{k_1}^{\dagger}  \Big). \]

\subsection{Elementary estimates} \label{sec:2.4}
In this subsection, we present several elementary estimates to be used for later sections. \newline 

Let $U_i$ and $U_j$ be the unitary matrices. Then we have following identities:
\[
\|U_i-U_j\|_F^2=2d-\mathrm{tr}(U_iU_j^*-U_jU_i^*),\qquad \|U_c\|_F^2=d-\frac{1}{2N^2}\sum_{i, j}\|U_i-U_j\|_F^2.
\]

\begin{lemma}\label{L2.2}
Let $A$ and $B$ be the matrices with proper size. Then we have the following inequality:
\[
\|AB\|_F\leq\|A\|_{op}\cdot\|B\|_F,
\]
where $\|\cdot\|_{op}$ is an operator norm.
\end{lemma}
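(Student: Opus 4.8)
The plan is to prove the submultiplicativity inequality $\|AB\|_F \leq \|A\|_{op}\cdot\|B\|_F$ by exploiting the column-wise structure of the Frobenius norm and reducing the matrix inequality to the defining property of the operator norm applied to each column of $B$. First I would recall the two standard characterizations: the Frobenius norm satisfies $\|M\|_F^2 = \sum_{l} \|M e_l\|_2^2$, where $\{e_l\}$ is the standard basis and $M e_l$ is the $l$-th column of $M$, and the operator norm is defined by $\|A\|_{op} = \sup_{x \neq 0} \frac{\|Ax\|_2}{\|x\|_2}$, so that $\|A v\|_2 \leq \|A\|_{op}\,\|v\|_2$ for every vector $v$.

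The key computation then proceeds in three short steps. I would write the $l$-th column of the product $AB$ as $(AB)e_l = A(B e_l)$, so that it is $A$ applied to the $l$-th column of $B$. Applying the operator-norm bound with $v = B e_l$ gives $\|(AB)e_l\|_2 = \|A(Be_l)\|_2 \leq \|A\|_{op}\,\|B e_l\|_2$. Summing the squares over all columns $l$ yields
\[
\|AB\|_F^2 = \sum_l \|(AB)e_l\|_2^2 \leq \|A\|_{op}^2 \sum_l \|B e_l\|_2^2 = \|A\|_{op}^2\,\|B\|_F^2,
\]
and taking square roots delivers the claim. This requires only that the number of columns of $A$ matches the number of rows of $B$, which is exactly the "proper size" hypothesis.

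There is no genuine obstacle here; the statement is a classical mixed-norm inequality and the whole difficulty is merely organizational. The one point deserving a sentence of care is the passage from the column-wise scalar bound to the summed inequality: one must factor $\|A\|_{op}^2$ out of the sum correctly, which is legitimate because $\|A\|_{op}$ does not depend on the column index $l$. An alternative route would bound the rows of $AB$ instead and use $\|A\|_{op} = \|A^\dagger\|_{op}$, or one could invoke $\|AB\|_F = \sqrt{\mathrm{tr}((AB)^\dagger AB)}$ and estimate $\mathrm{tr}(B^\dagger A^\dagger A B) \leq \|A^\dagger A\|_{op}\,\mathrm{tr}(B^\dagger B)$ using the fact that $A^\dagger A \preceq \|A\|_{op}^2 I$ as positive semidefinite operators; both lead to the same bound, but the column-decomposition argument is the cleanest and most self-contained for the paper's purposes.
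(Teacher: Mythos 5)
Your proof is correct and follows essentially the same route as the paper: both decompose $B$ into columns, apply the operator-norm bound $\|A b\|_2 \leq \|A\|_{op}\|b\|_2$ column by column, and sum the squares to recover the Frobenius norm. No discrepancies to note.
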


\begin{proof}
Let 
\[
B=\left[b_1\ \vdots \ b_2\ \vdots\ \cdots\ \vdots \ b_n\ \right],
\]
where $b_\alpha$ is a vector. From the direct calculation, we have
\begin{align*}
\|AB\|_F^2&=\sum_{\alpha,\beta}|[AB]_{\alpha\beta}|^2=\sum_{\alpha}\|Ab_\alpha\|^2\leq \sum_{\alpha}\left(\|A\|_{op}\cdot\|b_\alpha\|\right)^2 \\
&=\|A\|_{op}^2\cdot\sum_{\alpha}\|b_\alpha\|^2 = \|A\|_{op}^2\cdot\|B\|_{F}^2.
\end{align*}
\end{proof}

\begin{remark}
If $U$ is a unitary matrix, then 
\[
\|U\|_{op}=1.
\]
\end{remark}

\begin{lemma}\label{L2.3}
Let $\{ U_i \}_{i=1}^{N}$ be an ensemble of unitary matrices in $\bbu(d)$. Then we have
\[
\|U_c\|_{op}\leq 1.
\]
Equality holds if and only if there exists $v\in\mathbb{C}^d$ with $\|v\|\neq0$ such that
\[
U_1v=U_2v=\cdots =U_Nv.
\]
\end{lemma}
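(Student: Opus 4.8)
The plan is to work directly from the variational definition of the operator norm, $\|U_c\|_{op} = \sup_{\|v\| = 1} \|U_c v\|$, and to reduce the whole statement to the triangle inequality in the Hilbert space $\mathbb{C}^d$ together with its equality case. First I would prove the bound. Fix an arbitrary unit vector $v \in \mathbb{C}^d$. Since each $U_j$ is unitary, it preserves the Euclidean norm, so
\[
\|U_c v\| = \Big\| \frac{1}{N} \sum_{j=1}^N U_j v \Big\| \leq \frac{1}{N} \sum_{j=1}^N \|U_j v\| = \frac{1}{N} \sum_{j=1}^N \|v\| = 1 .
\]
Taking the supremum over all unit vectors $v$ yields $\|U_c\|_{op} \leq 1$.

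For the equality characterization, I would first observe that $v \mapsto \|U_c v\|$ is continuous on the unit sphere of $\mathbb{C}^d$, which is compact; hence the supremum is attained, and $\|U_c\|_{op} = 1$ holds if and only if $\|U_c v\| = 1$ for some unit vector $v$. Given such a $v$, write $w_j := U_j v$, so that $\|w_j\| = 1$ for every $j$ and the inequality displayed above is saturated, i.e. $\big\| \sum_{j=1}^N w_j \big\| = \sum_{j=1}^N \|w_j\|$. The key step is then the equality case of the triangle inequality, which forces $w_1 = \cdots = w_N$, that is $U_1 v = \cdots = U_N v$. Conversely, if $U_1 v = \cdots = U_N v =: w$ for some $v$ with $\|v\| \neq 0$, then $U_c v = w$ and $\|U_c v\| = \|w\| = \|U_1 v\| = \|v\|$, so $\|U_c\|_{op} \geq \|U_c v\| / \|v\| = 1$, which combined with the bound gives $\|U_c\|_{op} = 1$.

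The one point that deserves genuine care is the equality case of the triangle inequality for $N$ complex vectors, and this is where I would spend the effort. I would argue by squaring: expanding the inner products gives
\[
\Big( \sum_{j=1}^N \|w_j\| \Big)^2 - \Big\| \sum_{j=1}^N w_j \Big\|^2 = \sum_{i,j=1}^N \big( \|w_i\| \, \|w_j\| - \mathrm{Re}\langle w_i, w_j \rangle \big),
\]
and each summand is nonnegative by Cauchy--Schwarz. Saturation forces every summand to vanish, so $\mathrm{Re}\langle w_i, w_j \rangle = \|w_i\| \, \|w_j\| = 1$ for all $i, j$; then $\|w_i - w_j\|^2 = \|w_i\|^2 + \|w_j\|^2 - 2\,\mathrm{Re}\langle w_i, w_j \rangle = 0$, giving $w_i = w_j$. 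This pairwise rigidity yields $U_1 v = \cdots = U_N v$ and completes the characterization.
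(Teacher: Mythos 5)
Your proof is correct and follows essentially the same route as the paper: the bound comes from the triangle inequality combined with the fact that each unitary $U_j$ preserves the norm of $v$. The paper dismisses the equality characterization with ``we can also easily show the equality condition,'' whereas you supply the missing details (attainment of the supremum by compactness and the equality case of the triangle inequality via Cauchy--Schwarz), which is a worthwhile completion of the argument.
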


\begin{proof}
By direct calculation, we have
\[
\|U_cv\|_F\leq \frac{1}{N}\sum_{k=1}^N\|U_kv\|_F\leq\frac{1}{N}\sum_{k=1}^N\|U_k\|_{op}\|v\|=\|v\|.
\]
So we have
\[
\|U_c\|_{op}\leq 1.
\]
We can also easily show the equality condition.
\end{proof}

\begin{lemma}\label{L2.4}
Let $A$ and $B$ be $d \times d$ matrices. Then, one has 
\[
|\mathrm{tr}(A)|\leq\sqrt{d}\|A\|_F \quad \mbox{and} \quad \|AB\|_F\leq\|A\|_F\cdot\|B\|_F. 
\]
\end{lemma}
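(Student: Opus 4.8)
The final statement to prove is Lemma~\ref{L2.4}, which asserts two inequalities for $d\times d$ matrices $A$ and $B$: the scalar bound $|\mathrm{tr}(A)|\leq\sqrt{d}\,\|A\|_F$ and the submultiplicativity $\|AB\|_F\leq\|A\|_F\cdot\|B\|_F$.

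The plan is to treat the two inequalities separately, as each reduces to a standard application of the Cauchy--Schwarz inequality. For the first inequality, I would view the trace as an inner product: writing $\mathrm{tr}(A)=\sum_{k=1}^d a^{kk}=\langle I_d, A\rangle_F$ where $I_d$ is the identity, the Cauchy--Schwarz inequality for the Frobenius inner product gives $|\mathrm{tr}(A)|=|\langle I_d,A\rangle_F|\leq\|I_d\|_F\cdot\|A\|_F=\sqrt{d}\,\|A\|_F$, since $\|I_d\|_F=\sqrt{\mathrm{tr}(I_d)}=\sqrt{d}$. Alternatively, one can argue directly: $|\mathrm{tr}(A)|=|\sum_k a^{kk}|\leq\sum_k|a^{kk}|\leq\sqrt{d}\,\big(\sum_k|a^{kk}|^2\big)^{1/2}\leq\sqrt{d}\,\|A\|_F$, where the middle step is Cauchy--Schwarz on $d$ terms and the last step retains only diagonal entries.

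For the second inequality, the cleanest route is to reuse Lemma~\ref{L2.2}, which already gives $\|AB\|_F\leq\|A\|_{op}\cdot\|B\|_F$, combined with the elementary comparison $\|A\|_{op}\leq\|A\|_F$. The latter holds because, writing $A$ in terms of its columns or invoking that the largest singular value is dominated by the root-sum-of-squares of all singular values, one has $\|A\|_{op}=\sigma_{\max}(A)\leq\big(\sum_i\sigma_i^2\big)^{1/2}=\|A\|_F$. If one prefers a self-contained argument not relying on Lemma~\ref{L2.2}, I would instead bound entrywise: $\|AB\|_F^2=\sum_{\alpha,\beta}\big|\sum_\gamma a_{\alpha\gamma}b_{\gamma\beta}\big|^2$ and apply Cauchy--Schwarz to the inner sum to obtain $\big|\sum_\gamma a_{\alpha\gamma}b_{\gamma\beta}\big|^2\leq\big(\sum_\gamma|a_{\alpha\gamma}|^2\big)\big(\sum_\gamma|b_{\gamma\beta}|^2\big)$; summing over $\alpha,\beta$ then factors as $\big(\sum_{\alpha,\gamma}|a_{\alpha\gamma}|^2\big)\big(\sum_{\gamma,\beta}|b_{\gamma\beta}|^2\big)=\|A\|_F^2\,\|B\|_F^2$.

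Neither inequality presents a genuine obstacle, since both are classical and follow from a single application of Cauchy--Schwarz. The only point requiring mild care is the comparison $\|A\|_{op}\leq\|A\|_F$ in the first route, which must be justified (either via singular values or by noting $\|Av\|\leq\|A\|_F\|v\|$ from the entrywise Cauchy--Schwarz estimate); to keep the proof fully elementary and independent of singular-value theory, I would favor the direct entrywise computation for the submultiplicativity and the inner-product form of Cauchy--Schwarz for the trace bound.
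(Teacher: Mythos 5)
Your proposal is correct and follows essentially the same route as the paper: the trace bound is a single Cauchy--Schwarz application (your inner-product form $\langle I_d,A\rangle_F$ is just a repackaging of the paper's discrete Cauchy--Schwarz on the diagonal entries), and your preferred entrywise Cauchy--Schwarz argument for $\|AB\|_F\leq\|A\|_F\|B\|_F$ is exactly the computation the paper performs. No gaps.
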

\begin{proof}
\noindent (i)~By definition of a trace, one has 
\begin{align*}
|\mathrm{tr}(A)|^2=\left|\sum_{\alpha}[A]_{\alpha\alpha}\right|^2\leq\sum_{\alpha}|[A]_{\alpha\alpha}|^2\cdot\sum_\alpha 1^2\leq d\|A\|_F^2.
\end{align*}
\noindent (ii)~By direct calculation, one has
\begin{align*}
\|AB\|_F^2&=\sum_{\alpha,\beta}|[AB]_{\alpha\beta}|^2=\sum_{\alpha,\beta}|\sum_\gamma[A]_{\alpha\gamma}[B]_{\gamma\beta}|^2\\
&\leq \sum_{\alpha,\beta}\left(\sum_\gamma|[A]_{\alpha\gamma}|^2\right)\cdot\left(\sum_\gamma|[B]_{\gamma\beta}|^2\right)=\|A\|_F^2\cdot\|B\|_F^2.
\end{align*}
Thus, we have
\[
\|AB\|_F\leq\|A\|_F\cdot\|B\|_F.
\]
\end{proof}

\section{The Lohe matrix model with a monomial interaction} \label{sec:3}
\setcounter{equation}{0} 
In this section, we present a generalized Lohe matrix model with a monomial higher-order coupling via a gradient flow approach. \newline

More precisely, we will derive a generalized Lohe matrix model with higher-order couplings: for $t > 0$, 
\begin{equation}
\begin{cases} \label{C-0}
\displaystyle {\mathrm i} \dot{U}_j  U_j^{\dagger} = H +  \frac{{\mathrm i} \kappa}{2N^{2m-1}} \\
\displaystyle \hspace{0.5cm} \times \sum_{k_1, \cdots, k_{2m-1} = 1}^{N}  \Big(  U_{k_1} U_{k_2}^{\dagger} \cdots U_{k_{2m-2}}^{\dagger} U_{k_{2m-1}} U_j^{\dagger}  - U_j  U_{k_{2m-1}}^{\dagger} U_{k_{2m-2}} \cdots U_{k_2} U_{k_1}^{\dagger}  \Big),\\
\displaystyle U_j \Big|_{t = 0+} = U_j^{in},
 \end{cases}
 \end{equation} 
 where $H$ is a Hermitian matrix with $H^\dagger = H$. \newline
 
Note that system \eqref{C-0} can be rewritten as a mean-field form using a mean-field quantity $U_c$:
 \begin{equation} \label{N-0}
  {\mathrm i} \dot{U}_j  U_j^{\dagger} = H + \frac{ {\mathrm i} \kappa }{2} 
  \Big(\underbrace{U_cU_c^\dagger U_c\cdots U_c^\dagger U_c}_{2m-1} U_j^{\dagger} -U_j\underbrace{U_c^\dagger U_cU_c^\dagger \cdots U_cU_c^\dagger}_{2m-1} \Big).
 \end{equation}
 Since the R.H.S. of \eqref{C-0} is self-adjoint, and this yields the conservation of quadratic quantities $U_j U_j^{\dagger}$.
 \begin{lemma} \label{L3.1}
Let $\{U_i\}$ be a global smooth solution of  system \eqref{C-0}. Then  one has 
\[ \frac{d}{dt} (U_j U^{\dagger}_j) = 0, \quad t > 0,~~ j = 1, \cdots, N. \] 
\end{lemma}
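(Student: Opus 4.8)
The plan is to show that the entire right-hand side of \eqref{C-0}, regarded as a single matrix $M_j$, is self-adjoint, and then to combine this with the defining relation ${\mathrm i}\dot U_j U_j^\dagger = M_j$ and its Hermitian conjugate. Accordingly, write
\[
M_j := H + \frac{{\mathrm i}\kappa}{2N^{2m-1}}\sum_{k_1,\dots,k_{2m-1}=1}^N (P_j - Q_j),
\]
where $P_j := U_{k_1}U_{k_2}^\dagger\cdots U_{k_{2m-2}}^\dagger U_{k_{2m-1}}U_j^\dagger$ and $Q_j := U_j U_{k_{2m-1}}^\dagger U_{k_{2m-2}}\cdots U_{k_2}U_{k_1}^\dagger$ are the two products inside the bracket.

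First I would verify that $M_j$ is Hermitian. Since $H^\dagger = H$ by hypothesis, only the coupling term needs attention. Using $(XY)^\dagger = Y^\dagger X^\dagger$ repeatedly and reversing the order of the factors, one checks term by term that $P_j^\dagger = Q_j$ (and hence $Q_j^\dagger = P_j$); the key observation is simply that $Q_j$ is the factor-reversed, daggered version of $P_j$. Consequently $(P_j - Q_j)^\dagger = Q_j - P_j = -(P_j - Q_j)$, so each summand is skew-Hermitian, and multiplication by the purely imaginary scalar ${\mathrm i}\kappa/(2N^{2m-1})$ converts it into a Hermitian matrix. Summation over $k_1,\dots,k_{2m-1}$ preserves this property, so the coupling term, and therefore $M_j$, is self-adjoint: $M_j^\dagger = M_j$.

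Next I would differentiate the candidate conserved quantity. From \eqref{C-0} we have $\dot U_j U_j^\dagger = -{\mathrm i} M_j$, and taking the Hermitian conjugate of this identity yields $U_j \dot U_j^\dagger = (\dot U_j U_j^\dagger)^\dagger = {\mathrm i} M_j^\dagger$. Adding the two relations gives
\[
\frac{d}{dt}(U_j U_j^\dagger) = \dot U_j U_j^\dagger + U_j \dot U_j^\dagger = -{\mathrm i} M_j + {\mathrm i} M_j^\dagger = {\mathrm i}(M_j^\dagger - M_j) = 0,
\]
where the last equality uses the self-adjointness established in the previous step. This proves the claim for every $j = 1,\dots,N$ and all $t > 0$.

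The argument is elementary, so there is no genuine obstacle; the only point requiring care is the bookkeeping of the factor ${\mathrm i}$ together with the reversal of the ordering of products under the dagger operation, i.e.\ the identity $P_j^\dagger = Q_j$. Since this holds for each fixed multi-index prior to summation, no relabeling of the summation indices is needed, and the same reasoning applies verbatim to the mean-field form \eqref{N-0}.
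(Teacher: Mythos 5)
Your proposal is correct and follows essentially the same route as the paper: both verify that the full right-hand side of \eqref{C-0} (Hamiltonian plus coupling term) is Hermitian and then add the evolution equation to its Hermitian conjugate to conclude $\frac{d}{dt}(U_jU_j^\dagger)=0$. The only difference is that you spell out the verification $P_j^\dagger=Q_j$ termwise, whereas the paper asserts the Hermiticity of the coupling term ${\mathcal C}_j$ without detail.
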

\begin{proof} We set 
\begin{align*}
\begin{aligned}
&{\mathcal C}_j(U_1, \cdots, U_N) \\
& \hspace{0.5cm} := \frac{{\mathrm i} \kappa}{2N^{2m-1}} \sum_{k_1, \cdots, k_{2m-1} = 1}^{N}  \Big(  U_{k_1} U_{k_2}^{\dagger} \cdots U_{k_{2m-2}}^{\dagger} U_{k_{2m-1}} U_j^{\dagger}  - U_j  U_{k_{2m-1}}^{\dagger} U_{k_{2m-2}} \cdots U_{k_2} U_{k_1}^{\dagger}  \Big).
\end{aligned}
\end{align*}
Then, it is easy to see that 
\begin{equation} \label{N-1}
{\mathcal C}_j(U_1, \cdots, U_N)^{\dagger} = {\mathcal C}_j(U_1, \cdots, U_N). 
\end{equation} 
Now, we return to system \eqref{C-0}:
\begin{equation}\label{N-2}
\dot{U}_j  U_j^{\dagger}  = -{\mathrm i} \Big( H + {\mathcal C}_j(U_1, \cdots, U_N) \Big). 
\end{equation}
We take a hermitian conjugate of \eqref{N-2} and use \eqref{N-1} to get 
\begin{equation}\label{N-3}
U_j  {\dot U}^{\dagger}_j  ={\mathrm i}  \Big( H + {\mathcal C}_j(U_1, \cdots, U_N) \Big). 
\end{equation}
Finally, we add \eqref{N-2} and \eqref{N-3} to get 
\[ \frac{d}{dt} (U_j U_j^{\dagger}) = 0. \]
\end{proof}
From now on, throughout the paper, we assume 
\[ U_j U_j^{\dagger} = U_j^{\dagger} U_j = I_d, \quad j = 1, \cdots, N. \]
and consider emergent dynamics of the following Cauchy problem:
\begin{equation}
\begin{cases} \label{N-4}
\displaystyle \dot{U}_j = - {\mathrm i} H U_j  +  \frac{\kappa}{2} 
  \Big(\underbrace{U_cU_c^\dagger U_c\cdots U_c^\dagger U_c}_{2m-1} -U_j\underbrace{U_c^\dagger U_cU_c^\dagger \cdots U_cU_c^\dagger}_{2m-1} U_j \Big),~~t > 0, \\
\displaystyle U_j \Big|_{t = 0+} = U_j^{in},\quad j = 1, \cdots, N.
 \end{cases}
 \end{equation} 
 Now, we consider the corresponding nonlinear subsystem:
\begin{equation}
\begin{cases} \label{N-5}
\displaystyle  \dot{L}_j  =  \frac{\kappa}{2} 
  \Big(\underbrace{L_c L_c^\dagger L_c\cdots L_c^\dagger L_c}_{2m-1} - L_j\underbrace{L_c^\dagger L_c L_c^\dagger \cdots L_c L_c^\dagger}_{2m-1} L_j \Big), ~~t > 0,\\
\displaystyle L_j \Big|_{t = 0+} = U_j^{in}, \quad j  = 1, \cdots, N.
 \end{cases}
 \end{equation} 
 \begin{proposition}
 \emph{(Solution splitting property)}
 Let $\{U_j \}$ and $\{L_j \}$ be two solutions to systems \eqref{N-4} and \eqref{N-5}, respectively. Then one has 
 \begin{equation} \label{N-6}
  U_j(t) = e^{-{\mathrm i} H t } \circ L_j(t), \quad j = 1, \cdots, N. 
  \end{equation}
 \end{proposition}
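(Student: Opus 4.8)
The plan is to verify the ansatz directly: define $V_j(t) := e^{-{\mathrm i} H t} L_j(t)$, show that $\{V_j\}$ solves the Cauchy problem \eqref{N-4} with the same initial data $\{U_j^{in}\}$, and then invoke uniqueness of solutions for \eqref{N-4} to conclude $U_j \equiv V_j$. Since the right-hand side of \eqref{N-4} is a polynomial (hence locally Lipschitz) vector field in the matrix entries, the initial value problem has a unique solution, so this strategy is legitimate. First observe that at $t = 0$ we have $V_j(0) = L_j(0) = U_j^{in}$, matching the data of \eqref{N-4}.

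The key structural input is an \emph{equivariance} of the nonlinear coupling under left multiplication by a unitary matrix. For a collection $W = \{W_j\}$ with mean $W_c := \frac1N \sum_{j} W_j$, write the nonlinear operator on the right of \eqref{N-5} as
\[
\Phi_j(W) := \frac{\kappa}{2}\Big( \underbrace{W_c W_c^\dagger \cdots W_c}_{2m-1} - W_j \underbrace{W_c^\dagger W_c \cdots W_c^\dagger}_{2m-1} W_j \Big).
\]
I claim that for any unitary matrix $P$ one has $\Phi_j(PW) = P\,\Phi_j(W)$, where $PW := \{P W_j\}$ has mean $P W_c$. Indeed, the first summand $W_c (W_c^\dagger W_c)^{m-1}$ becomes $P W_c \big(W_c^\dagger P^\dagger P W_c\big)^{m-1} = P\, W_c (W_c^\dagger W_c)^{m-1}$ once all interior factors $P^\dagger P = I_d$ cancel; for the second summand, the leftmost factor $P W_j$ supplies a single $P$ in front, while the trailing $P^\dagger$ left over from the conjugated middle block cancels against the $P$ of the rightmost factor $P W_j$, again leaving exactly one $P$ on the left. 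This covariance holds pointwise in $t$ with $P = P(t) = e^{-{\mathrm i} H t}$, which is unitary precisely because $H^\dagger = H$.

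Granting the equivariance, the verification is immediate. Since $V_c = e^{-{\mathrm i} H t} L_c$ we have $V = e^{-{\mathrm i} H t} L$ as collections, and differentiating $V_j = e^{-{\mathrm i} H t} L_j$ gives
\[
\dot V_j = -{\mathrm i} H e^{-{\mathrm i} H t} L_j + e^{-{\mathrm i} H t} \dot L_j = -{\mathrm i} H V_j + e^{-{\mathrm i} H t}\Phi_j(L) = -{\mathrm i} H V_j + \Phi_j(V),
\]
where the last equality uses $\Phi_j(e^{-{\mathrm i} H t} L) = e^{-{\mathrm i} H t}\Phi_j(L)$. This is exactly system \eqref{N-4}, so by uniqueness $U_j(t) = V_j(t) = e^{-{\mathrm i} H t}\circ L_j(t)$, which is \eqref{N-6}. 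The only genuinely delicate point is the bookkeeping in the equivariance claim: one must track how the $P$ and $P^\dagger$ factors telescope through the alternating product of length $2m-1$, and the reason exactly one $P$ survives is that each term is ``balanced'' so that all but one of the unitary factors pair off as $P^\dagger P = I_d$. Everything else is routine.
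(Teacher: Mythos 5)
Your proof is correct and follows essentially the same route as the paper: substitute the ansatz $U_j = e^{-{\mathrm i}Ht}L_j$ and check that the coupling term picks up exactly one factor of $e^{-{\mathrm i}Ht}$ on the left because the interior unitary factors cancel in pairs. If anything your version is the more careful one, since you explicitly invoke uniqueness for the polynomial ODE and your telescoping identifies the correct invariant combinations $U_c^\dagger U_c = L_c^\dagger L_c$ and $U_c^\dagger U_j = L_c^\dagger L_j$, whereas the paper's sketch cites the relations $U_cU_c^\dagger = L_cL_c^\dagger$ and $U_jU_c^\dagger = L_jL_c^\dagger$, which as written hold only up to conjugation by $e^{-{\mathrm i}Ht}$.
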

 \begin{proof}
 We substitute \eqref{N-6} into \eqref{N-5} and use the relations
 \[  {\dot U}_j =  -{\mathrm i} H e^{-{\mathrm i} H t } L_j + e^{-{\mathrm i} H t }  {\dot L}_j, \quad U_c  U_c^{\dagger} = L_c L_c^{\dagger}, \quad  U_j U_c^{\dagger} = L_j L_c^{\dagger}   \]
 to see that $L_j$ satisfies system \eqref{N-5}.
  \end{proof}
Fom now on, we assume $H \equiv 0$. In what follows, we will derive system \eqref{C-0} using a gradient flow formulation with a monomial potential $ {\mathcal V}_m(U)$: for $m \geq 1$, we set 
\begin{equation*} \label{C-0-1}
 {\mathcal V}_m(U) :=-\frac{\kappa N}{2m}\mathrm{tr}((U_cU_c^\dagger)^m)=-\frac{\kappa N}{2m}\mathrm{tr}(\underbrace{U_cU_c^\dagger \cdots U_cU_c^\dagger}_{2m}).
\end{equation*}
Note that ${\mathcal V}_m$ is analytic and bounded:
\begin{equation} \label{C-4-2-2}
 | {\mathcal V}_m(U)| \leq \frac{\kappa}{2m}|\mathrm{tr}(\underbrace{U_cU_c^\dagger \cdots U_cU_c^\dagger}_{2m}) | \leq  \frac{\kappa}{2m} \|U_c \|_F^{2m} \leq  \frac{\kappa}{2m} d^{m}. 
 \end{equation}
 
\vspace{0.5cm}

In next two subsections, we consider the cases:
\[ \mbox{Either}~~m = 2 \quad \mbox{or} \quad m \geq 3. \]

\subsection{Case with $m = 2$} \label{sec:3.1} Consider the Cauchy problem to the following system:
\begin{align}
\begin{aligned} \label{C-1}
& \dot{U}_j = \frac{\kappa}{2}(U_cU_c^\dagger U_c-U_j U_c^\dagger U_cU_c^\dagger U_j), \quad t > 0, \\
& U_j \Big|_{t = 0} =U_j^{in} \in \mathbb{U}(d), \quad j = 1, \cdots, N.
\end{aligned}
\end{align}
\subsubsection{A gradient flow formulation} For an ensemble $\{ U_j \}_{j=1}^{N}$, consider the potential 
\begin{equation} \label{C-2}
\mathcal{V}_2(U) := -\frac{N\kappa}{4}  \mathrm{tr}((U_c U_c^\dagger)^2).
\end{equation}
\begin{lemma} \label{L3.2}
Let $\{U_j \}$ be a solution of \eqref{C-1} and ${\mathcal V}_2 = {\mathcal V}_2(U)$ be a potential defined by \eqref{C-2}. Then, one has
\[ \left.\frac{\partial\mathcal{V}_2}{\partial U_i}\right|_{T_{U_i}M_{d, d}(\mathbb{C})}=-\kappa U_cU_c^\dagger U_c. \]
\end{lemma}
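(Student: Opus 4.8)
The plan is to follow the same route as Step B in the proof of Proposition~\ref{P2.3}: compute the Euclidean gradient of $\mathcal{V}_2$ on $M_{d,d}(\mathbb{C})^N \cong \mathbb{R}^{2d^2N}$ coordinate by coordinate, and then reassemble it into a matrix through
\[
\left.\frac{\partial\mathcal{V}_2}{\partial U_i}\right|_{T_{U_i}M_{d,d}(\mathbb{C})}
=\sum_{k,l=1}^d\left(\frac{\partial\mathcal{V}_2}{\partial a_i^{kl}}+{\mathrm i}\frac{\partial\mathcal{V}_2}{\partial b_i^{kl}}\right)E^{kl},
\]
where again $u_i^{kl}=a_i^{kl}+{\mathrm i}b_i^{kl}$ and $E^{kl}$ is the elementary matrix. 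The first step is to record that $U_c=\frac1N\sum_j U_j$ depends on $U_i$ only through the single summand $\frac1N U_i$, so that $\partial_{a_i^{kl}}[U_c]_{pq}=\frac1N\delta_{pk}\delta_{ql}$, while the conjugate entries $[U_c^\dagger]_{pq}=\overline{[U_c]_{qp}}$ carry the dependence on $b_i^{kl}$. It is cleanest to observe that the combination $\partial_{a_i^{kl}}+{\mathrm i}\partial_{b_i^{kl}}$ equals twice the Wirtinger derivative $\partial_{\bar u_i^{kl}}$, so that only the two $U_c^\dagger$ factors of $\mathrm{tr}((U_cU_c^\dagger)^2)$ actually contribute.

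The second step is the differentiation itself. Writing $\mathcal{V}_2=-\frac{N\kappa}{4}\mathrm{tr}(U_cU_c^\dagger U_cU_c^\dagger)$ and applying the product rule to the two $U_c^\dagger$ factors, each resulting term can be brought to the form $\frac1N\mathrm{tr}(E^{lk}\,U_cU_c^\dagger U_c)$ by the cyclic property $\mathrm{tr}[AB]=\mathrm{tr}[BA]$, using $\partial_{\bar u_i^{kl}}U_c^\dagger=\frac1N E^{lk}$. Since $\mathrm{tr}(E^{lk}A)=A_{kl}$, the two factors contribute identically and yield
\[
\frac{\partial}{\partial\bar u_i^{kl}}\mathrm{tr}\big((U_cU_c^\dagger)^2\big)=\frac{2}{N}\,[U_cU_c^\dagger U_c]_{kl}.
\]
Feeding this back through the reassembly formula, the factor $2$ from the Wirtinger identity, the $2/N$ above, and the prefactor $-\frac{N\kappa}{4}$ collapse to $-\kappa$, and $\sum_{k,l}[U_cU_c^\dagger U_c]_{kl}E^{kl}=U_cU_c^\dagger U_c$ gives the claimed identity.

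I expect the only genuine obstacle to be careful bookkeeping of indices and constants: one must keep track of which factors depend holomorphically versus antiholomorphically on $U_i$, verify that the two $U_c^\dagger$ factors really produce identical contributions after cyclic rotation (rather than a term and its adjoint), and confirm that the numerical factors combine to exactly $-\kappa$. Should one prefer to avoid Wirtinger calculus altogether, the same result follows by differentiating the fully expanded quartic $\sum_{p,q,r,s}[U_c]_{pq}\overline{[U_c]_{rq}}[U_c]_{rs}\overline{[U_c]_{ps}}$ with respect to $a_i^{kl}$ and $b_i^{kl}$ separately, exactly as the imaginary cross terms were cancelled by index symmetry in Step A of Proposition~\ref{P2.3}; this route is more tedious but entirely routine.
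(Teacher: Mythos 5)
Your argument is correct and arrives at the paper's formula through the same overall framework (identify the Euclidean gradient with $\sum_{k,l}(\partial_{a_i^{kl}}+{\mathrm i}\,\partial_{b_i^{kl}})\mathcal{V}_2\,E^{kl}$ and compute), but the differentiation itself is organized quite differently. The paper expands $\mathrm{tr}((U_cU_c^\dagger)^2)$ into an explicit quartic polynomial in the real coordinates $a_i^{\alpha\beta},b_i^{\alpha\beta}$ as in \eqref{C-3}--\eqref{C-4}, discards the odd terms by index symmetry, differentiates the surviving monomials one by one, and only at the end recombines everything into $-\frac{\kappa}{N^3}\sum_{j,k,l}U_jU_k^\dagger U_l$. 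You instead use $\partial_{a}+{\mathrm i}\partial_{b}=2\partial_{\bar u}$ together with the observation that $U_c$ is holomorphic (linear) in the entries of $U_i$ while $U_c^\dagger$ is antiholomorphic, so only the two $U_c^\dagger$ factors are differentiated; via $\partial_{\bar u_i^{kl}}U_c^\dagger=\frac1N E^{lk}$ and cyclicity of the trace each contributes $\frac1N[U_cU_c^\dagger U_c]_{kl}$, and the constants $2\cdot\frac{2}{N}\cdot\bigl(-\frac{N\kappa}{4}\bigr)=-\kappa$ close the argument. This is a genuine streamlining rather than a different idea: it avoids the error-prone sixteen-term real expansion, makes it transparent that the two contributions are literally equal after cycling (no adjoint term appears), and, unlike the paper's computation, extends verbatim to $\mathrm{tr}((U_cU_c^\dagger)^m)$, where the $m$ antiholomorphic factors give $\frac{m}{N}[(U_cU_c^\dagger)^{m-1}U_c]_{kl}$ and hence the gradient $-\kappa(U_cU_c^\dagger)^{m-1}U_c$ needed for Proposition \ref{P3.3} and Proposition \ref{P4.1}, whose proofs the paper omits. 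The only step worth writing out explicitly in a final version is the identity $\partial_{\bar u_i^{kl}}[U_c^\dagger]_{pq}=\frac1N\delta_{pl}\delta_{qk}$, which you already invoke.
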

\begin{proof} 
(i)~We set 
\[ [U_i]_{\alpha\beta} : =a_i^{\alpha\beta}+\mathrm{i}b_i^{\alpha\beta}, \quad a_i^{\alpha\beta},~b_i^{\alpha\beta} \in \bbr.   \]
Then we have
\begin{align}
\begin{aligned} \label{C-3}
\mathcal{V}_2&=-\frac{\kappa}{4N^3}\sum_{i, j, k, l}[U_i]_{\alpha\beta}[U_j^\dagger]_{\beta\gamma}[U_k]_{\gamma\delta}[U_l^\dagger]_{\delta\alpha}\\
&=-\frac{\kappa}{4N^3}\sum_{i, j, k, l}(a_i^{\alpha\beta}+\mathrm{i}b_i^{\alpha\beta})(a_j^{\gamma\beta}-\mathrm{i}b_j^{\gamma\beta})(a_k^{\gamma\delta}+\mathrm{i}b_k^{\gamma\delta})(a_l^{\alpha\delta}-\mathrm{i}b_l^{\alpha\delta}),
\end{aligned}
\end{align}
where we used Einstein summation rule. Now, we use the symmetry and \eqref{C-3} to get 
\begin{align}
\begin{aligned} \label{C-4}
\mathcal{V}_2&=-\frac{\kappa}{4N^3}\sum_{i, j, k, l}\left(a_i^{\alpha\beta}a_j^{\gamma\beta}a_k^{\gamma\delta}a_l^{\alpha\delta}+b_i^{\alpha\beta}b_j^{\gamma\beta}b_k^{\gamma\delta}b_l^{\alpha\delta}-2a_i^{\alpha\beta}b_j^{\gamma\beta}a_k^{\gamma\delta}b_l^{\alpha\delta} \right. \\
& \hspace{3cm}  \left. +~2a_i^{\alpha\beta}a_j^{\gamma\beta}b_k^{\gamma\delta}b_l^{\alpha\delta}+2b_i^{\alpha\beta}a_j^{\gamma\beta}a_k^{\gamma\delta}b_l^{\alpha\delta} \right ).
\end{aligned}
\end{align}
This yields
\begin{align*}
\frac{\partial {\mathcal V}_2}{\partial a_i^{\mu\nu}}&=-\frac{\kappa}{4N^3}\sum_{j, k, l}\left(4a_j^{\gamma\nu}a_k^{\gamma\delta}a_l^{\mu\delta}-4b_j^{\gamma\nu}a_k^{\gamma\delta}b_l^{\mu\delta}+4a_j^{\gamma\nu}b_k^{\gamma\delta}b_l^{\mu\delta}+4b_j^{\alpha\nu}a_k^{\mu\delta}b_l^{\alpha\delta}\right)\\
&=-\frac{\kappa}{N^3}\sum_{j, k, l}\left(a_j^{\mu\alpha}a_k^{\beta\alpha}a_l^{\beta\nu}-b_j^{\mu\alpha}a_k^{\beta\alpha}b_l^{\beta\nu}+b_j^{\mu\alpha}b_k^{\beta\alpha}a_l^{\beta\nu}+a_j^{\mu\alpha}b_k^{\beta\alpha}b_l^{\beta\nu}\right),\\
\frac{\partial {\mathcal V}_2}{\partial b_i^{\mu\nu}}&=-\frac{\kappa}{4N^3}\sum_{j, k, l}\left(4b_j^{\gamma\nu}b_k^{\gamma\delta}b_l^{\mu\delta}-4a_j^{\alpha\nu}a_k^{\mu\delta}b_l^{\alpha\delta}+4a_j^{\alpha\beta}a_k^{\mu\beta}b_l^{\alpha\nu}+4a_j^{\gamma\nu}a_k^{\gamma\delta}b_l^{\mu\delta}\right)\\
&=-\frac{\kappa}{N^3}\sum_{j, k, l}\left(b_j^{\mu\alpha}b_k^{\beta\alpha}b_l^{\beta\nu}-a_j^{\mu\alpha}b_k^{\beta\alpha}a_l^{\beta\nu}+a_j^{\mu\alpha}a_k^{\beta\alpha}b_l^{\beta\nu}+b_j^{\mu\alpha}a_k^{\beta\alpha}a_l^{\beta\nu}\right).
\end{align*}
Finally we can calculate
\begin{align*}
\left.\frac{\partial\mathcal{V}_2}{\partial U_i}\right|_{T_{U_i}M_{d, d}(\mathbb{C})}=\left(\frac{\partial {\mathcal V}_2}{\partial a_i^{\mu\nu}}+\mathrm{i}\frac{\partial {\mathcal V}_2}{\partial b_i^{\mu\nu}}\right)E^{\mu\nu},
\end{align*}
where $E^{\mu\nu}$ denotes the $d\times d$ matrix whose $(\mu,\nu)$-coordinate is 1 and the other coordinates are 0. By direct calculation, one has 
\begin{align*}
\begin{aligned}
\frac{\partial {\mathcal V}_2}{\partial a_i^{\mu\nu}}+\mathrm{i}\frac{\partial {\mathcal V}_2}{\partial b_i^{\mu\nu}} &=-\frac{\kappa}{N^3}\sum_{j, k, l}\big((a_j^{\mu\alpha}a_k^{\beta\alpha}a_l^{\beta\nu}-b_j^{\mu\alpha}a_k^{\beta\alpha}b_l^{\beta\nu}+b_j^{\mu\alpha}b_k^{\beta\alpha}a_l^{\beta\nu}+a_j^{\mu\alpha}b_k^{\beta\alpha}b_l^{\beta\nu})\\
& \hspace{1.5cm} +\mathrm{i}(b_j^{\mu\alpha}b_k^{\beta\alpha}b_l^{\beta\nu}-a_j^{\mu\alpha}b_k^{\beta\alpha}a_l^{\beta\nu}+a_j^{\mu\alpha}a_k^{\beta\alpha}b_l^{\beta\nu}+b_j^{\mu\alpha}a_k^{\beta\alpha}a_l^{\beta\nu})\big)\\
&=-\frac{\kappa}{N^3}\sum_{j, k, l}[U_j]_{\mu\alpha}[U_k^\dagger]_{\alpha\beta}[U_l]_{\beta\nu}.
\end{aligned}
\end{align*}
This implies
\[
\left.\frac{\partial\mathcal{V}_2}{\partial U_i}\right|_{T_{U_i}M_{d, d}(\mathbb{C})}=\left(\frac{\partial {\mathcal V}_2 }{\partial a_i^{\mu\nu}}+\mathrm{i}\frac{\partial {\mathcal V}_2}{\partial b_i^{\mu\nu}}\right)E^{\mu\nu}=-\frac{\kappa}{N^3}\sum_{j, k, l}U_jU_k^\dagger U_l=-\kappa U_cU_c^\dagger U_c.
\]
\end{proof}
\begin{proposition}  \label{P3.2}
 System $\eqref{C-1}_1$ can be rewritten as a gradient flow with the potential ${\mathcal V}_2$:
\[  \dot{U}_i = -\left. \frac{\partial\mathcal{V}_2}{\partial U_i}\right|_{T_{U_i}\mathbb{U}(d)}, \quad i = 1, \cdots, N. \]
\end{proposition}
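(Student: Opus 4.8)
The plan is to reduce the proposition to a single projection computation, since the hard analytic work has already been done in Lemma \ref{L3.2}. That lemma supplies the Euclidean (ambient) gradient
\[ \left.\frac{\partial\mathcal{V}_2}{\partial U_i}\right|_{T_{U_i}M_{d,d}(\mathbb{C})}=-\kappa\, U_cU_c^\dagger U_c, \]
so the entire content of Proposition \ref{P3.2} is to verify that the orthogonal projection of this ambient gradient onto $T_{U_i}\bbu(d)$ coincides, up to sign, with the right-hand side of $\eqref{C-1}_1$. In effect I would reuse the geometric setup of Step C in the proof of Proposition \ref{P2.3} verbatim.

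First I would recall the explicit projection operator from that Step C: writing any tangent vector $M\in T_{U_i}M_{d,d}(\bbc)$ as $M=(MU_i^\dagger)U_i$ with $A:=MU_i^\dagger$, the orthogonal projection $\pi_{U_i}:T_{U_i}M_{d,d}(\bbc)\to T_{U_i}\bbu(d)$ acts by
\[ \pi_{U_i}(M)=\pi(MU_i^\dagger)U_i=\tfrac{1}{2}\bigl(MU_i^\dagger-U_iM^\dagger\bigr)U_i. \]
Next I would substitute $M=-\kappa\,U_cU_c^\dagger U_c$ and use the conjugation identity $(U_cU_c^\dagger U_c)^\dagger=U_c^\dagger U_cU_c^\dagger$ to obtain
\begin{align*}
\left.\frac{\partial\mathcal{V}_2}{\partial U_i}\right|_{T_{U_i}\bbu(d)}
&=\tfrac{1}{2}\bigl(-\kappa\,U_cU_c^\dagger U_cU_i^\dagger+\kappa\,U_iU_c^\dagger U_cU_c^\dagger\bigr)U_i \\
&=-\tfrac{\kappa}{2}\bigl(U_cU_c^\dagger U_cU_i^\dagger U_i-U_iU_c^\dagger U_cU_c^\dagger U_i\bigr).
\end{align*}
The final step is purely algebraic: invoking the standing assumption $U_i^\dagger U_i=U_iU_i^\dagger=I_d$ (Lemma \ref{L3.1}) to collapse $U_i^\dagger U_i=I_d$ gives
\[ -\left.\frac{\partial\mathcal{V}_2}{\partial U_i}\right|_{T_{U_i}\bbu(d)}=\frac{\kappa}{2}\bigl(U_cU_c^\dagger U_c-U_iU_c^\dagger U_cU_c^\dagger U_i\bigr), \]
which is exactly the vector field on the right-hand side of $\eqref{C-1}_1$, completing the proof.

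I do not anticipate a genuine obstacle here, since the argument is a direct transcription of the $m=1$ case with $-\kappa U_c$ replaced by $-\kappa U_cU_c^\dagger U_c$. The only point requiring mild care is bookkeeping of the Hermitian conjugate: one must correctly compute $(U_cU_c^\dagger U_c)^\dagger=U_c^\dagger U_cU_c^\dagger$ and track the order of factors through the skew-symmetrization $\tfrac12(A-A^\dagger)$, so that the two cubic terms appear in the correct positions and the unitarity relation can be applied to the trailing $U_i^\dagger U_i$. Everything else is immediate from Lemma \ref{L3.2} and the projection formula already established in Proposition \ref{P2.3}.
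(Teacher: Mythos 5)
Your proof is correct and follows essentially the same route as the paper: both invoke Lemma \ref{L3.2} for the ambient derivative $-\kappa U_cU_c^\dagger U_c$ and then apply the projection $\pi_{U_i}(M)=\tfrac12(MU_i^\dagger-U_iM^\dagger)U_i$ from Step C of Proposition \ref{P2.3}, using $U_i^\dagger U_i=I_d$ to collapse the trailing factor. The only cosmetic difference is that the paper carries the computation through the triple sum $\frac{1}{N^3}\sum_{j,k,l}U_jU_k^\dagger U_l$ before identifying it with $U_cU_c^\dagger U_c$, whereas you work with $U_c$ directly throughout.
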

\begin{proof}
We use Lemma \ref{L2.1} to see
\begin{align*}
\left.\frac{\partial\mathcal{V}_2}{\partial U_i}\right|_{T_{U_i}\mathbb{U}(d)}&=\pi_{U_i}\left(\left.\frac{\partial\mathcal{V}_2}{\partial U_i}\right|_{T_{U_i}M_{d, d}(\mathbb{C})}\right)=\pi\left(\left.\frac{\partial\mathcal{V}_2}{\partial U_i}\right|_{T_{U_i}M_{d, d}(\mathbb{C})}U_i^\dagger \right)U_i\\
&=\pi\left(-\frac{\kappa}{N^3}\sum_{j, k, l=1}^NU_jU_k^\dagger U_lU_i^\dagger \right)U_i=-\frac{\kappa}{2N^3}\sum_{j, k, l=1}^N(U_jU_k^\dagger U_lU_i^\dagger -U_iU_l^\dagger U_kU_j^\dagger)U_i\\
&=-\frac{\kappa}{2}(U_cU_c^\dagger U_c-U_iU_c^\dagger U_cU_c^\dagger U_i).
\end{align*}
\end{proof}
As a corollary of a gradient flow formulation of \eqref{C-1}, we have the convergence of the flow. 
\begin{corollary} \label{C3.1}
Suppose that coupling strength and the initial data $\{U_j^{in} \}$ satisfy 
\[  \kappa > 0, \quad  U_j^{in \dagger} U_j^{in} = I_d, \qquad  j = 1, \cdots, N, \]
and let $\{U_j \}$ be a global solution of system \eqref{C-1}. Then, there exists an equilibrium $(U_1^{\infty}, \cdots, U_N^{\infty})$ such that 
\[ \lim_{t \to \infty} \|U_j(t) - U_j^{\infty} \|_F = 0, \quad j = 1, \cdots, N. \]
\end{corollary}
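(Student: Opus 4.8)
The plan is to reduce the statement to the gradient-flow machinery already invoked in the proof of Corollary \ref{C2.1} for the case $m=1$. By Proposition \ref{P3.2}, system \eqref{C-1} is the gradient flow of the analytic potential $\mathcal{V}_2$ on the compact real-analytic manifold $\bbu(d)^N$, so the standard convergence theory for gradient flows of analytic functions on compact manifolds applies essentially verbatim, with $\mathcal{V}_1$ replaced by $\mathcal{V}_2$.

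First I would record the energy-dissipation identity. Differentiating $\mathcal{V}_2$ along the flow and using Proposition \ref{P3.2} gives
\[
\frac{d}{dt}\mathcal{V}_2(U(t)) = \sum_{i=1}^N \left\langle \left.\frac{\partial \mathcal{V}_2}{\partial U_i}\right|_{T_{U_i}\bbu(d)}, \dot{U}_i \right\rangle_F = -\sum_{i=1}^N \|\dot{U}_i\|_F^2 \le 0,
\]
so $t \mapsto \mathcal{V}_2(U(t))$ is non-increasing. Since $\mathcal{V}_2$ is bounded (see \eqref{C-4-2-2}), this monotone quantity has a finite limit $\mathcal{V}_2^\infty$ as $t \to \infty$, and in particular the dissipation integral $\int_0^\infty \sum_i \|\dot U_i\|_F^2\, dt$ is finite. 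Because $\bbu(d)^N$ is compact, the trajectory $\{U(t)\}$ is precompact, so its $\omega$-limit set $\omega(U)$ is nonempty, compact and connected, and by the dissipation estimate every point of $\omega(U)$ is an equilibrium of \eqref{C-1}.

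To upgrade subsequential convergence to genuine convergence, I would invoke the {\L}ojasiewicz gradient inequality: since $\mathcal{V}_2$ is real-analytic, for each equilibrium $U^*$ there exist a neighborhood, a constant $C>0$ and an exponent $\theta\in(0,\tfrac12]$ such that
\[
|\mathcal{V}_2(U) - \mathcal{V}_2(U^*)|^{1-\theta} \le C\, \Big\| \Big(\left.\tfrac{\partial \mathcal{V}_2}{\partial U_i}\right|_{T_{U_i}\bbu(d)}\Big)_{i=1}^N\Big\|_F
\]
holds for $U$ near $U^*$. Combining this with the dissipation identity by the now-standard {\L}ojasiewicz--Simon argument shows that the trajectory has finite length, $\int_0^\infty \big(\sum_i \|\dot U_i\|_F^2\big)^{1/2}\, dt < \infty$, which forces $U(t)$ to converge to a single limit $U^\infty = (U_1^\infty,\dots,U_N^\infty)\in\omega(U)$; this is exactly the equilibrium in the statement.

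The hard part is the {\L}ojasiewicz--Simon step: one must verify the gradient inequality in the manifold setting and deduce finite trajectory length. However, since $\bbu(d)^N$ is a compact real-analytic manifold and $\mathcal{V}_2$ extends to a polynomial on the ambient space $M_{d,d}(\bbc)^N$ (as already exploited in Lemma \ref{L3.2}), the inequality is an instance of the classical {\L}ojasiewicz theorem, and the finite-length deduction is precisely the standard argument from \cite{H-K-R2} cited in Corollary \ref{C2.1}. Thus no new analytic input beyond the references is required, and the proof amounts to checking that the boundedness \eqref{C-4-2-2} and the gradient structure of Proposition \ref{P3.2} play the same roles that \eqref{B-6-1} and Proposition \ref{P2.3} played for $m=1$.
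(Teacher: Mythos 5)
Your proposal is correct and follows essentially the same route as the paper: the paper's proof of Corollary \ref{C3.1} simply invokes the gradient-flow structure from Proposition \ref{P3.2} together with the analyticity of $\mathcal{V}_2$ and cites the standard {\L}ojasiewicz-type convergence theorem (Theorem 5.2 in \cite{H-K-R0}), which is exactly the argument you have unpacked in detail.
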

\begin{proof}
Since system \eqref{C-1} is a gradient flow with the analytical potential ${\mathcal V}_2$, the flow $U_j$ converges toward an equilibrium (see Theorem 5.2 in \cite{H-K-R0}). 
\end{proof}

\subsubsection{Temporal evolution of potential} Next, we study temporal evolution of the potential ${\mathcal V}_1$ and ${\mathcal V}_2$ in the following lemma.
\begin{lemma} \label{L3.3}
Let $\{U_j \}$ be a global solution of system \eqref{C-1} with the initial data $\{U_j^{in} \}$:
\[ U_j^{in \dagger} U_j^{in} = I_d, \quad j = 1, \cdots, N. \] 
Then, one has
\begin{eqnarray*}
&& (i)~\frac{d}{dt} {\mathcal V}_1(U) = -\frac{\kappa^2}{4}\sum_{j=1}^N\|U_cU_c^\dagger U_j-U_cU_j^\dagger U_c\|^2_F-\frac{\kappa^2}{8}\sum_{j=1}^N\|U_cU_c^\dagger U_j-U_j U_c^\dagger U_c\|_F^2, \cr
&& (ii)~\frac{d}{dt} {\mathcal V}_2(U) =-\frac{\kappa^2}{4}\sum_{j=1}^N\|U_cU_c^\dagger U_cU_j^\dagger -U_j U_c^\dagger U_cU_c^\dagger \|_F^2.
\end{eqnarray*}
\end{lemma}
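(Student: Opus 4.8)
The plan is to handle the two identities by quite different routes: part (ii) is essentially a corollary of the gradient-flow structure already in hand, whereas part (i) needs a genuine differentiation followed by a completing-the-square argument, because $\mathcal{V}_1$ is \emph{not} the potential that generates \eqref{C-1}.

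For (ii) I would invoke Proposition \ref{P3.2}, which says system \eqref{C-1} is the gradient flow $\dot U_i=-\left.\frac{\partial \mathcal V_2}{\partial U_i}\right|_{T_{U_i}\bbu(d)}$. Pairing the evolution against $\dot U_i$ in the Frobenius metric and using that $\pi_{U_i}$ is the orthogonal projection onto $T_{U_i}\bbu(d)$ (so that the real part of $\langle \left.\frac{\partial\mathcal V_2}{\partial U_i}\right|_{T_{U_i}M}, \dot U_i\rangle_F$ equals $-\|\dot U_i\|_F^2$) gives the standard dissipation identity $\frac{d}{dt}\mathcal V_2=-\sum_{j=1}^N\|\dot U_j\|_F^2$. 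It then suffices to rewrite the right-hand side: substituting $\dot U_j=\frac{\kappa}{2}(U_cU_c^\dagger U_c-U_jU_c^\dagger U_cU_c^\dagger U_j)$ and factoring the unitary $U_j$ out to the right,
$$ U_cU_c^\dagger U_c-U_jU_c^\dagger U_cU_c^\dagger U_j=\big(U_cU_c^\dagger U_cU_j^\dagger-U_jU_c^\dagger U_cU_c^\dagger\big)U_j, $$
and since right multiplication by a unitary preserves $\|\cdot\|_F$, one gets $\|\dot U_j\|_F^2=\frac{\kappa^2}{4}\|U_cU_c^\dagger U_cU_j^\dagger-U_jU_c^\dagger U_cU_c^\dagger\|_F^2$, which is exactly (ii).

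For (i) I differentiate directly. Using $\dot U_c=\frac1N\sum_j\dot U_j$ and $\mathrm{tr}(U_c\dot U_c^\dagger)=\overline{\mathrm{tr}(\dot U_cU_c^\dagger)}$, I get $\frac{d}{dt}\mathcal V_1=-\kappa\sum_{j}\mathrm{Re}\,\mathrm{tr}(\dot U_jU_c^\dagger)$, and inserting the dynamics yields
$$ \frac{d}{dt}\mathcal V_1=-\frac{\kappa^2}{2}\sum_{j=1}^N\big(P-\mathrm{Re}\,\tau_j\big),\qquad P:=\mathrm{tr}\big((U_cU_c^\dagger)^2\big),\quad \tau_j:=\mathrm{tr}\big(U_jU_c^\dagger U_cU_c^\dagger U_jU_c^\dagger\big). $$
The whole content of (i) is then the pointwise identity
$$ P-\mathrm{Re}\,\tau_j=\tfrac12\|U_cU_c^\dagger U_j-U_cU_j^\dagger U_c\|_F^2+\tfrac14\|U_cU_c^\dagger U_j-U_jU_c^\dagger U_c\|_F^2, $$
since summing over $j$ and multiplying by $-\frac{\kappa^2}{2}$ reproduces the weights $\frac{\kappa^2}{4}$ and $\frac{\kappa^2}{8}$. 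To prove it I expand each squared norm by $\|X-Y\|_F^2=\|X\|_F^2-2\mathrm{Re}\langle X,Y\rangle_F+\|Y\|_F^2$. Using $U_jU_j^\dagger=I_d$ and cyclicity, three of the four diagonal terms reduce to $P$; the fourth, $\|U_cU_j^\dagger U_c\|_F^2$, coincides with the (real) cross term of the second norm, because both equal $\mathrm{tr}(U_c^\dagger U_c\,U_j^\dagger U_cU_c^\dagger U_j)$ after a cyclic rotation using the Hermiticity of $U_cU_c^\dagger$ and $U_c^\dagger U_c$; these two therefore cancel. Finally the remaining cross term of the first norm has real part $\mathrm{Re}\,\tau_j$, as one sees by taking a complex conjugate and rotating cyclically. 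Collecting the surviving terms gives the identity.

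The structural hint guiding the grouping is that $U_cU_c^\dagger U_j-U_cU_j^\dagger U_c=U_c(U_c^\dagger U_j-U_j^\dagger U_c)$ carries a skew-Hermitian factor, while $U_cU_c^\dagger U_j-U_jU_c^\dagger U_c=(U_cU_c^\dagger)U_j-U_j(U_c^\dagger U_c)$ is of commutator type. The main obstacle is precisely part (i): since $\mathcal V_1$ is not a Lyapunov functional of the $m=2$ flow, its dissipation is not the clean $-\sum_j\|\dot U_j\|_F^2$, and one must verify that the particular linear combination of degree-six trace monomials produced by the differentiation reassembles into exactly these two norm families with weights $\frac12$ and $\frac14$. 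Part (ii), by contrast, is routine once Proposition \ref{P3.2} is invoked.
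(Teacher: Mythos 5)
Your proposal is correct and follows essentially the same route as the paper: for (i) both differentiate $\mathcal{V}_1$ directly, insert the dynamics, and regroup the resulting degree-six trace monomials into the two squared norms with weights $\tfrac{\kappa^2}{4}$ and $\tfrac{\kappa^2}{8}$ (your expand-and-match verification of the pointwise identity $P-\mathrm{Re}\,\tau_j=\tfrac12\|U_cU_c^\dagger U_j-U_cU_j^\dagger U_c\|_F^2+\tfrac14\|U_cU_c^\dagger U_j-U_jU_c^\dagger U_c\|_F^2$ checks out, including the cancellation of $\|U_cU_j^\dagger U_c\|_F^2$ against the second cross term). For (ii) the paper computes $\frac{d}{dt}\mathrm{tr}((U_cU_c^\dagger)^2)$ directly rather than invoking the dissipation identity $\frac{d}{dt}\mathcal{V}_2=-\sum_j\|\dot U_j\|_F^2$, but this is a cosmetic difference since both reduce to the same norm after factoring the unitary $U_j$ out on the right.
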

\begin{proof}
\noindent (i)~We use \eqref{C-1} to get 
\begin{align*}
\begin{aligned}
&\frac{d}{dt}\mathcal{V}_1(U)=-\frac{\kappa N}{2}\frac{d}{dt}\mathrm{tr}(U_cU_c^\dagger) =-\frac{\kappa N}{2}\mathrm{tr}(\dot{U}_cU_c^\dagger +U_c\dot{U}_c^\dagger)=-\frac{\kappa}{2}\sum_{j=1}^N\mathrm{tr}(\dot{U}_jU_c^\dagger+U_c\dot{U}_j^\dagger)  \\
&\hspace{0.5cm} =-\frac{\kappa^2}{4}\sum_{j=1}^N\left(\mathrm{tr}(U_cU_c^\dagger U_cU_c^\dagger -U_j U_c^\dagger U_cU_c^\dagger U_j U_c^\dagger)+(c.c.)\right) \\
&\hspace{0.5cm} = -\frac{\kappa^2}{4}\sum_{j=1}^N\mathrm{tr}(2U_cU_c^\dagger U_cU_c^\dagger -U_j U_c^\dagger U_cU_c^\dagger U_j U_c^\dagger -U_cU_j^\dagger U_cU_c^\dagger U_cU_j^\dagger)\\
&\hspace{0.5cm} =-\frac{\kappa^2}{4}\sum_{j=1}^N\mathrm{tr}\big((U_cU_c^\dagger U_j -U_cU_j^\dagger U_c)(U_j^\dagger U_cU_c^\dagger -U_c^\dagger U_j U_c^\dagger)+U_cU_c^\dagger U_cU_c^\dagger-U_cU_j^\dagger U_cU_c^\dagger U_j U_c^\dagger\big)\\
&\hspace{0.5cm} =-\frac{\kappa^2}{4}\sum_{j=1}^N\|U_cU_c^\dagger U_j-U_cU_j^\dagger U_c\|^2_F-\frac{\kappa^2}{8}\sum_{j=1}^N\|U_cU_c^\dagger U_j-U_j U_c^\dagger U_c\|_F^2.
\end{aligned}
\end{align*} 
(ii)~Similarly, one has
\begin{align*}
\begin{aligned}
&\frac{d}{dt}\mathcal{V}_2(U)=-\frac{\kappa N}{4}\frac{d}{dt}\mathrm{tr}(U_cU_c^\dagger U_cU_c^\dagger) =-\frac{\kappa N}{2}\left(\mathrm{tr}(\dot{U}_cU_c^\dagger U_cU_c^\dagger)+\mathrm{tr}(U_c\dot{U}_c^\dagger U_cU_c^\dagger)\right)\\
& \hspace{0.5cm}= -\frac{\kappa}{2}\sum_{j=1}^N \left(\mathrm{tr}(\dot{U}_jU_c^\dagger U_cU_c^\dagger)+\mathrm{tr}(U_c\dot{U}_j^\dagger U_cU_c^\dagger)\right)\\
& \hspace{0.5cm} =-\frac{\kappa^2}{4}\sum_{j=1}^N\left(\mathrm{tr}((U_cU_c^\dagger U_c-U_iU_c^\dagger U_cU_c^\dagger U_j)U_c^\dagger U_cU_c^\dagger)+(c.c)\right)\\
& \hspace{0.5cm}=-\frac{\kappa^2}{4}\sum_{j=1}^N\|U_cU_c^\dagger U_cU_j^\dagger -U_j U_c^\dagger U_cU_c^\dagger \|_F^2.
\end{aligned}
\end{align*}
\end{proof}
As a corollary, one has the following result. 

\begin{corollary} \label{C3.2}
Let $\{U_j \}$ be a global solution of system \eqref{C-1} with the initial data $\{U_j^{in} \}$:
\[ \kappa > 0, \quad (U_j^{in})^{\dagger} U_j^{in} = I_d, \quad j = 1, \cdots, N. \] 
Then, one has 
\[
\lim_{t \to \infty}  \|U_cU_c^\dagger U_cU_j^\dagger -U_j U_c^\dagger U_cU_c^\dagger \|_F^2 = 0, \quad j = 1, \cdots, N.  
\]
\end{corollary}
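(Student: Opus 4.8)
The plan is to apply Barbalat's lemma (Lemma \ref{L2.1}) to the scalar function $f(t) := {\mathcal V}_2(U(t))$, following the route already used for ${\mathcal V}_1$ in Proposition \ref{P2.4} and Corollary \ref{C2.2}. First I would record the two soft ingredients. By Lemma \ref{L3.3}(ii),
\[
f'(t) = \frac{d}{dt}{\mathcal V}_2(U) = -\frac{\kappa^2}{4}\sum_{j=1}^N \|U_cU_c^\dagger U_cU_j^\dagger - U_jU_c^\dagger U_cU_c^\dagger\|_F^2 \le 0,
\]
so $f$ is non-increasing; and by the a priori bound \eqref{C-4-2-2} the potential ${\mathcal V}_2$ is bounded below. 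Hence $f$ is monotone and bounded, so $\lim_{t\to\infty} f(t)$ exists in $\bbr$.

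To invoke Lemma \ref{L2.1} it then remains to check that $f'$ is uniformly continuous on $[0,\infty)$, for which it suffices to bound $f'' = \frac{d^2}{dt^2}{\mathcal V}_2(U)$ uniformly in $t$. I would first obtain a uniform bound on the velocities: from \eqref{C-1}, the unitarity $\|U_j\|_{op}=1$, the contraction estimate $\|U_c\|_{op}\le 1$ of Lemma \ref{L2.3}, Lemma \ref{L2.2}, and $\|U_c\|_F\le\sqrt d$, one gets $\|\dot U_j\|_F \le \kappa\sqrt d$ and hence $\|\dot U_c\|_F \le \kappa\sqrt d$ as well. Differentiating $\mathrm{tr}((U_cU_c^\dagger)^2)$ once more produces a finite sum of traces, each a product of the uniformly bounded factors $U_c, U_c^\dagger, U_j, U_j^\dagger$ and exactly one velocity factor ($\dot U_c, \dot U_c^\dagger, \dot U_j$ or $\dot U_j^\dagger$); applying Lemma \ref{L2.4} term by term then bounds $|f''|$ by a constant $C=C(\kappa,d,N)$, exactly as in the $m=1$ computation of Proposition \ref{P2.4}.

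With $\lim_{t\to\infty}f(t)$ existing and $f'$ uniformly continuous, Barbalat's lemma yields $f'(t)\to 0$ as $t\to\infty$. Since $f'(t)$ is a negative sum of nonnegative terms and $\kappa>0$, each summand must tend to $0$ separately, which is the claim. As a consistency check, right-multiplying \eqref{C-1} by $U_j^\dagger$ and using $U_jU_j^\dagger=I_d$ gives $\dot U_j U_j^\dagger = \frac{\kappa}{2}(U_cU_c^\dagger U_cU_j^\dagger - U_jU_c^\dagger U_cU_c^\dagger)$, so that the quantity in the statement equals $\frac{2}{\kappa}\|\dot U_j\|_F$; thus the corollary is in fact equivalent to $\dot U_j\to 0$, paralleling Corollary \ref{C2.2}.

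The only genuine work is the uniform bound on $f''$: the differentiation itself is routine, but the bookkeeping of the resulting trace terms is the step most prone to error, and it is precisely here that one must use that both $\|U_c\|_{op}$ and $\|U_j\|_{op}$ are controlled so that no factor grows in $t$. Everything else — monotonicity, lower boundedness, and the final splitting of the limiting sum — is immediate from the estimates already established.
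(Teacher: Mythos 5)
Your proposal is correct and follows essentially the same route as the paper: both establish $\lim_{t\to\infty}\mathcal{V}_2(U(t))$ via monotonicity and the bound \eqref{C-4-2-2}, bound $\frac{d^2}{dt^2}\mathcal{V}_2$ uniformly using the uniform bounds on $U_j$, $U_c$ and their velocities together with Lemmas \ref{L2.2}--\ref{L2.4}, and then invoke Barbalat's lemma and nonnegativity of each summand. (Only a cosmetic slip in your closing remark: the squared quantity in the statement equals $\tfrac{4}{\kappa^2}\|\dot U_j\|_F^2$, not $\tfrac{2}{\kappa}\|\dot U_j\|_F$; this does not affect the argument.)
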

\begin{proof}
\noindent (i)~Since ${\mathcal V}_2(U)$ is bounded below and non-increasing along the flow \eqref{C-1}, ${\mathcal V}_2(U(\cdot))$ converges as $t \to \infty$. \newline

\noindent (ii)~It follows from Lemma \ref{L3.3} that 
\begin{equation} \label{C-4-0-0}
\frac{d}{dt} \mathcal{V}_2(U)  = -\frac{\kappa^2}{4}\sum_{j=1}^N\|U_cU_c^\dagger U_cU_j ^\dagger -U_j U_c^\dagger U_cU_c^\dagger\|_F^2. 
\end{equation}
In order to apply Babalat's lemma (Lemma \ref{L2.1}) for the derivation of the desired estimate, it suffices to show that 
\begin{equation} \label{C-4-0}
\sup_{0 \leq t < \infty} \Big|  \frac{d^2}{dt^2} \mathcal{V}_2(U)   \Big| < \infty. 
\end{equation}
We differentiate \eqref{C-4-0-0} with respect to $t$ and obtain
\begin{align*}
\frac{d^2}{dt^2} \mathcal{V}_2(U) &= -\frac{\kappa^2}{4}\sum_{j=1}^N \frac{d}{dt} \|U_cU_c^\dagger U_cU_j ^\dagger -U_j U_c^\dagger U_cU_c^\dagger\|_F^2.
\end{align*}
From the direct calculation, we have
\begin{align*}
&\frac{d}{dt} \|U_cU_c^\dagger U_cU_j ^\dagger -U_j U_c^\dagger U_cU_c^\dagger\|_F^2=\frac{d}{dt}\mathrm{tr}[(U_cU_c^\dagger U_cU_j ^\dagger -U_j U_c^\dagger U_cU_c^\dagger)(U_cU_c^\dagger U_cU_j ^\dagger -U_j U_c^\dagger U_cU_c^\dagger)^\dagger]\\
&=\frac{1}{N^3}\sum_{k_1, k_2, k_3, k_4, k_5, k_6=1}^N\frac{d}{dt}\mathrm{tr}[(U_{k_1}U_{k_2}^\dagger U_{k_3}U_j ^\dagger -U_j U_{k_1}^\dagger U_{k_2}U_{k_3}^\dagger)(U_{k_4}U_{k_5}^\dagger U_{k_6}U_j ^\dagger -U_j U_{k_4}^\dagger U_{k_5}U_{k_6}^\dagger)^\dagger].
\end{align*}
From the boundedness of $\|U_j\|_F,~\|U_{k_\alpha}\|_F,~\|U_j\|_{op},~\|U_{k_\alpha}\|_{op}$, $\|\dot{U}_j\|_F,~\|\dot{U}_{k_\alpha}\|_F,~\|\dot{U}_j\|_{op},~\|\dot{U}_{k_\alpha}\|_{op}$ and Lemma \ref{L2.2}, \ref{L2.3}, \ref{L2.4} we can obtain the boundedness of 
\[
\frac{d}{dt}\mathrm{tr}[(U_{k_1}U_{k_2}^\dagger U_{k_3}U_j ^\dagger -U_j U_{k_1}^\dagger U_{k_2}U_{k_3}^\dagger)(U_{k_4}U_{k_5}^\dagger U_{k_6}U_j ^\dagger -U_j U_{k_4}^\dagger U_{k_5}U_{k_6}^\dagger)^\dagger].
\]
Hence 
\[  \frac{d}{dt} \|U_cU_c^\dagger U_cU_j ^\dagger -U_j U_c^\dagger U_cU_c^\dagger\|_F^2 \quad \mbox{is uniformly bounded}. \]
Therefore, one has 
\[
\frac{d^2}{dt^2} \mathcal{V}_2(U) = -\frac{\kappa^2}{4}\sum_{j=1}^N \frac{d}{dt} \|U_cU_c^\dagger U_cU_j ^\dagger -U_j U_c^\dagger U_cU_c^\dagger\|_F^2.
\]
is uniformly bounded over time. So we can apply Barbalat's lemma to obtain
\[
\lim_{t\rightarrow\infty} \frac{d}{dt}\mathcal{V}_2(U)=0.
\]
This implies
\[
\lim_{t \to \infty}  \|U_cU_c^\dagger U_cU_j^\dagger -U_j U_c^\dagger U_cU_c^\dagger \|_F^2 = 0, \quad j = 1, \cdots, N.  
\]
\end{proof}

\subsection{Case with $m \geq 3$} \label{sec:3.2}
Consider the Cauchy problem for \eqref{C-0} in a mean-field form:
\begin{equation}
\begin{cases} \label{C-4-2}
\displaystyle \dot{U}_j = \frac{\kappa}{2} \Big(\underbrace{U_cU_c^\dagger U_c\cdots U_c^\dagger U_c}_{2m-1}-U_j\underbrace{U_c^\dagger U_cU_c^\dagger \cdots U_cU_c^\dagger}_{2m-1}U_j \Big),\\
\displaystyle U_j \Big|_{t = 0} =U_j^{in} \in\mathbb{U}(d).
\end{cases}
\end{equation}
Similar to Lemma \ref{L3.1} and Proposition \ref{P3.2}, one has a gradient flow formulation to \eqref{C-4-2}.
\begin{proposition}  \label{P3.3}
 System $\eqref{C-4-2}_1$ can be rewritten as a gradient flow with the potential ${\mathcal V}_m$:
\[  \dot{U}_j = -\left. \frac{\partial\mathcal{V}_m}{\partial U_j}\right|_{T_{U_j} \mathbb{U}(d)}, \quad j = 1, \cdots, N. \]
\end{proposition}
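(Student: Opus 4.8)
The plan is to mirror the two-step argument already used for the case $m=2$ (Lemma \ref{L3.2} together with Proposition \ref{P3.2}): first compute the unconstrained gradient $\left.\frac{\partial\mathcal{V}_m}{\partial U_i}\right|_{T_{U_i}M_{d,d}(\bbc)}$ in the ambient space $M_{d,d}(\bbc)^N\cong\bbr^{2d^2N}$, and then orthogonally project it onto $T_{U_i}\bbu(d)$ using the projection $\pi_{U_i}(AU_i)=\pi(A)U_i=\tfrac12(A-A^\dagger)U_i$ introduced in Step C of the proof of Proposition \ref{P2.3}.

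First I would establish the analogue of Lemma \ref{L3.2} for general $m$, namely
\[
\left.\frac{\partial\mathcal{V}_m}{\partial U_i}\right|_{T_{U_i}M_{d,d}(\bbc)} = -\kappa\,(U_cU_c^\dagger)^{m-1}U_c = -\kappa\underbrace{U_cU_c^\dagger\cdots U_c}_{2m-1}.
\]
Rather than expanding into real and imaginary components as in Lemma \ref{L3.2} (which is unwieldy for large $m$), I would write
\[
\mathcal{V}_m = -\frac{\kappa}{2mN^{2m-1}}\sum_{i_1,\dots,i_{2m}}[U_{i_1}]_{\alpha_1\alpha_2}[U_{i_2}^\dagger]_{\alpha_2\alpha_3}\cdots[U_{i_{2m}}^\dagger]_{\alpha_{2m}\alpha_1}
\]
with Einstein summation over the $\alpha_\bullet$, and use the elementary identity $\frac{\partial}{\partial a_i^{\mu\nu}}+\mathrm{i}\frac{\partial}{\partial b_i^{\mu\nu}}=2\frac{\partial}{\partial\bar u_i^{\mu\nu}}$ for $u_i^{\mu\nu}=a_i^{\mu\nu}+\mathrm{i}b_i^{\mu\nu}$. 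The key observation is that in this combination every factor depending holomorphically on $U_i$ (the odd-position factors $U_{i_\bullet}$) drops out, so that only the $m$ conjugated factors $[U_i^\dagger]_{\nu\mu}=\bar u_i^{\mu\nu}$ survive. By the cyclic invariance of the trace these $m$ contributions are identical, and each one, after summing the remaining free indices against the prefactor $N^{-(2m-1)}$, assembles into the length-$(2m-1)$ chain $(U_cU_c^\dagger)^{m-1}U_c$; this gives the displayed formula (which correctly specializes to Step B of Proposition \ref{P2.3} at $m=1$ and to Lemma \ref{L3.2} at $m=2$). Alternatively, one could prove the same identity by induction on $m$, differentiating $\mathrm{tr}((U_cU_c^\dagger)^m)$ one factor at a time.

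Next I would carry out the projection exactly as in Proposition \ref{P3.2}. Setting $A_i:=-\kappa(U_cU_c^\dagger)^{m-1}U_cU_i^\dagger$ and using $\pi(A)=\tfrac12(A-A^\dagger)$ together with the Hermiticity of $(U_cU_c^\dagger)^{m-1}$, I obtain
\begin{align*}
\left.\frac{\partial\mathcal{V}_m}{\partial U_i}\right|_{T_{U_i}\bbu(d)} &= \pi(A_i)U_i = -\frac{\kappa}{2}\Big((U_cU_c^\dagger)^{m-1}U_cU_i^\dagger - U_iU_c^\dagger(U_cU_c^\dagger)^{m-1}\Big)U_i \\
&= -\frac{\kappa}{2}\Big(\underbrace{U_cU_c^\dagger\cdots U_c}_{2m-1} - U_i\underbrace{U_c^\dagger U_c\cdots U_c^\dagger}_{2m-1}U_i\Big),
\end{align*}
where the last equality uses the standing constraint $U_i^\dagger U_i=I_d$ to cancel $U_i^\dagger U_i$ in the first term. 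Comparing with \eqref{C-4-2} yields $\dot U_i=-\left.\frac{\partial\mathcal{V}_m}{\partial U_i}\right|_{T_{U_i}\bbu(d)}$, as claimed.

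The main obstacle is the ambient gradient computation for general $m$: the bookkeeping needed to show that the $m$ cyclically equivalent conjugate-derivative terms reassemble \emph{precisely} into $(U_cU_c^\dagger)^{m-1}U_c$, with the correct power of $N$, is where essentially all the content lies. Once that identity is in hand, the projection step is purely formal and identical to the $m=1,2$ cases, so I expect no further difficulty there.
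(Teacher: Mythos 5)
Your proof is correct, and its overall architecture is exactly the one the paper intends: compute the unconstrained gradient in $M_{d,d}(\bbc)^N$ and then apply the orthogonal projection $\pi_{U_i}(AU_i)=\tfrac12(A-A^\dagger)U_i$. The paper itself gives no proof of Proposition \ref{P3.3} beyond the remark that it is ``basically the same'' as Proposition \ref{P3.2}, whose Lemma \ref{L3.2} computes the ambient gradient by expanding everything into real and imaginary components $a_i^{\mu\nu}, b_i^{\mu\nu}$ and tracking the resulting sixteen (for $m=2$) real monomials by hand. You replace that bookkeeping with the conjugate Wirtinger derivative $\tfrac{\partial}{\partial a_i^{\mu\nu}}+\mathrm{i}\tfrac{\partial}{\partial b_i^{\mu\nu}}=2\tfrac{\partial}{\partial \bar u_i^{\mu\nu}}$, observing that only the $m$ factors of $U_c^\dagger$ contribute and that cyclic invariance makes those $m$ contributions equal; the factor $m$ cancels the $\tfrac{1}{2m}$ in ${\mathcal V}_m$ and the powers of $N$ match, giving $-\kappa(U_cU_c^\dagger)^{m-1}U_c$ cleanly for every $m$. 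This is the right way to make the omitted general-$m$ computation rigorous without an explosion of terms, and it correctly recovers Step B of Proposition \ref{P2.3} at $m=1$ and Lemma \ref{L3.2} at $m=2$. The projection step is carried out exactly as in the paper, and your use of $U_i^\dagger U_i=I_d$ and of the identity $U_c^\dagger(U_cU_c^\dagger)^{m-1}=(U_c^\dagger U_c)^{m-1}U_c^\dagger$ to match the right-hand side of \eqref{C-4-2} is valid under the standing unitarity assumption. In short: same two-step skeleton as the paper, but a genuinely cleaner (and for general $m$, essentially necessary) computation of the ambient gradient.
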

\begin{proof}
The proof is basically the same as in the proof of Proposition \ref{P3.2}. Hence we omit its details.
\end{proof}
As a corollary of a gradient flow formulation of \eqref{C-4-2}, we have the convergence of the flow. 
\begin{corollary} \label{C3.3}
Suppose that coupling strength and the initial data $\{U_j^{in} \}$ satisfy 
\[ \kappa > 0, \quad  U_j^{in \dagger} U_j^{in} = I_d, \qquad  j = 1, \cdots, N, \]
and let $\{U_j \}$ be a global solution of system \eqref{C-4-2}. Then, there exists an equilibrium $(U_1^{\infty}, \cdots, U_N^{\infty})$ such that 
\[ \lim_{t \to \infty} \|U_j(t) - U_j^{\infty} \|_F = 0, \quad j = 1, \cdots, N. \]
\end{corollary}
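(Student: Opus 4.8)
The plan is to mirror exactly the argument used for the $m=2$ case in Corollary \ref{C3.1}, since Corollary \ref{C3.3} is the verbatim generalization. The key observation is that Proposition \ref{P3.3} already establishes that system \eqref{C-4-2} is a gradient flow for the potential ${\mathcal V}_m$, and that ${\mathcal V}_m$ is a \emph{real-analytic} function of the state variables: indeed, ${\mathcal V}_m(U) = -\frac{\kappa N}{2m}\mathrm{tr}((U_cU_c^\dagger)^m)$ is a polynomial in the real and imaginary parts of the entries of the $U_j$'s (as already exhibited componentwise for $m=2$ in \eqref{C-3}--\eqref{C-4}), and is therefore analytic on the compact real-analytic manifold $\mathbb{U}(d)^N$. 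Once one has a gradient flow for a real-analytic potential on a compact real-analytic manifold, the \L ojasiewicz gradient inequality applies.

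First I would invoke the gradient flow structure from Proposition \ref{P3.3} to write $\dot{U}_j = -\left.\partial_{U_j}{\mathcal V}_m\right|_{T_{U_j}\mathbb{U}(d)}$, which immediately gives the energy dissipation identity
\[
\frac{d}{dt}{\mathcal V}_m(U(t)) = -\sum_{j=1}^N \left\| \left.\frac{\partial {\mathcal V}_m}{\partial U_j}\right|_{T_{U_j}\mathbb{U}(d)} \right\|_F^2 = -\sum_{j=1}^N \|\dot U_j\|_F^2 \le 0.
\]
Since ${\mathcal V}_m$ is bounded below (by \eqref{C-4-2-2}, $|{\mathcal V}_m| \le \frac{\kappa}{2m}d^m$) and non-increasing, the limit $\lim_{t\to\infty}{\mathcal V}_m(U(t))$ exists, and the dissipation is integrable: $\int_0^\infty \sum_j \|\dot U_j\|_F^2\,dt < \infty$. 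This by itself gives subsequential convergence, but not convergence of the full trajectory.

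Next I would upgrade subsequential convergence to convergence of the entire flow by applying the \L ojasiewicz--Simon argument, precisely as cited in the $m=2$ case (Theorem 5.2 in \cite{H-K-R0}). The analyticity of ${\mathcal V}_m$ yields, near any accumulation point $U^\infty$, a \L ojasiewicz inequality $|{\mathcal V}_m(U)-{\mathcal V}_m(U^\infty)|^{1-\theta} \le C\|\nabla{\mathcal V}_m(U)\|$ for some $\theta\in(0,\tfrac12]$; combined with the dissipation identity this bounds the arc length $\int_0^\infty \sum_j\|\dot U_j\|_F\,dt < \infty$, which forces $U_j(t)$ to converge to a single limit $U_j^\infty$. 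Since $\mathbb{U}(d)^N$ is compact, a global solution exists for all time and possesses accumulation points, so the argument applies. I would state this convergence as the conclusion $\lim_{t\to\infty}\|U_j(t)-U_j^\infty\|_F = 0$.

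The only genuine content beyond a verbatim citation is verifying that the hypotheses of the \L ojasiewicz--Simon machinery in \cite{H-K-R0} are met for general $m\ge 3$, and the main obstacle is confirming that the potential is analytic and that the flow stays on the compact manifold $\mathbb{U}(d)^N$. The latter follows from Lemma \ref{L3.1} (conservation of $U_jU_j^\dagger$, so the constraint $U_j^\dagger U_j = I_d$ is preserved), and the former is clear from the trace-polynomial form of ${\mathcal V}_m$. Because all these structural facts are identical to the $m=2$ situation with only the degree of the polynomial changing, the honest proof is a one-line reduction: the argument of Corollary \ref{C3.1} applies mutatis mutandis, replacing ${\mathcal V}_2$ by ${\mathcal V}_m$ and quoting Theorem 5.2 in \cite{H-K-R0}.
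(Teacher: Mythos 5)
Your proposal is correct and follows essentially the same route as the paper: the paper's own proof is precisely the one-line reduction you describe, invoking the gradient flow structure from Proposition \ref{P3.3}, the analyticity of ${\mathcal V}_m$, and Theorem 5.2 in \cite{H-K-R0} (the \L ojasiewicz--Simon convergence result). The additional details you supply --- the dissipation identity, integrability of $\sum_j\|\dot U_j\|_F^2$, preservation of the unitarity constraint via Lemma \ref{L3.1}, and the arc-length bound from the \L ojasiewicz inequality --- are exactly the content the paper delegates to the cited theorem.
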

\begin{proof}
Since system \eqref{C-4-2} is a gradient flow with the analytical potential ${\mathcal V}_m(U)$, the flow $U_j(\cdot)$ converges toward an equilibrium (see Theorem 5.2 in \cite{H-K-R0}). 
\end{proof}

Now we want to find the derivative of functional $\mathcal{V}_m(U)$ along the dynamics \eqref{C-1}.
\begin{lemma} \label{L3.4}
Let $\{U_j \}$ be a global solution of system \eqref{C-4-2} with the initial data satisfying
\[ U_j^{in \dagger} U^{in}_j  = I_d, \quad j = 1, \cdots, N. \]
Then we have
\[
\frac{d}{dt}  {\mathcal V}_m(U)= -\frac{\kappa^2}{4}  \sum_{j=1}^N \|U_j \underbrace{U_c^\dagger U_cU_c^\dagger \cdots U_cU_c^\dagger}_{2m-1}-\underbrace{U_cU_c^\dagger U_c\cdots U_c^\dagger U_c}_{2m-1}U_j^\dagger \|_F^2.
\]
\end{lemma}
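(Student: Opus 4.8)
The plan is to mimic the computation already carried out for $m=2$ in Lemma \ref{L3.3}(ii), now with a general exponent. Write $P := U_c U_c^\dagger$, so that $\mathcal{V}_m = -\frac{\kappa N}{2m}\mathrm{tr}(P^m)$. First I would differentiate using the trace power rule $\frac{d}{dt}\mathrm{tr}(P^m) = m\,\mathrm{tr}(P^{m-1}\dot P)$, which is valid by cyclicity of the trace since each of the $m$ terms $\mathrm{tr}(P^{a}\dot P P^{m-1-a})$ coincides, together with $\dot P = \dot U_c U_c^\dagger + U_c \dot U_c^\dagger$. Because $(U_cU_c^\dagger)^{m-1}U_c\dot U_c^\dagger$ is the Hermitian conjugate of $(U_cU_c^\dagger)^{m-1}\dot U_c U_c^\dagger$, the two pieces of $\dot P$ combine into twice a real part, and the prefactor $m$ cancels the $\frac{1}{2m}$ in $\mathcal{V}_m$. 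Substituting $\dot U_c = \frac1N\sum_j \dot U_j$ then yields
\[
\frac{d}{dt}\mathcal{V}_m = -\kappa \sum_{j=1}^N \mathrm{Re}\,\mathrm{tr}\big((U_cU_c^\dagger)^{m-1}\dot U_j U_c^\dagger\big).
\]

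Next I would insert the equation of motion \eqref{C-4-2}, read compactly as
\[
\dot U_j = \frac{\kappa}{2}\big((U_cU_c^\dagger)^{m-1}U_c - U_j (U_c^\dagger U_c)^{m-1}U_c^\dagger U_j\big),
\]
where I identify the two $(2m-1)$-fold products with $(U_cU_c^\dagger)^{m-1}U_c$ and $U_j(U_c^\dagger U_c)^{m-1}U_c^\dagger U_j$, respectively. For the first summand one has $\mathrm{tr}\big((U_cU_c^\dagger)^{m-1}(U_cU_c^\dagger)^{m-1}U_c U_c^\dagger\big) = \mathrm{tr}(P^{2m-1})$, giving
\[
\frac{d}{dt}\mathcal{V}_m = -\frac{\kappa^2}{2}\sum_{j=1}^N\Big[\mathrm{tr}\big((U_cU_c^\dagger)^{2m-1}\big) - \mathrm{Re}\,\mathrm{tr}\big((U_cU_c^\dagger)^{m-1}U_j(U_c^\dagger U_c)^{m-1}U_c^\dagger U_j U_c^\dagger\big)\Big],
\]
the first trace being real since $(U_cU_c^\dagger)^{2m-1}$ is Hermitian.

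Finally I would recognize the bracket as one half of a squared Frobenius norm. Setting $A_j := U_j(U_c^\dagger U_c)^{m-1}U_c^\dagger$ and $B_j := (U_cU_c^\dagger)^{m-1}U_c U_j^\dagger$ — exactly the two $(2m-1)$-fold products in the statement — I expand $\|A_j - B_j\|_F^2 = \mathrm{tr}(A_jA_j^\dagger) + \mathrm{tr}(B_jB_j^\dagger) - 2\,\mathrm{Re}\,\mathrm{tr}(A_j B_j^\dagger)$. The conservation law $U_j^\dagger U_j = I_d$ collapses both $\mathrm{tr}(A_jA_j^\dagger)$ and $\mathrm{tr}(B_jB_j^\dagger)$ to $\mathrm{tr}((U_cU_c^\dagger)^{2m-1})$, while cyclicity identifies $\mathrm{Re}\,\mathrm{tr}(A_j B_j^\dagger)$ with the off-diagonal trace above. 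Hence the bracket equals $\tfrac12\|A_j - B_j\|_F^2$, and the factor $-\frac{\kappa^2}{2}$ becomes the claimed $-\frac{\kappa^2}{4}$.

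The computation has no genuine obstacle; the only delicate points are bookkeeping, namely correctly reading the $(2m-1)$-fold products as $(U_cU_c^\dagger)^{m-1}U_c$ and its relatives, tracking the Hermitian conjugations so that the cross terms assemble into a single real part, and verifying that the numerical factors $m$, $N$, and the various $\tfrac12$'s combine to $-\tfrac{\kappa^2}{4}$. These are precisely the manipulations already rehearsed in the proof of Lemma \ref{L3.3}(ii) for $m=2$, now performed with general powers.
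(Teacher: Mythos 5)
Your proposal is correct and follows essentially the same route as the paper's proof: differentiate $\mathrm{tr}((U_cU_c^\dagger)^m)$ via cyclicity to reduce to $m$ identical terms plus conjugates, substitute the mean-field equation for $\dot U_c$, and recognize the resulting real part as half the squared Frobenius norm of $U_jW^\dagger - WU_j^\dagger$ using $U_j^\dagger U_j = U_jU_j^\dagger = I_d$. Your version merely spells out the norm expansion that the paper compresses into a single display; all constants and identifications of the $(2m-1)$-fold products check out.
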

\begin{proof}
By direct calculations, one has
\begin{align*}
\begin{aligned}
& \frac{d}{dt}  {\mathcal V}_m(U) = -\frac{\kappa N}{2m}\frac{d}{dt}\mathrm{tr}((U_cU_c^\dagger)^m) =-\frac{\kappa N}{2}\left( \mathrm{tr}(\dot{U}_c\underbrace{U_c^\dagger \cdots U_cU_c^\dagger}_{2m-1})+(c.c.)\right)\\
& \hspace{0.5cm} = -\frac{\kappa^2}{4} \sum_{j=1}^N \mathrm{tr}((\underbrace{U_cU_c^\dagger U_c\cdots U_c^\dagger U_c}_{2m-1}-U_j \underbrace{U_c^\dagger U_cU_c^\dagger \cdots U_cU_c^\dagger}_{2m-1}U_i)\underbrace{U_c^\dagger \cdots U_cU_c^\dagger}_{2m-1})+(c.c.)\\
&\hspace{0.5cm} = -\frac{\kappa^2}{4} \sum_{j=1}^N \|U_j \underbrace{U_c^\dagger U_cU_c^\dagger \cdots U_cU_c^\dagger}_{2m-1}-\underbrace{U_cU_c^\dagger U_c\cdots U_c^\dagger U_c}_{2m-1}U_j^\dagger\|_F^2.
\end{aligned}
\end{align*}
\end{proof}
\begin{proposition} \label{P3.4}
Let $\{U_j \}$ be a global smooth solution of system \eqref{C-4-2} with the initial data $\{U_j^{in} \}$:
\[ U_j^{in \dagger} U_j^{in} = I_d, \quad j = 1, \cdots, N. \] 
Then, for $i = 1, \cdots, N,$
\[ \lim_{t \to \infty} \|U_i\underbrace{U_c^\dagger U_cU_c^\dagger \cdots U_cU_c^\dagger}_{2m-1}-\underbrace{U_cU_c^\dagger U_c\cdots U_c^\dagger U_c}_{2m-1}U_i^\dagger\|_F^2 = 0, \quad 
\lim_{t\rightarrow\infty} \| \dot{U}_j \|_F  =0. 
\]
\end{proposition}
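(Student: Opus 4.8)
The plan is to apply Barbalat's lemma (Lemma \ref{L2.1}) to the scalar function $f(t) := {\mathcal V}_m(U(t))$, exactly mirroring the argument used for $m=2$ in Corollary \ref{C3.2}. First I would observe that by \eqref{C-4-2-2} the potential ${\mathcal V}_m$ is bounded, and by Lemma \ref{L3.4} its time derivative is $\frac{d}{dt}{\mathcal V}_m(U) = -\frac{\kappa^2}{4}\sum_{j=1}^N \|U_j W - V U_j^\dagger\|_F^2 \le 0$, where for brevity I set $V := (U_cU_c^\dagger)^{m-1}U_c$ and $W := (U_c^\dagger U_c)^{m-1}U_c^\dagger$. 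Thus $f$ is non-increasing and bounded below, so $\lim_{t\to\infty} f(t)$ exists.

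To invoke Barbalat's lemma I must check that $f'$ is uniformly continuous, for which it suffices to verify $\sup_{t\ge 0}|f''(t)| < \infty$. Differentiating the identity of Lemma \ref{L3.4} once more expresses $f''$ as a finite sum of traces of products of the matrices $U_j$, $U_c$, their adjoints, together with exactly one derivative factor $\dot U_j$ or $\dot U_c$. Since each $U_j \in \bbu(d)$ satisfies $\|U_j\|_{op}=1$ and $\|U_j\|_F=\sqrt d$, and since the right-hand side of \eqref{C-4-2} is a product of unitary matrices so that $\|\dot U_j\|_F$, $\|\dot U_j\|_{op}$ (and hence $\|\dot U_c\|_F$, $\|\dot U_c\|_{op}$) are bounded uniformly in $t$ by a constant depending only on $\kappa$, $m$, $d$, the sub-multiplicativity estimates of Lemmas \ref{L2.2}, \ref{L2.3} and \ref{L2.4} bound every such trace term by term. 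This gives $\sup_{t\ge0}|f''(t)|<\infty$, and Barbalat's lemma yields $\lim_{t\to\infty} f'(t)=0$. Because the sum defining $f'$ consists of non-negative summands, each one must vanish in the limit, which is precisely the first assertion.

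For the second assertion I would avoid a separate Barbalat argument and instead relate $\|\dot U_j\|_F$ directly to the quantity just shown to vanish. With $V$ and $W$ as above one checks $W = V^\dagger$, and equation \eqref{C-4-2} reads $\dot U_j = \frac{\kappa}{2}(V - U_j W U_j)$. Multiplying on the right by $U_j^\dagger$ and using $U_j U_j^\dagger = I_d$ gives $\dot U_j U_j^\dagger = \frac{\kappa}{2}(V U_j^\dagger - U_j W)$. Since right multiplication by the unitary $U_j^\dagger$ preserves the Frobenius norm, I obtain $\|\dot U_j\|_F = \|\dot U_j U_j^\dagger\|_F = \frac{\kappa}{2}\|U_j W - V U_j^\dagger\|_F$, and the norm on the right is exactly the quantity in the first assertion. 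Hence $\lim_{t\to\infty}\|\dot U_j\|_F = 0$ follows immediately.

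The main obstacle is the uniform bound on $f''$ in the second paragraph: it is conceptually routine but requires care in expanding $\frac{d}{dt}\|U_j W - V U_j^\dagger\|_F^2$ into its constituent traces (each a degree-$4m$ monomial in the matrix entries carrying a single derivative) and checking that the norm estimates apply to each. The genuinely new ingredient beyond the $m=2$ case is the observation $\|\dot U_j\|_F = \frac{\kappa}{2}\|U_j W - V U_j^\dagger\|_F$, which collapses the second assertion into the first by unitarity rather than requiring a fresh limit argument.
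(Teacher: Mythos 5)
Your proposal is correct and follows essentially the same route as the paper: Barbalat's lemma applied to ${\mathcal V}_m$ (bounded by \eqref{C-4-2-2}, non-increasing by Lemma \ref{L3.4}, with the uniform bound on the second derivative deferred to the same term-by-term trace estimates as in Corollary \ref{C3.2}), followed by the observation that each non-negative summand of $-\frac{d}{dt}{\mathcal V}_m$ must vanish. The unitarity identity $\|\dot U_j\|_F = \frac{\kappa}{2}\|U_j W - V U_j^\dagger\|_F$ that you present as the new ingredient is precisely the paper's step (iii), so the two arguments coincide in every essential respect.
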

\begin{proof}
\noindent (i)~The first assertion follows from the gradient flow formulation (Proposition \ref{P3.2}) and $U_j \in \bbu(d)$. \newline

\noindent (ii)~We use the boundedness of ${\mathcal V}_2$ (see \eqref{C-4-2-2}) and Lemma \ref{L3.4} to see
\begin{equation} \label{C-4-2-3}
 \exists~\lim_{t \to \infty} {\mathcal V}_m(U).
\end{equation} 
Note that
\[
\frac{d}{dt} {\mathcal V}_m(U) = -\frac{\kappa^2}{4}\sum_{i=1}^N \|U_i\underbrace{U_c^\dagger U_cU_c^\dagger \cdots U_cU_c^\dagger}_{2m-1}-\underbrace{U_cU_c^\dagger U_c\cdots U_c^\dagger U_c}_{2m-1}U_i^\dagger\|_F^2.
\]
We claim:
\begin{equation} \label{C-4-2-4}
 \sup_{0 \leq t < \infty} \Big| \frac{d^2}{dt^2} {\mathcal V}_m(U)  \Big| < \infty.
 \end{equation}
By \eqref{C-4-2-3} and \eqref{C-4-2-4}, we can apply Babalat's lemma to get the desired estimate:
\[  \lim_{t \to \infty} \frac{d}{dt} {\mathcal V}_m(U) = 0. \]
For the proof of claim \eqref{C-4-2-4}, it is sufficient to prove the uniform boundedness of
\[
\frac{d^2}{dt^2}\mathcal{V}_m(U).
\]
This proof is very similar to the proof of Corollary \ref{C3.2}, so we will omit. From this result, we have
\[
\lim_{t\rightarrow\infty}\frac{d}{dt} {\mathcal V}_m(U) = -\lim_{t\rightarrow\infty}\frac{\kappa^2}{4}\sum_{i=1}^N \|U_i\underbrace{U_c^\dagger U_cU_c^\dagger \cdots U_cU_c^\dagger}_{2m-1}-\underbrace{U_cU_c^\dagger U_c\cdots U_c^\dagger U_c}_{2m-1}U_i^\dagger\|_F^2=0.
\]
So we have
\[
\lim_{t\rightarrow\infty}\frac{\kappa^2}{4} \|U_i\underbrace{U_c^\dagger U_cU_c^\dagger \cdots U_cU_c^\dagger}_{2m-1}-\underbrace{U_cU_c^\dagger U_c\cdots U_c^\dagger U_c}_{2m-1}U_i^\dagger\|_F^2=0.
\]
for all $i=1, 2, \cdots, N$.\newline

\noindent (iii)~From the relation:
\[
\dot{U}_j = \frac{\kappa}{2}(\underbrace{U_cU_c^\dagger U_c\cdots U_c^\dagger U_c}_{2m-1}U_j^\dagger -U_j\underbrace{U_c^\dagger U_cU_c^\dagger \cdots U_cU_c^\dagger}_{2m-1})U_j,
\] 
we can transform above limit as follows:
\[
\|\dot{U}_j\|_F^2=\frac{\kappa^2}{4}\|U_j \underbrace{U_c^\dagger U_cU_c^\dagger \cdots U_cU_c^\dagger}_{2n-1}-\underbrace{U_cU_c^\dagger U_c\cdots U_c^\dagger U_c}_{2m-1}U_j^\dagger\|_F^2\rightarrow0\quad\mbox{as}\quad t\rightarrow \infty.
\]

\end{proof}

\begin{lemma} \label{L3.6}
Let $\{U_j\}$ be a global solution of the system \eqref{C-4} with $m=2^k$. Then we have 
\begin{align*}
\begin{aligned}
\frac{dR^2}{dt} &=\frac{\kappa}{2N}\sum_{i=1}^N \left(\|(U_cU_i^\dagger -U_iU_c^\dagger)\underbrace{U_cU_c^\dagger \cdots U_c}_{2^k-1}\|_F^2 \right. \\
&\hspace{2.5cm} \left. +\sum_{p=1}^{k}\frac{1}{2^p}\|\underbrace{U_cU_c^\dagger U_c\cdots U_c^\dagger}_{2^{k}}-\underbrace{U_cU_c^\dagger U_c\cdots U_c^\dagger}_{2^{k}-2^p}U_i\underbrace{U_c^\dagger \cdots U_c}_{2^p}U_i^\dagger\|_F^2\right).
\end{aligned}
\end{align*}
where $R^2 = \mbox{tr}(U_c U_c^{\dagger})$.
\end{lemma}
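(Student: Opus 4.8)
The plan is to differentiate $R^2=\mathrm{tr}(U_cU_c^\dagger)$ along \eqref{C-4-2}, push everything down to traces of powers of the two Hermitian matrices $S:=U_cU_c^\dagger$ and $\Sigma:=U_c^\dagger U_c$, and then recognize the claimed right-hand side as a weighted telescoping sum. Since $R^2=\mathrm{tr}(S)$ we have $\frac{dR^2}{dt}=2\,\mathrm{Re}\,\mathrm{tr}(\dot U_cU_c^\dagger)$; inserting $\dot U_c=\frac1N\sum_i\dot U_i$ and the mean-field form of \eqref{C-4-2} (so that $\underbrace{U_cU_c^\dagger\cdots U_c}_{2m-1}=S^{m-1}U_c$), and using $\mathrm{tr}(S^{m-1}U_cU_c^\dagger)=\mathrm{tr}(S^m)$ together with cyclicity of the trace, yields
\[
\frac{dR^2}{dt}=\frac{\kappa}{N}\sum_{i=1}^N\Big(\mathrm{tr}(S^m)-\mathrm{Re}\,\mathrm{tr}\big(U_iU_c^\dagger S^{m-1}U_iU_c^\dagger\big)\Big)=:\frac{\kappa}{N}\sum_{i=1}^N D_i .
\]
It therefore suffices to prove the pointwise identity $2D_i=\|KW\|_F^2+\sum_{p=1}^k 2^{-p}\|F-T_p\|_F^2$, where $K:=U_cU_i^\dagger-U_iU_c^\dagger$, $W:=S^{m/2-1}U_c$, $F:=S^{m/2}$, and $T_p$ denotes the twisted matrix in the $p$-th summand; recall $m=2^k$.

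The engine of the argument is the one-parameter family
\[
\mathcal A_q:=\mathrm{Re}\,\mathrm{tr}\big(S^{\,m-q}\,U_i\Sigma^{q}U_i^\dagger\big),\qquad q=0,1,\dots,m,
\]
each of which is in fact real (a trace of a product of two Hermitian matrices), with boundary values $\mathcal A_0=\mathcal A_m=\mathrm{tr}(S^m)$ obtained from $U_i^\dagger U_i=I_d$ and $\mathrm{tr}(S^a)=\mathrm{tr}(\Sigma^a)$. Writing $T_p=S^{(m-2^p)/2}U_i\Sigma^{2^{p-1}}U_i^\dagger$, I would expand $\|F-T_p\|_F^2=\|F\|_F^2-2\,\mathrm{Re}\,\mathrm{tr}(FT_p)+\|T_p\|_F^2$ and identify the three pieces, using only cyclicity, Hermiticity of $S,\Sigma$, and $U_i^\dagger U_i=I_d$, as
\[
\|F\|_F^2=\mathcal A_0,\qquad \mathrm{Re}\,\mathrm{tr}(FT_p)=\mathcal A_{2^{p-1}},\qquad \|T_p\|_F^2=\mathcal A_{2^{p}},
\]
so that $\|F-T_p\|_F^2=\mathcal A_0-2\mathcal A_{2^{p-1}}+\mathcal A_{2^{p}}$. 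This second-difference structure is precisely what makes the weighted sum collapse: after reindexing,
\[
\sum_{p=1}^k 2^{-p}\|F-T_p\|_F^2=\big(1-2^{-k}\big)\mathcal A_0-\sum_{j=0}^{k-1}2^{-j}\mathcal A_{2^{j}}+\sum_{j=1}^{k}2^{-j}\mathcal A_{2^{j}}=\mathcal A_0-\mathcal A_1,
\]
where the interior terms telescope and $\mathcal A_{2^k}=\mathcal A_m=\mathcal A_0$ is used at the top end.

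It remains to treat the skew-Hermitian term. Since $K^\dagger=-K$ one has $\|KW\|_F^2=-\mathrm{tr}(W^\dagger K^2W)$, and expanding
\[
K^2=U_cU_i^\dagger U_cU_i^\dagger-S-U_i\Sigma U_i^\dagger+U_iU_c^\dagger U_iU_c^\dagger
\]
(the term $U_cU_i^\dagger U_iU_c^\dagger$ collapsing to $S$ via $U_i^\dagger U_i=I_d$), cyclicity identifies the four resulting traces as $\mathcal A_0$, $\mathcal A_1$, and a conjugate pair summing to $2\,\mathrm{Re}\,\mathrm{tr}(U_iU_c^\dagger S^{m-1}U_iU_c^\dagger)$; hence
\[
\|KW\|_F^2=\mathcal A_0+\mathcal A_1-2\,\mathrm{Re}\,\mathrm{tr}\big(U_iU_c^\dagger S^{m-1}U_iU_c^\dagger\big)=2D_i-(\mathcal A_0-\mathcal A_1).
\]
Adding the last two displays gives $\|KW\|_F^2+\sum_{p=1}^k 2^{-p}\|F-T_p\|_F^2=2D_i$; substituting into the first display and summing over $i$ yields the lemma. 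The only genuinely nonroutine point is spotting the second-difference identity $\mathcal A_0-2\mathcal A_{2^{p-1}}+\mathcal A_{2^{p}}$ and verifying the three trace identifications, and this is exactly where the dyadic hypothesis $m=2^k$ is used, since only the scales $q=2^p$ enter and the telescoping relies on $\mathcal A_{2^k}=\mathcal A_m$. The remaining work is pure bookkeeping with cyclicity of the trace and the unitarity relation $U_i^\dagger U_i=I_d$.
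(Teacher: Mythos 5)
Your proof is correct and rests on exactly the same identities as the paper's: your second-difference relation $\|F-T_p\|_F^2=\mathcal A_0-2\mathcal A_{2^{p-1}}+\mathcal A_{2^p}$ is just the paper's recursion $2\mathcal A_p=\|F-T_p\|_F^2+\mathcal A_{p+1}$ rewritten (with the paper's $\mathcal A_p$ equal to your $\mathcal A_0-\mathcal A_{2^{p-1}}$), your telescoping replaces their inductive unwinding, and your $K^2$ expansion reproduces their completion of the square for the skew-Hermitian term. The only difference is presentational --- you verify the claimed decomposition backwards rather than deriving it forwards --- so this counts as essentially the same argument.
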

\begin{proof}
Note that
\begin{align*}
\frac{d}{dt}\mathrm{tr}(U_cU_c^\dagger)&=\mathrm{tr}(\dot{U}_cU_c^\dagger)+(c.c.)\\
&=\frac{\kappa}{2N}\sum_{i=1}^N\left(\mathrm{tr}((\underbrace{U_cU_c^\dagger U_c\cdots U_c^\dagger U_c}_{2^{k+1}-1}-U_i\underbrace{U_c^\dagger U_cU_c^\dagger \cdots U_cU_c^\dagger}_{2^{k+1}-1}U_i)U_c^\dagger)+(c.c.)\right).
\end{align*}
Here we have
\begin{align}
\begin{aligned} \label{C-4-2-5}
& \mathrm{tr}(\underbrace{U_cU_c^\dagger U_c \cdots U_c^\dagger U_cU_c^\dagger}_{2^{k+1}}-U_i \underbrace{U_c^\dagger U_cU_c^\dagger \cdots U_cU_c^\dagger}_{2^{k+1}-1}U_iU_c^\dagger)+(c.c.) \\
& \hspace{0.1cm} =\mathrm{tr}(\underbrace{U_cU_c^\dagger U_c\cdots U_c^\dagger}_{2^{k+1}-2}(2U_cU_c^\dagger -U_iU_c^\dagger U_iU_c^\dagger-U_cU_i^\dagger U_cU_i^\dagger))\\
&\hspace{0.1cm}  =\mathrm{tr}(\underbrace{U_cU_c^\dagger U_c\cdots U_c^\dagger}_{2^{k+1}-2}((U_cU_i^\dagger -U_iU_c^\dagger)(U_cU_i^\dagger-U_iU_c^\dagger)^\dagger+U_cU_c^\dagger-U_iU_c^\dagger U_cU_i^\dagger))\\
&\hspace{0.1cm}  =\|(U_cU_i^\dagger -U_iU_c^\dagger)\underbrace{U_cU_c^\dagger \cdots U_c}_{2^k-1}\|_F^2+\mathrm{tr}(\underbrace{U_cU_c^\dagger U_c\cdots U_c^\dagger}_{2^{k+1}}-\underbrace{U_cU_c^\dagger U_c\cdots U_c^\dagger}_{2^{k+1}-2}U_iU_c^\dagger U_cU_i^\dagger).
\end{aligned}
\end{align}
Now we define
\[
\mathcal{A}_{p} :=\mathrm{tr}(\underbrace{U_cU_c^\dagger U_c\cdots U_c^\dagger}_{2^{k+1}}-\underbrace{U_cU_c^\dagger U_c\cdots U_c^\dagger}_{2^{k+1}-2^p}U_i\underbrace{U_c^\dagger \cdots U_c}_{2^p}U_i^\dagger).
\]
Next, we derive a recursive relation between $\mathcal{A}_{p}$ and $\mathcal{A}_{p+1}$ when $1\leq p< k$:
\begin{align}
\begin{aligned}\label{C-5}
2\mathcal{A}_{p} &=2\mathrm{tr}(\underbrace{U_cU_c^\dagger U_c\cdots U_c^\dagger}_{2^{k+1}}-\underbrace{U_cU_c^\dagger U_c\cdots U_c^\dagger}_{2^{k+1}-2^p}U_i\underbrace{U_c^\dagger \cdots U_c}_{2^p}U_i^\dagger)\\
&=\|\underbrace{U_cU_c^\dagger U_c\cdots U_c^\dagger}_{2^{k}}-\underbrace{U_cU_c^\dagger U_c\cdots U_c^\dagger}_{2^{k}-2^p}U_i\underbrace{U_c^\dagger \cdots U_c}_{2^p}U_i^\dagger\|_F^2\\
&+
\mathrm{tr}(\underbrace{U_cU_c^\dagger U_c\cdots U_c^\dagger}_{2^{k+1}}-\underbrace{U_cU_c^\dagger U_c\cdots U_c^\dagger}_{2^{k}-2^p}U_i\underbrace{U_c^\dagger \cdots U_c}_{2^{p+1}}U_i^\dagger \underbrace{U_cU_c^\dagger U_c\cdots U_c^\dagger}_{2^{k}-2^p})\\
&=\|\underbrace{U_cU_c^\dagger U_c\cdots U_c^\dagger}_{2^{k}}-\underbrace{U_cU_c^\dagger U_c\cdots U_c^\dagger}_{2^{k}-2^p}U_i\underbrace{U_c^\dagger\cdots U_c}_{2^p}U_i^\dagger\|_F^2 \\
&+\mathrm{tr}(\underbrace{U_cU_c^\dagger U_c\cdots U_c^\dagger}_{2^{k+1}}-\underbrace{U_cU_c^\dagger U_c\cdots U_c^\dagger}_{2^{k+1}-2^{p+1}}U_i\underbrace{U_c^\dagger \cdots U_c}_{2^{p+1}}U_i^\dagger)\\
&=\|\underbrace{U_cU_c^\dagger U_c\cdots U_c^\dagger}_{2^{k}}-\underbrace{U_cU_c^\dagger U_c\cdots U_c^\dagger}_{2^{k}-2^p}U_i\underbrace{U_c^\dagger \cdots U_c}_{2^p}U_i^\dagger\|_F^2+\mathcal{A}_{p+1}.
\end{aligned}
\end{align}
On the other hand, $\mathcal{A}_k$ can be estimated as follows.
\begin{align}
\begin{aligned}\label{C-6}
\mathcal{A}_{k}&=\mathrm{tr}(\underbrace{U_cU_c^\dagger U_c\cdots U_c^\dagger}_{2^{k+1}}-\underbrace{U_c\cdots U_c^\dagger}_{2^{k}}U_i\underbrace{U_c^\dagger \cdots U_c}_{2^k}U_i^\dagger)\\
&=\frac{1}{2}\|\underbrace{U_cU_c^\dagger \cdots U_c^\dagger}_{2^{k}}U_i-U_i\underbrace{U_c^\dagger U_c\cdots U_c}_{2^{k}}\|_F^2 \\
&
=\frac{1}{2}\|\underbrace{U_cU_c^\dagger \cdots U_c^\dagger}_{2^{k}}-U_i\underbrace{U_c^\dagger U_c\cdots U_c}_{2^{k}}U_i^\dagger \|_F^2.
\end{aligned}
\end{align}
If we combine \eqref{C-5} and \eqref{C-6}, $\mathcal{A}_1$ can be calculated inductively.
\begin{align}
\begin{aligned} \label{C-6-1}
\mathcal{A}_1&=\frac{1}{2^1}\mathcal{A}_2+\frac{1}{2^1}\|\underbrace{U_cU_c^\dagger U_c\cdots U_c^\dagger}_{2^{k}}-\underbrace{U_cU_c^\dagger U_c\cdots U_c^\dagger}_{2^{k}-2^1}U_i\underbrace{U_c^\dagger \cdots U_c}_{2^1}U_i^\dagger\|_F^2\\
&=\cdots\\
&=\frac{1}{2^{k-1}}\mathcal{A}_k+\sum_{p=1}^{k-1}\frac{1}{2^p}\|\underbrace{U_cU_c^\dagger U_c\cdots U_c^\dagger}_{2^{k}}-\underbrace{U_cU_c^\dagger U_c\cdots U_c^\dagger}_{2^{k}-2^p}U_i\underbrace{U_c^\dagger \cdots U_c}_{2^p}U_i^\dagger \|_F^2\\
&=\sum_{p=1}^{k}\frac{1}{2^p}\|\underbrace{U_cU_c^\dagger U_c\cdots U_c^\dagger}_{2^{k}}-\underbrace{U_cU_c^\dagger U_c\cdots U_c^\dagger}_{2^{k}-2^p}U_i\underbrace{U_c^\dagger \cdots U_c}_{2^p}U_i^\dagger\|_F^2.
\end{aligned}
\end{align}
Finally, we combine \eqref{C-4-2-5} and \eqref{C-6-1} to get 
\begin{align*}
\begin{aligned}
&\|(U_cU_i^\dagger-U_iU_c^\dagger)\underbrace{U_cU_c^\dagger \cdots U_c}_{2^k-1}\|_F^2+\mathrm{tr}(\underbrace{U_cU_c^\dagger U_c\cdots U_c^\dagger}_{2^{k+1}}-\underbrace{U_cU_c^\dagger U_c\cdots U_c^\dagger}_{2^{k+1}-2}U_iU_c^\dagger U_cU_i^\dagger)\\
&\hspace{0.5cm} =\|(U_cU_i^\dagger-U_iU_c^\dagger)\underbrace{U_cU_c^\dagger \cdots U_c}_{2^k-1}\|_F^2+\mathcal{A}_2\\
&\hspace{0.5cm}  =\|(U_cU_i^\dagger-U_iU_c^\dagger)\underbrace{U_cU_c^\dagger \cdots U_c}_{2^k-1}\|_F^2 \\
&\hspace{0.5cm}+\sum_{p=1}^{k}\frac{1}{2^p}\|\underbrace{U_cU_c^\dagger U_c\cdots U_c^\dagger}_{2^{k}}-\underbrace{U_cU_c^\dagger U_c\cdots U_c^\dagger}_{2^{k}-2^p}U_i\underbrace{U_c^\dagger \cdots U_c}_{2^p}U_i^\dagger\|_F^2.
\end{aligned}
\end{align*}
From this, we have
\begin{align*}
\begin{aligned}
\frac{d}{dt}\mathrm{tr}(U_cU_c^\dagger) &=\frac{\kappa}{2N}\sum_{i=1}^N\left(\|(U_cU_i^\dagger-U_iU_c^\dagger)\underbrace{U_cU_c^\dagger\cdots U_c}_{2^k-1}\|_F^2  \right. \\
& \hspace{0.5cm} \left. + \sum_{p=1}^{k} \frac{1}{2^p} \|\underbrace{U_c U_c^\dagger U_c \cdots U_c^\dagger}_{2^{k}}-\underbrace{U_c U_c^\dagger U_c\cdots U_c^\dagger}_{2^{k}-2^p}U_i\underbrace{U_c^\dagger \cdots U_c}_{2^p}U_i^\dagger \|_F^2\right).
\end{aligned}
\end{align*}
\end{proof}
\begin{proposition} \label{P3.5}
Let $\{U_j\}$ be a global solution of system \eqref{C-4} with $m=2^k$. Then, one has 
\begin{align*}
&(i)~\lim_{t \to \infty} \|(U_cU_i^\dagger -U_iU_c^\dagger)\underbrace{U_cU_c^\dagger \cdots U_c}_{2^k-1}\|_F = 0. \\
&(ii)~\lim_{t \to \infty} \|\underbrace{U_cU_c^\dagger U_c\cdots U_c^\dagger}_{2^{k}}-\underbrace{U_cU_c^\dagger U_c\cdots U_c^\dagger}_{2^{k}-2^p}U_i\underbrace{U_c^\dagger \cdots U_c}_{2^p}U_i^*\|_F = 0,
\end{align*}
for all $p=1, 2, \cdots, k$, $i=1, 2, \cdots, N$.
\end{proposition}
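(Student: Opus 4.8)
The plan is to apply Barbalat's lemma (Lemma \ref{L2.1}) to the scalar order parameter $R^2 = \mathrm{tr}(U_c U_c^\dagger) = \|U_c\|_F^2$, in exactly the spirit of Corollary \ref{C3.2} and Proposition \ref{P2.4}. First I would read off from Lemma \ref{L3.6} that $\frac{dR^2}{dt}$ is a finite sum of squared Frobenius norms, each carrying a strictly positive coefficient ($\frac{\kappa}{2N}$ for the first term and $\frac{\kappa}{2N}\cdot\frac{1}{2^p}$ for the $p$-th term). Hence $\frac{dR^2}{dt}\geq 0$, so $R^2$ is monotonically nondecreasing; and since \eqref{B-6-1} gives $R = \|U_c\|_F \leq \sqrt{d}$, the quantity $R^2$ is bounded above by $d$. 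A bounded monotone function converges, so $\lim_{t\to\infty} R^2(t)$ exists.

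Second, to invoke Barbalat's lemma I must verify that $\frac{dR^2}{dt}$ is uniformly continuous, for which it suffices to establish the uniform bound
\[
\sup_{0\leq t<\infty}\Big|\frac{d^2R^2}{dt^2}\Big|<\infty.
\]
This is the main technical step. Differentiating the expression in Lemma \ref{L3.6} once more in $t$ produces a fixed number of trace terms, each a product of length $2^{k+1}$ in the matrices $U_i, U_i^\dagger, U_c, U_c^\dagger$ with exactly one factor replaced by its time derivative $\dot{U}_i$ or $\dot{U}_c$. Since every $U_i$ is unitary we have $\|U_i\|_{op}=1$ and $\|U_i\|_F=\sqrt{d}$, while Lemma \ref{L2.3} gives $\|U_c\|_{op}\leq 1$ and \eqref{B-6-1} gives $\|U_c\|_F\leq\sqrt{d}$; the governing equation \eqref{C-4-2} together with these bounds yields uniform bounds on $\|\dot{U}_i\|_F$, and hence on $\|\dot{U}_c\|_F$. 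Applying Lemmas \ref{L2.2} and \ref{L2.4} factor by factor, every such trace term, and therefore $\frac{d^2R^2}{dt^2}$, is bounded uniformly in $t$. This is the same mechanism already used in Corollary \ref{C3.2}, so only bookkeeping over the word length $2^{k+1}$ is involved.

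Finally, Barbalat's lemma applied to $f = R^2$ gives $\lim_{t\to\infty}\frac{dR^2}{dt}=0$. Because the right-hand side of Lemma \ref{L3.6} is a finite sum of nonnegative terms with strictly positive coefficients, the vanishing of the total forces each summand to tend to zero, which is precisely assertions (i) and (ii) for every $p=1,\cdots,k$ and $i=1,\cdots,N$. The only genuine obstacle is the uniform second-derivative bound, but as noted it is routine: each term is a fixed-degree trace polynomial in matrices that are uniformly bounded in both operator and Frobenius norm, so no new idea beyond the estimates of Lemmas \ref{L2.2}--\ref{L2.4} is required.
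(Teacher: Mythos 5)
Your proposal follows exactly the same route as the paper: use Lemma \ref{L3.6} to see that $R^2$ is nondecreasing and bounded (hence convergent), establish the uniform bound on $\frac{d^2R^2}{dt^2}$ by the norm estimates of Lemmas \ref{L2.2}--\ref{L2.4} together with the uniform boundedness of $\dot U_i$, invoke Barbalat's lemma to conclude $\frac{dR^2}{dt}\to 0$, and then deduce that each nonnegative summand vanishes. The argument is correct and matches the paper's proof, with your write-up actually spelling out the bookkeeping more explicitly than the original.
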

\begin{proof}
It follows from Lemma \ref{L3.6} that $R$ is non-decreasing and bounded. Hence, $R$ tends to $R^{\infty}$ as $t \to \infty$. On the other hand, we use  the uniform boundedness of $\dot{U}_i$ and
\begin{align*}
\begin{aligned}
\frac{dR^2}{dt} &=\frac{\kappa}{2N}\sum_{i=1}^N\left(\|(U_cU_i^\dagger -U_iU_c^\dagger)\underbrace{U_cU_c^\dagger \cdots U_c}_{2^k-1}\|_F^2 \right. \\
& \hspace{1cm} \left.+\sum_{p=1}^{k}\frac{1}{2^p}\|\underbrace{U_cU_c^\dagger U_c\cdots U_c^\dagger}_{2^{k}}-\underbrace{U_cU_c^\dagger U_c\cdots U_c^\dagger}_{2^{k}-2^p}U_i\underbrace{U_c^\dagger \cdots U_c}_{2^p}U_i^\dagger \|_F^2\right)
\end{aligned}
\end{align*}
to show 
\[ \sup_{0 \leq t < \infty} \Big| \frac{d^2}{dt^2} R^2 \Big| < \infty. \]
Then, by Babalat's lemma, one has 
\[ \lim_{t \to \infty} \frac{dR^2}{dt}  = 0. \]
This implies the desired estimates.
\end{proof}

\section{A gradient flow formulation with a polynomial potential} \label{sec:4}
\setcounter{equation}{0}
In this section, we continue the study on the Lohe matrix model with higher-order couplings. In previous section, we considered the monomial potential function so that only one pair of coupling terms is involved in the coupling. In the sequel, we consider a polynomial potential function. \newline

Consider the Lohe matrix model in a mean-field form:
\begin{align}
\begin{aligned} \label{D-1}
&{\mathrm i} \dot{U}_j U_j^{\dagger} = \sum_{n=1}^m\frac{{\mathrm i} \kappa_n}{2}(\underbrace{U_cU_c^\dagger U_c\cdots U_c^\dagger U_c}_{2n-1} U_j^{\dagger}-U_j \underbrace{U_c^\dagger U_cU_c^\dagger \cdots U_cU_c^\dagger}_{2n-1}),\\
& U_i(0) =U_i^{in} \in\mathbb{U}(d).
\end{aligned}
\end{align}
First, we study a conservation law.
\begin{lemma} \label{L4.1}
Let $\{U_j\}$ be a global solution of  system \eqref{D-1}. Then  one has 
\[ \frac{d}{dt} (U^\dagger_j U_j) = 0, \quad t > 0,~~ j = 1, \cdots, N. \] 
\end{lemma}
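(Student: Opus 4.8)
The plan is to reproduce, verbatim in spirit, the argument already used for Lemma~\ref{L3.1}: I would rewrite the right-hand side of \eqref{D-1} so that, after dividing by the leading ${\mathrm i}$, it becomes a \emph{skew-hermitian} matrix, and then conjugate and add. Concretely, introduce the shorthand
\[
P_n := \underbrace{U_cU_c^\dagger U_c\cdots U_c^\dagger U_c}_{2n-1}, \qquad Q_n := \underbrace{U_c^\dagger U_cU_c^\dagger \cdots U_cU_c^\dagger}_{2n-1},
\]
and collect the coupling into a single matrix
\[
{\mathcal C}_j := \sum_{n=1}^m \frac{{\mathrm i}\kappa_n}{2}\bigl(P_n U_j^\dagger - U_j Q_n\bigr),
\]
so that \eqref{D-1} reads ${\mathrm i}\dot U_j U_j^\dagger = {\mathcal C}_j$, equivalently $\dot U_j U_j^\dagger = -{\mathrm i}{\mathcal C}_j$.

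The one computational fact that drives everything is that the two alternating products of odd length $2n-1$ are hermitian conjugates of each other,
\[
P_n^\dagger = Q_n, \qquad Q_n^\dagger = P_n.
\]
This is just the reversal-and-dagger identity $(A_1\cdots A_r)^\dagger = A_r^\dagger\cdots A_1^\dagger$: reversing the word $U_cU_c^\dagger\cdots U_c$ and daggering each factor turns a string that begins and ends with $U_c$ into one that begins and ends with $U_c^\dagger$, which is precisely $Q_n$. Using this together with $\kappa_n\in\bbr$, I would then compute
\[
{\mathcal C}_j^\dagger = \sum_{n=1}^m \frac{-{\mathrm i}\kappa_n}{2}\bigl(U_j P_n^\dagger - Q_n^\dagger U_j^\dagger\bigr) = \sum_{n=1}^m \frac{-{\mathrm i}\kappa_n}{2}\bigl(U_j Q_n - P_n U_j^\dagger\bigr) = {\mathcal C}_j,
\]
so that ${\mathcal C}_j$ is hermitian and hence $\dot U_j U_j^\dagger = -{\mathrm i}{\mathcal C}_j$ is skew-hermitian.

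To finish, I would take the hermitian conjugate of $\dot U_j U_j^\dagger = -{\mathrm i}{\mathcal C}_j$ to get $U_j\dot U_j^\dagger = {\mathrm i}{\mathcal C}_j = -\dot U_j U_j^\dagger$, and add the two identities to obtain $\frac{d}{dt}(U_j U_j^\dagger) = \dot U_j U_j^\dagger + U_j\dot U_j^\dagger = 0$. Since the initial datum satisfies $U_j^{in}(U_j^{in})^\dagger = I_d$, this yields $U_jU_j^\dagger \equiv I_d$, whence each $d\times d$ matrix $U_j$ is invertible with $U_j^{-1} = U_j^\dagger$; consequently $U_j^\dagger U_j = U_j^{-1}U_j = I_d$ is constant as well, which is the asserted conservation of $U_j^\dagger U_j$. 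I do not expect a genuine obstacle here: the only point requiring care is the bookkeeping of the alternating products, namely verifying $P_n^\dagger = Q_n$ uniformly in $n$ for the odd length $2n-1$, since it is exactly this identity that makes every summand of the coupling skew-hermitian simultaneously.
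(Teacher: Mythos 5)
Your proof is correct and follows essentially the same route as the paper, which simply invokes the argument of Lemma~\ref{L3.1}: show the coupling ${\mathcal C}_j$ is hermitian (via $P_n^\dagger = Q_n$, which you verify correctly), so that $\dot U_j U_j^\dagger$ is skew-hermitian, then conjugate and add. You are in fact slightly more careful than the paper, since Lemma~\ref{L3.1} conserves $U_jU_j^\dagger$ while Lemma~\ref{L4.1} asserts conservation of $U_j^\dagger U_j$, and your final step using $U_j^{in}\in\bbu(d)$ and invertibility bridges that gap cleanly.
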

\begin{proof} The proof is the same as that of Lemma \ref{L3.1}. Hence we omit its proof.
\end{proof}
For $U_j^{\dagger} U_j = U_j U_j^{\dagger}= I_d$, system \eqref{D-1} becomes 
\begin{equation} \label{D-1-1}
\dot{U}_j  = \sum_{n=1}^m\frac{\kappa_n}{2}\Big(\underbrace{U_cU_c^\dagger U_c\cdots U_c^\dagger U_c}_{2n-1} -U_j \underbrace{U_c^\dagger U_cU_c^\dagger \cdots U_cU_c^\dagger}_{2n-1}  U_j \Big).
\end{equation}
From now on, we assume
\[ U^{\dagger}_j U_j = U_j U_j^{\dagger} = I_d, \quad j = 1, \cdots, N. \]
Next, we study the gradient flow formulation of \eqref{D-1}. For this, we consider a polynomial potential:
\begin{equation*} \label{D-2}
\mathcal{V}_{poly}:=-N\mathrm{tr}(f(U_cU_c^\dagger)), \quad f(A) :=\frac{\kappa_1}{2}A+\frac{\kappa_2}{4}A^2+\cdots+\frac{\kappa_m}{2m}A^m.
\end{equation*}
Then, $\mathcal{V}_{poly}$ is an analytic function and  since 
\[
\left| \mbox{tr} \left[( U_cU_c^\dagger)^n\right] \right| \leq  {\underbrace{\left\|U_cU_c^\dagger \cdots  \right\|_F^2}_{n-\mbox{times of } U_c}}\leq \left( \|U_c\|_{op}^{n-1}\cdot \|U_c\|_F\right)^2\leq d.
\]
it is easy to see 
\begin{equation*} \label{D-1-2}
|\mathcal{V}_{poly}| \leq \frac{N}{2}\left(\kappa_1+\kappa_2+\cdots+\kappa_m\right)d.
\end{equation*}

\begin{proposition}  \label{P4.1}
 System $\eqref{D-1-1}$ can be rewritten as a gradient flow with potential ${\mathcal V}_{poly}$:
\[  \dot{U}_j = -\left. \frac{\partial\mathcal{V}_{poly}}{\partial U_j}\right|_{T_{U_j} \mathbb{U}(d)}, \quad j = 1, \cdots, N. \]
\end{proposition}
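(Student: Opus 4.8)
The plan is to exploit the fact that the polynomial potential is, by construction, a finite linear combination of the monomial potentials already analyzed in Section \ref{sec:3}. Writing $\mathcal{V}_{poly} = \sum_{n=1}^{m} \mathcal{V}_n$, where each $\mathcal{V}_n(U) = -\frac{\kappa_n N}{2n}\mathrm{tr}((U_c U_c^\dagger)^n)$ is exactly the monomial potential of degree $n$ with coupling strength $\kappa_n$, I reduce the statement to the single-term results already established: Proposition \ref{P2.3} (the case $n=1$), Lemma \ref{L3.2} together with Proposition \ref{P3.2} (the case $n=2$), and their higher-order analogue Proposition \ref{P3.3} (the case $n \geq 3$).

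The key structural observation is linearity. The ambient gradient $U_j \mapsto \partial\mathcal{V}/\partial U_j\big|_{T_{U_j}M_{d,d}(\mathbb{C})}$ is computed componentwise in the real and imaginary parts $a_j^{kl}, b_j^{kl}$ and is therefore $\mathbb{R}$-linear in the potential; likewise the orthogonal projection $\pi_{U_j}: T_{U_j}M_{d,d}(\mathbb{C}) \to T_{U_j}\mathbb{U}(d)$, $AU_j \mapsto \frac{1}{2}(A-A^\dagger)U_j$, is linear. Hence the constrained gradient $\partial/\partial U_j|_{T_{U_j}\mathbb{U}(d)} = \pi_{U_j}\circ \partial/\partial U_j|_{T_{U_j}M_{d,d}(\mathbb{C})}$ commutes with finite sums, so
\[
\left.\frac{\partial \mathcal{V}_{poly}}{\partial U_j}\right|_{T_{U_j}\mathbb{U}(d)} = \sum_{n=1}^{m} \left.\frac{\partial \mathcal{V}_n}{\partial U_j}\right|_{T_{U_j}\mathbb{U}(d)}.
\]

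Carrying out the steps: first, I record the ambient gradient of each summand. Extending the $m=2$ computation of Lemma \ref{L3.2}, one has $\partial\mathcal{V}_n/\partial U_j\big|_{T_{U_j}M_{d,d}(\mathbb{C})} = -\kappa_n\,\underbrace{U_c U_c^\dagger \cdots U_c}_{2n-1}$, obtained by differentiating $\mathrm{tr}((U_cU_c^\dagger)^n)$ in the $a_j^{kl},b_j^{kl}$ coordinates, using $\partial U_c/\partial U_j = \frac{1}{N}I_d$ and cyclic invariance of the trace to collapse the $2n$ factors into the stated word of length $2n-1$. Second, I project each term via $\pi_{U_j}$ as in Proposition \ref{P3.2}: writing $W := \underbrace{U_c U_c^\dagger \cdots U_c}_{2n-1}$ and using $\pi(-\kappa_n W U_j^\dagger)U_j = -\frac{\kappa_n}{2}\big(W U_j^\dagger - U_j W^\dagger\big)U_j$, the identity $U_j U_j^\dagger = I_d$ yields $-\frac{\kappa_n}{2}\big(\underbrace{U_cU_c^\dagger\cdots U_c}_{2n-1} - U_j\underbrace{U_c^\dagger U_c\cdots U_c^\dagger}_{2n-1}U_j\big)$. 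Third, summing over $n$ and comparing with \eqref{D-1-1} shows the result is precisely $-\dot{U}_j$, which establishes the gradient-flow identity.

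I do not expect any genuine obstacle, since all the analytical work is already contained in Section \ref{sec:3}; the content of this proposition is the bookkeeping that the constrained gradient is additive over the monomial decomposition. The only point requiring mild care is confirming that the higher-degree ambient gradient formula used for $n \geq 3$ indeed produces the length-$(2n-1)$ word $U_cU_c^\dagger\cdots U_c$ with the correct scalar $-\kappa_n$; this is exactly the computation deferred in Proposition \ref{P3.3}, and for full self-containedness one would reproduce the componentwise differentiation there rather than rely on the "basically the same" remark.
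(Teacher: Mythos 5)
Your proof is correct and takes essentially the same route as the paper: the paper's own proof of Proposition \ref{P4.1} is a one-line deferral to the monomial computation of Lemma \ref{L3.2} and Proposition \ref{P3.2}, and your decomposition $\mathcal{V}_{poly}=\sum_{n}\mathcal{V}_n$ together with linearity of the ambient gradient and of the projection $\pi_{U_j}$ is exactly the bookkeeping that deferral presupposes. In fact you supply more detail than the paper does, correctly identifying the ambient gradient $-\kappa_n(U_cU_c^\dagger)^{n-1}U_c$ of each summand and carrying out the projection to recover the right-hand side of \eqref{D-1-1}.
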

\begin{proof}
The proof is basically the same as in the proof of Proposition \ref{P3.2}. Hence we omit its details.
\end{proof}

\begin{lemma} \label{L4.2}
Let $\{U_j \}$ be a global solution of system \eqref{D-1} with the initial data satisfying
\[ U_j^{in \dagger} U^{in}_j  = I_d, \quad j = 1, \cdots, N. \]
Then, one has 
\[
\frac{d}{dt} \mathcal{V}_{poly}= -\sum_{i=1}^N\left\| \sum_{n=1}^m \frac{\kappa_n}{2} \left((U_cU_c^\dagger)^{n-1}U_cU_i^\dagger-U_iU_c^\dagger(U_cU_c^\dagger)^{n-1}\right)\right\|_F^2.
\]
\end{lemma}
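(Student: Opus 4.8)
The plan is to bypass a direct differentiation of $\mathrm{tr}((U_cU_c^\dagger)^n)$ and instead read off the dissipation rate from the gradient-flow structure already established in Proposition \ref{P4.1}. Since $\{U_j\}$ solves $\dot{U}_j = -\frac{\partial \mathcal{V}_{poly}}{\partial U_j}\big|_{T_{U_j}\mathbb{U}(d)}$ and each velocity $\dot{U}_i$ is tangent to $\mathbb{U}(d)$ at $U_i$, the chain rule on the ambient real-analytic manifold $M_{d,d}(\mathbb{C})^N$ gives
\begin{align*}
\frac{d}{dt}\mathcal{V}_{poly} &= \sum_{i=1}^N \Big\langle \frac{\partial\mathcal{V}_{poly}}{\partial U_i}\Big|_{T_{U_i}M_{d,d}(\mathbb{C})},\, \dot{U}_i\Big\rangle_F \\
&= \sum_{i=1}^N \Big\langle \frac{\partial\mathcal{V}_{poly}}{\partial U_i}\Big|_{T_{U_i}\mathbb{U}(d)},\, \dot{U}_i\Big\rangle_F = -\sum_{i=1}^N \|\dot{U}_i\|_F^2,
\end{align*}
where the middle equality is valid precisely because $\dot{U}_i \in T_{U_i}\mathbb{U}(d)$, so pairing it against the ambient gradient coincides with pairing it against the orthogonal projection of that gradient onto $T_{U_i}\mathbb{U}(d)$. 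Thus the whole task reduces to rewriting $\|\dot{U}_i\|_F^2$ in the claimed form.

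Next I would use the conservation law of Lemma \ref{L4.1}, namely $U_i U_i^\dagger = U_i^\dagger U_i = I_d$, to observe that right multiplication by $U_i^\dagger$ is a Frobenius isometry, so that
\[
\|\dot{U}_i\|_F^2 = \mathrm{tr}(\dot{U}_i\dot{U}_i^\dagger) = \mathrm{tr}(\dot{U}_i U_i^\dagger U_i \dot{U}_i^\dagger) = \|\dot{U}_i U_i^\dagger\|_F^2.
\]
It then remains to compute $\dot{U}_i U_i^\dagger$ from \eqref{D-1-1}. Using the elementary identities
\[
\underbrace{U_cU_c^\dagger U_c\cdots U_c^\dagger U_c}_{2n-1} = (U_cU_c^\dagger)^{n-1}U_c, \qquad \underbrace{U_c^\dagger U_cU_c^\dagger \cdots U_cU_c^\dagger}_{2n-1} = U_c^\dagger(U_cU_c^\dagger)^{n-1},
\]
and multiplying \eqref{D-1-1} on the right by $U_i^\dagger$, while absorbing $U_iU_i^\dagger = I_d$ in the second summand, yields
\[
\dot{U}_i U_i^\dagger = \sum_{n=1}^m \frac{\kappa_n}{2}\Big((U_cU_c^\dagger)^{n-1}U_cU_i^\dagger - U_iU_c^\dagger(U_cU_c^\dagger)^{n-1}\Big).
\]
Substituting this into $-\sum_i \|\dot{U}_i\|_F^2 = -\sum_i \|\dot{U}_i U_i^\dagger\|_F^2$ reproduces exactly the asserted formula.

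Everything past the first display is routine once the dissipation identity is in hand; the only point deserving care is the replacement of the ambient gradient by its tangential projection, i.e. justifying $\frac{d}{dt}\mathcal{V}_{poly} = -\sum_i\|\dot{U}_i\|_F^2$. An alternative, more computational route that avoids this altogether would mimic the proof of Lemma \ref{L3.4}: writing $\mathcal{V}_{poly} = -N\sum_{n=1}^m \frac{\kappa_n}{2n}\mathrm{tr}((U_cU_c^\dagger)^n)$ and differentiating via cyclicity of the trace gives $\frac{d}{dt}\mathrm{tr}((U_cU_c^\dagger)^n) = 2n\,\mathrm{Re}\,\mathrm{tr}((U_cU_c^\dagger)^{n-1}\dot{U}_cU_c^\dagger)$, after which one inserts $N\dot{U}_c = \sum_j \dot{U}_j$ together with \eqref{D-1-1} and completes the square. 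I expect the main bookkeeping obstacle on that second route to be organizing the many cross terms (indexed by the double sum over the polynomial degrees) so that they reassemble into the single squared Frobenius norm, whereas the gradient-flow argument delivers this grouping for free.
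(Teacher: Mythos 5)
Your proof is correct, but it takes a genuinely different route from the paper. The paper proves Lemma \ref{L4.2} by a self-contained direct computation: it differentiates $\mathrm{tr}(f(U_cU_c^\dagger))$ term by term, substitutes the mean-field equation for $\dot U_c$, and then reorganizes the resulting double sum over the polynomial degrees $(l,n)$ so that the cross terms reassemble into the single squared Frobenius norm --- exactly the ``bookkeeping obstacle'' you flag in your closing remark. You instead invoke the gradient-flow structure of Proposition \ref{P4.1} to obtain the dissipation identity $\frac{d}{dt}\mathcal{V}_{poly}=-\sum_i\|\dot U_i\|_F^2$ at once, and then only need the (correct) elementary observations that $\dot U_iU_i^\dagger$ is skew-hermitian, that right multiplication by the unitary $U_i^\dagger$ is a Frobenius isometry, and that $\dot U_iU_i^\dagger$ equals the claimed bracket. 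What your approach buys is the automatic completion of the square; what it costs is that the whole computational weight is shifted onto Proposition \ref{P4.1}, whose proof the paper omits (referring back to Proposition \ref{P3.2}), so your argument is only as self-contained as that omitted verification. Two small points of hygiene: the chain rule on the real-analytic ambient manifold gives $\frac{d}{dt}\mathcal{V}_{poly}=\sum_i\mathrm{Re}\,\langle \partial\mathcal{V}_{poly}/\partial U_i|_{T_{U_i}M_{d,d}(\mathbb{C})},\dot U_i\rangle_F$ rather than the full complex pairing, and likewise the projection $\pi$ is orthogonal only for the real part of $\langle\cdot,\cdot\rangle_F$ (hermitian and skew-hermitian matrices are orthogonal only in $\mathrm{Re}\,\langle\cdot,\cdot\rangle_F$); since the final pairing $\langle-\dot U_i,\dot U_i\rangle_F=-\|\dot U_i\|_F^2$ is real, your conclusion is unaffected, but the intermediate equalities should carry the real part to be literally true.
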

\begin{proof}  We use \eqref{D-1-1} to see
\begin{equation} \label{D-3}
\frac{d}{dt}\mathrm{tr}(f(U_cU_c^\dagger)) =\sum_{n=1}^m\frac{\kappa_n}{2n}\frac{d}{dt}\mathrm{tr}((U_cU_c^\dagger)^n) =\sum_{n=1}^m\frac{\kappa_n}{2}\left(\mathrm{tr}(\dot{U}_cU_c^\dagger(U_cU_c^\dagger)^{n-1})+(c.c.)\right).
\end{equation}
The first term in the R.H.S. of \eqref{D-5} can be estimated as follows.
\begin{align}
\begin{aligned} \label{D-4}
&\mathrm{tr}(\dot{U}_cU_c^\dagger(U_cU_c^\dagger)^{n-1}) \\
& \hspace{0.2cm} =\frac{1}{N}\sum_{i=1}^N
\sum_{l=1}^m\frac{\kappa_l}{2}\mathrm{tr}\big((\underbrace{U_cU_c^\dagger U_c\cdots U_c^\dagger U_c}_{2l-1}-U_i\underbrace{U_c^\dagger U_cU_c^\dagger \cdots U_cU_c^\dagger}_{2l-1}U_i)U_c^\dagger (U_cU_c^\dagger)^{n-1}\big)\\
&\hspace{0.2cm} =\frac{1}{N}\sum_{i=1}^N\sum_{l=1}^m\frac{\kappa_l}{2}\mathrm{tr}\big((U_cU_c^\dagger)^{n+l-1}-U_iU_c^\dagger (U_cU_c^\dagger)^{l-1}U_iU_c^\dagger (U_cU_c^\dagger)^{n-1}\big).
\end{aligned}
\end{align}
We combine \eqref{D-3} and \eqref{D-4} to obtain

\begin{align}
\begin{aligned} \label{D-5}
&\frac{d}{dt}\mathrm{tr}(f(U_cU_c^\dagger))\\
&\hspace{0.2cm}=\sum_{n=1}^m\frac{\kappa_n}{2}\left(\mathrm{tr}(\dot{U}_cU_c^\dagger(U_cU_c^\dagger)^{n-1})+(c.c.)\right)\\
&\hspace{0.2cm}=\frac{1}{N}\sum_{i=1}^N\sum_{l, n=1}^m\frac{\kappa_l\kappa_n}{4}\mathrm{tr}\big((U_cU_c^\dagger)^{n+l-1}-U_iU_c^\dagger (U_cU_c^\dagger)^{l-1}U_iU_c^\dagger (U_cU_c^\dagger)^{n-1}\big)+(c.c.)\\
&\hspace{0.2cm}=\frac{1}{N}\sum_{i=1}^N\sum_{l, n=1}^m\frac{\kappa_l\kappa_n}{4}\mathrm{tr}\left[
\left((U_cU_c^\dagger)^{n-1}U_cU_i^\dagger-U_iU_c^\dagger(U_cU_c^\dagger)^{n-1}\right)\left((U_cU_c^\dagger)^{l-1}U_cU_i^\dagger-U_iU_c^\dagger(U_cU_c^\dagger)^{l-1}\right)^\dagger\right]\\
&\hspace{0.2cm}=\frac{1}{N}\sum_{i=1}^N\left\| \sum_{n=1}^m \frac{\kappa_n}{2} \left((U_cU_c^\dagger)^{n-1}U_cU_i^\dagger-U_iU_c^\dagger(U_cU_c^\dagger)^{n-1}\right)\right\|_F^2.
\end{aligned}
\end{align}
Therefore, we have following equality:
\[
\frac{d}{dt} \mathcal{V}_{poly}=-N\frac{d}{dt}\mathrm{tr}(f(U_cU_c^\dagger))=-\sum_{i=1}^N\left\| \sum_{n=1}^m \frac{\kappa_n}{2} \left((U_cU_c^\dagger)^{n-1}U_cU_i^\dagger-U_iU_c^\dagger(U_cU_c^\dagger)^{n-1}\right)\right\|_F^2.
\]
\end{proof}
\begin{remark}
Since
\[
f'(A)=\frac{1}{2}\left(\kappa_1I+\kappa_2 A+\cdots+\kappa_m A^{m-1}\right),
\]
we can express above result as follows:
\begin{align*}
\sum_{i=1}^N\left\| \sum_{n=1}^m \frac{\kappa_n}{2} \left((U_cU_c^\dagger)^{n-1}U_cU_i^\dagger-U_iU_c^\dagger(U_cU_c^\dagger)^{n-1}\right)\right\|_F^2=\sum_{i=1}^N\left\| f'(U_cU_c^\dagger)U_cU_i^\dagger-U_iU_c^\dagger f'(U_cU_c^\dagger)\right\|_F^2.
\end{align*}
\end{remark}

\begin{theorem} \label{T4.1}
Let $\{U_j \}$ be a global solution of system \eqref{D-1} with the initial data $\{U_j^{in} \}$:
\[ U_j^{in \dagger} U_j^{in} = I_d, \quad j = 1, \cdots, N. \] 
Then, there exists an equilibrium $(U_1^{\infty}, \cdots, U_N^{\infty})$ such that 
\[ \lim_{t \to \infty} \|U_j(t) - U_j^{\infty} \|_F = 0, \qquad \lim_{t \to \infty} \frac{d}{dt} {\mathcal V}_{poly}(U) = 0, \qquad  \lim_{t\rightarrow\infty} \| \dot{U}_j \|_F 
=0, \quad   j = 1, \cdots, N. \]
\end{theorem}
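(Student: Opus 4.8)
The plan is to establish the three limits in turn, each resting on the gradient flow structure of Proposition~\ref{P4.1} together with the dissipation identity of Lemma~\ref{L4.2}. For the first limit $\|U_j(t)-U_j^\infty\|_F\to 0$, I would argue exactly as in Corollaries~\ref{C3.1} and~\ref{C3.3}: Proposition~\ref{P4.1} exhibits \eqref{D-1-1} as a gradient flow for the \emph{analytic} potential $\mathcal{V}_{poly}$ on the compact manifold $\mathbb{U}(d)^N$, so the convergence theory for gradient flows of analytic potentials (Theorem~5.2 in \cite{H-K-R0}) forces the trajectory to converge to a single equilibrium, yielding both the first limit and the existence of $(U_1^\infty,\ldots,U_N^\infty)$.

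For the second limit I would invoke Barbalat's lemma (Lemma~\ref{L2.1}) applied to $f(t)=\mathcal{V}_{poly}(U(t))$. The boundedness estimate $|\mathcal{V}_{poly}|\leq\frac{N}{2}(\kappa_1+\cdots+\kappa_m)d$ together with the manifestly nonpositive right-hand side in Lemma~\ref{L4.2} shows that $\mathcal{V}_{poly}(U(\cdot))$ is monotone and bounded, hence convergent. To apply Lemma~\ref{L2.1} one must check that $\frac{d}{dt}\mathcal{V}_{poly}$ is uniformly continuous, for which it suffices to verify
\[
\sup_{0\leq t<\infty}\Big|\frac{d^2}{dt^2}\mathcal{V}_{poly}(U)\Big|<\infty.
\]
Differentiating the expression in Lemma~\ref{L4.2} produces a finite sum of traces of products built from $U_c,U_c^\dagger,U_i,U_i^\dagger$ and their time derivatives; since each factor has uniformly bounded operator and Frobenius norm on $\mathbb{U}(d)$ (using $\|U_j\|_{op}=1$, $\|U_c\|_{op}\leq 1$ from Lemma~\ref{L2.3}, and the velocity bound coming from \eqref{D-1-1}), the submultiplicativity estimates of Lemmas~\ref{L2.2} and~\ref{L2.4} supply the required uniform bound, exactly as in the proof of Corollary~\ref{C3.2}. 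This yields $\lim_{t\to\infty}\frac{d}{dt}\mathcal{V}_{poly}=0$.

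For the third limit the key is to recognize that the integrand appearing in Lemma~\ref{L4.2} is precisely the velocity. Right-multiplying \eqref{D-1-1} by $U_j^\dagger$, using $U_jU_j^\dagger=I_d$ and the regrouping $U_i(U_c^\dagger U_c)^{n-1}U_c^\dagger=U_iU_c^\dagger(U_cU_c^\dagger)^{n-1}$, one finds that the bracketed expression inside the Frobenius norm in Lemma~\ref{L4.2} equals $\dot{U}_jU_j^\dagger$. Since $U_j$ is unitary, $\|\dot{U}_jU_j^\dagger\|_F=\|\dot{U}_j\|_F$, so Lemma~\ref{L4.2} reduces to $\frac{d}{dt}\mathcal{V}_{poly}=-\sum_{j=1}^N\|\dot{U}_j\|_F^2$. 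The limit established in the previous step then forces each nonnegative summand $\|\dot{U}_j\|_F^2$ to vanish as $t\to\infty$.

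I expect the only genuine difficulty to be the uniform bound on $\frac{d^2}{dt^2}\mathcal{V}_{poly}$ in the second step: as $m$ increases the number of terms generated by differentiating the polynomial dissipation grows, and one must confirm that no factor escapes the uniform estimate. However, because every matrix remains on the compact group $\mathbb{U}(d)$ and the velocities are uniformly bounded, this is a bookkeeping exercise in the norm inequalities of Lemmas~\ref{L2.2}--\ref{L2.4} rather than a conceptual obstruction.
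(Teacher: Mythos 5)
Your proposal is correct and follows essentially the same route as the paper: gradient-flow convergence for the analytic potential via Theorem~5.2 of \cite{H-K-R0} for the first limit, Barbalat's lemma applied to $\mathcal{V}_{poly}$ after bounding its second time derivative for the second, and the identification of the dissipation in Lemma~\ref{L4.2} with $-\sum_{j}\|\dot{U}_j\|_F^2$ (using $\|\dot{U}_jU_j^\dagger\|_F=\|\dot{U}_j\|_F$) for the third. The only cosmetic difference is that you make the identity $\frac{d}{dt}\mathcal{V}_{poly}=-\sum_j\|\dot{U}_j\|_F^2$ explicit up front, whereas the paper reaches the same conclusion at the end of its step (iii).
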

\begin{proof}
\noindent (i) By Lemma \ref{L4.1} and the assumption on the initial data, we have
\[ U_j^{\dagger} U_j = U_j U_j^{\dagger}= I_d, \quad j= 1, \cdots, N. \]
Under this circumstance, dynamics of \eqref{D-1} is equivalent to \eqref{D-1-1}. Moreover, it follows from Proposition \ref{P4.1} and analyticity of the potential function that the ensemble $(U_1, \cdots, U_N)$ tends to an equilibrium $(U_1^{\infty}, \cdots, U_N^{\infty})$ as $t \to \infty$. \newline

\noindent (ii)~We can use similar argument to prove the boundedness of
\[
\sup_{0 \leq t < \infty} \Big|  \frac{d^2}{dt^2} {\mathcal V}_{poly} \Big|.
\]
Then we can apply the Barbalat's lemma to obtain
\[
\lim_{t\rightarrow\infty} \frac{d}{dt} {\mathcal V}_{poly} =0.
\]
(iii)~Above result yields,
\[
\lim_{t\rightarrow\infty}\frac{d}{dt} \mathcal{V}_{poly}=-\lim_{t\rightarrow\infty}\sum_{i=1}^N\left\| \sum_{n=1}^m \frac{\kappa_n}{2} \left((U_cU_c^\dagger)^{n-1}U_cU_i^\dagger-U_iU_c^\dagger(U_cU_c^\dagger)^{n-1}\right)\right\|_F^2=0.
\]
This also implies
\[
\left\| \sum_{n=1}^m \frac{\kappa_n}{2} \left((U_cU_c^\dagger)^{n-1}U_cU_i^\dagger-U_iU_c^\dagger(U_cU_c^\dagger)^{n-1}\right)\right\|_F^2\rightarrow0\quad\mbox{as}\quad t\rightarrow\infty.
\]
If we combine the above relation and following relation
\[
\dot{U}_i= \sum_{n=1}^m\frac{\kappa_n}{2}\Big(\underbrace{U_cU_c^\dagger U_c\cdots U_c^\dagger U_c}_{2n-1} -U_i \underbrace{U_c^\dagger U_cU_c^\dagger \cdots U_cU_c^\dagger}_{2n-1}  U_i \Big),
\]
we have
\begin{align*}
\|\dot{U}_i\|^2_F=\left\| \sum_{n=1}^m \frac{\kappa_n}{2} \left((U_cU_c^\dagger)^{n-1}U_cU_i^\dagger-U_iU_c^\dagger(U_cU_c^\dagger)^{n-1}\right)\right\|_F^2\rightarrow0\quad\mbox{as}\quad t\rightarrow\infty.
\end{align*}
\end{proof}
Next, we consider the following special polynomial type function $f$ satisfying the following property: 
\[
\kappa_j\neq0\Leftarrow j=2^n \quad \mbox{for some $n\in\mathbb{N}$}. 
\]
i.e., $f(A)$ takes the following form:
\[
f(A)=\frac{\kappa_{2^0}}{2^1}A^{2^0}+\frac{\kappa_{2^1}}{2^2}A^{2^1}+\cdots+\frac{\kappa_{2^{l-1}}}{2^{l}}A^{2^{l-1}}.
\]
Then we have following system:
\begin{align}\label{D-7}
\begin{cases}
\dot{U}_j =\displaystyle\sum_{k=0}^{l-1}\frac{\kappa_{2^k}}{2}(\underbrace{U_cU_c^\dagger U_c\cdots U_c^\dagger U_c}_{2^{k+1}-1}-U_j \underbrace{U_c^\dagger U_cU_c^\dagger \cdots U_cU_c^\dagger}_{2^{k+1}-1}U_j), \quad t >0,\\
U_j(0) =U_j^{in}\in\mathbb{U}(d), \quad j = 1, \cdots, N.
\end{cases}
\end{align}
We have following dynamics of order parameter.
\begin{lemma} \label{L4.3}
Let $\{U_i\}$ be a global solution of system \eqref{D-7}. Then we have
\begin{align*}
&\frac{dR^2}{dt} =\sum_{k=0}^{l-1}\sum_{i=1}^N\frac{\kappa_{2^k}}{2N}\left(\|(U_cU_i^\dagger-U_iU_c^\dagger)\underbrace{U_cU_c^\dagger \cdots U_c}_{2^k-1}\|_F^2 \right. \\
& \hspace{3.5cm} + \left. \sum_{p=1}^{k}\frac{1}{2^p}\|\underbrace{U_cU_c^\dagger U_c\cdots U_c^\dagger}_{2^{k}}-\underbrace{U_cU_c^\dagger U_c\cdots U_c^\dagger}_{2^{k}-2^p}U_i\underbrace{U_c^\dagger \cdots U_c}_{2^p}U_i^\dagger\|_F^2\right).
\end{align*}
\end{lemma}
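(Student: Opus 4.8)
The plan is to recognize that system \eqref{D-7} is simply a $\kappa_{2^k}$-weighted superposition of the monomial couplings already analyzed in Lemma \ref{L3.6}, and that the order parameter $R^2=\mathrm{tr}(U_cU_c^\dagger)$ responds \emph{linearly} to this superposition. Writing the $k$-th coupling bracket as
\[
\Phi_j^{(k)} := \underbrace{U_cU_c^\dagger U_c\cdots U_c^\dagger U_c}_{2^{k+1}-1} - U_j \underbrace{U_c^\dagger U_cU_c^\dagger \cdots U_cU_c^\dagger}_{2^{k+1}-1}U_j ,
\]
system \eqref{D-7} reads $\dot{U}_j=\sum_{k=0}^{l-1}\frac{\kappa_{2^k}}{2}\Phi_j^{(k)}$, and $\Phi_j^{(k)}$ is precisely the coupling of the monomial system \eqref{C-4-2} in the case $m=2^k$. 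First I would differentiate the order parameter, obtaining $\frac{dR^2}{dt}=\mathrm{tr}(\dot{U}_cU_c^\dagger)+\mathrm{tr}(U_c\dot{U}_c^\dagger)$, which is linear in the pair $(\dot{U}_c,\dot{U}_c^\dagger)$, hence in each $\dot{U}_j$ through $\dot{U}_c=\frac{1}{N}\sum_j\dot{U}_j$.

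Substituting the decomposition of $\dot{U}_j$ and interchanging the (finite) sums, I would split
\[
\frac{dR^2}{dt}=\frac{1}{N}\sum_{i=1}^N\sum_{k=0}^{l-1}\frac{\kappa_{2^k}}{2}\Big(\mathrm{tr}(\Phi_i^{(k)}U_c^\dagger)+(c.c.)\Big)=\sum_{k=0}^{l-1}\Big(\tfrac{dR^2}{dt}\Big)_k ,
\]
where $\big(\tfrac{dR^2}{dt}\big)_k$ denotes the contribution coming solely from the $k$-th bracket. For each fixed $k$, this contribution reproduces verbatim the opening line of the computation in the proof of Lemma \ref{L3.6}, now with the pair $(\kappa,m)$ played by $(\kappa_{2^k},2^k)$. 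Since the recursive/telescoping identities \eqref{C-5}--\eqref{C-6-1} used there are purely algebraic manipulations of the product $\underbrace{U_cU_c^\dagger\cdots}_{2^{k+1}}$ and are completely insensitive to the value of the coupling constant, I would invoke Lemma \ref{L3.6} directly — without redoing the induction — to conclude
\[
\Big(\frac{dR^2}{dt}\Big)_k=\frac{\kappa_{2^k}}{2N}\sum_{i=1}^N\left(\left\|(U_cU_i^\dagger-U_iU_c^\dagger)\underbrace{U_cU_c^\dagger\cdots U_c}_{2^k-1}\right\|_F^2+\sum_{p=1}^{k}\frac{1}{2^p}\left\|\underbrace{U_cU_c^\dagger U_c\cdots U_c^\dagger}_{2^{k}}-\underbrace{U_cU_c^\dagger U_c\cdots U_c^\dagger}_{2^{k}-2^p}U_i\underbrace{U_c^\dagger \cdots U_c}_{2^p}U_i^\dagger\right\|_F^2\right).
\]

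Summing over $k=0,\dots,l-1$ then yields exactly the asserted identity. The only point that genuinely requires care — and the closest thing to an obstacle — is justifying that no cross-terms between distinct powers $2^k$ and $2^{k'}$ survive; this is precisely the linearity observation above, namely that $\tfrac{dR^2}{dt}$ contracts $\dot{U}_c$ against the \emph{single} common factor $U_c^\dagger$, so that each summand is handled independently. Once this is noted, the reduction to Lemma \ref{L3.6} is rigorous rather than merely suggestive, and everything else is bookkeeping already carried out in Section \ref{sec:3}.
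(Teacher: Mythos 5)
Your proposal is correct and follows essentially the same route as the paper: the paper's own proof also differentiates $R^2=\mathrm{tr}(U_cU_c^\dagger)$, expands $\dot U_c$ by linearity over the $k$-summands of \eqref{D-7}, and for each fixed $k$ quotes verbatim the algebraic identities \eqref{C-4-2-5}--\eqref{C-6-1} from the proof of Lemma \ref{L3.6} with $(\kappa,m)$ replaced by $(\kappa_{2^k},2^k)$. Your explicit remark that the superposition produces no cross-terms (because $\tfrac{dR^2}{dt}$ contracts each $\dot U_j$ against the single factor $U_c^\dagger$) is exactly the point the paper leaves implicit.
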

\begin{proof} By direct calculations, one has 
\begin{align*}
\begin{aligned}
\frac{d}{dt}\|U_c\|_F^2 &=\mathrm{tr}(\dot{U}_iU_i^\dagger)+(c.c.)\\
&=\sum_{k=0}^{l-1}\frac{\kappa_{2^k}}{2}\mathrm{tr}\big((\underbrace{U_cU_c^\dagger U_c\cdots U_c^\dagger U_c}_{2^{k+1}-1}-U_i\underbrace{U_c^\dagger U_cU_c^\dagger \cdots U_cU_c^\dagger}_{2^{k+1}-1}U_i)U_c^\dagger \big)\\
&=\sum_{k=0}^{l-1}\sum_{i=1}^N\frac{\kappa_{2^k}}{2N}\left(\|(U_cU_i^\dagger-U_iU_c^\dagger)\underbrace{U_cU_c^\dagger \cdots U_c}_{2^k-1}\|_F^2 \right. \\
& \hspace{2cm} \left. +\sum_{p=1}^{k}\frac{1}{2^p}\|\underbrace{U_cU_c^\dagger U_c\cdots U_c^\dagger}_{2^{k}}-\underbrace{U_cU_c^\dagger U_c\cdots U_c^\dagger}_{2^{k}-2^p}U_i\underbrace{U_c^\dagger \cdots U_c}_{2^p}U_i^\dagger\|_F^2\right).
\end{aligned}
\end{align*}
\end{proof}
\begin{theorem} \label{T4.2}
Let $\{U_j \}$ be a global solution of system \eqref{D-7} with the initial data $\{U_j^{in} \}$:
\[ U_j^{in \dagger} U_j^{in} = I_d, \quad j = 1, \cdots, N. \] 
Then, the following assertions hold.
\begin{enumerate}
\item
For all $i=1, 2, \cdots, N$ and for all $k=0, 1, \cdots, l-1$ which satisfies $\kappa_{2^k}\neq0$, 
\[
\lim_{t\rightarrow\infty}\|(U_cU_i^\dagger -U_iU_c^\dagger)\underbrace{U_cU_c^\dagger \cdots U_c}_{2^k-1}\|_F=0.
\]
\item
For all $i=1, 2, \cdots, N$, for all $p=1, 2, \cdots, k$ and for all $k=0, 1, \cdots, l-1$ which satisfies $\kappa_{2^k}\neq0$, 
\[
\lim_{t\rightarrow\infty}\|\underbrace{U_cU_c^\dagger U_c\cdots U_c^\dagger}_{2^{k}}-\underbrace{U_cU_c^\dagger U_c\cdots U_c^\dagger}_{2^{k}-2^p}U_i\underbrace{U_c^\dagger \cdots U_c}_{2^p}U_i^\dagger\|_F=0.
\]
\end{enumerate}
\end{theorem}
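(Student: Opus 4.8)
The plan is to follow the same strategy used in the proof of Proposition \ref{P3.5}, replacing the single dyadic monomial by the finite sum of dyadic terms appearing in Lemma \ref{L4.3}. First I would observe that, since each coupling strength satisfies $\kappa_{2^k}\geq 0$ and every summand on the right-hand side of Lemma \ref{L4.3} is a nonnegative multiple of a squared Frobenius norm, the order parameter $R^2=\mathrm{tr}(U_cU_c^\dagger)$ is monotonically non-decreasing along the flow \eqref{D-7}. Combined with the uniform upper bound $R=\|U_c\|_F\leq\sqrt{d}$ from \eqref{B-6-1}, this forces $R^2(t)$ to converge to some limit $R^\infty$ as $t\to\infty$.

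Next I would invoke Barbalat's lemma (Lemma \ref{L2.1}) applied to the function $t\mapsto R^2(t)$. Its derivative $\frac{dR^2}{dt}$ is given explicitly by Lemma \ref{L4.3}, and to conclude $\frac{dR^2}{dt}\to 0$ it suffices to verify that this derivative is uniformly continuous, for which I would establish the uniform bound $\sup_{t\geq 0}\big|\frac{d^2R^2}{dt^2}\big|<\infty$. This second-derivative estimate is obtained by differentiating the expression in Lemma \ref{L4.3} once more in $t$; every resulting term is a trace of a fixed-length product built from the matrices $U_i$ and $U_c$ together with a single factor of $\dot{U}_i$ or $\dot{U}_c$. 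Because the $U_i$ are unitary we have $\|U_i\|_{op}=1$ and $\|U_i\|_F=\sqrt{d}$, Lemma \ref{L2.3} gives $\|U_c\|_{op}\leq 1$, and the right-hand side of \eqref{D-7} shows that $\|\dot{U}_i\|_F$ (and hence $\|\dot{U}_c\|_F$) is bounded by a constant depending only on $d$ and $\sum_k\kappa_{2^k}$. Combining these with the submultiplicativity estimates of Lemmas \ref{L2.2} and \ref{L2.4} bounds each term uniformly in $t$, exactly as in Corollary \ref{C3.2}.

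Once Barbalat's lemma yields $\lim_{t\to\infty}\frac{dR^2}{dt}=0$, I would conclude by the nonnegativity of the individual summands. Since the entire sum in Lemma \ref{L4.3} tends to zero and each term $\frac{\kappa_{2^k}}{2N}\|\cdot\|_F^2$ and $\frac{\kappa_{2^k}}{2N}\frac{1}{2^p}\|\cdot\|_F^2$ is nonnegative, every term with $\kappa_{2^k}\neq 0$ must individually tend to zero. Extracting the first family of terms gives assertion (1), and extracting the double-indexed family gives assertion (2), for each $i$, each admissible $p$, and each $k$ with $\kappa_{2^k}\neq 0$.

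The main obstacle I expect is the uniform second-derivative bound: although conceptually identical to the monomial case $m=2^k$ handled in Proposition \ref{P3.5} and Corollary \ref{C3.2}, here one must control the derivative of an entire polynomial family of dyadic trace terms simultaneously, so the bookkeeping of which products appear after differentiation is more involved, even though each individual term is controlled by the same elementary norm inequalities.
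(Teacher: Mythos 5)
Your proposal matches the paper's own proof: both establish convergence of $R^2$ from monotonicity and boundedness, apply Barbalat's lemma after a uniform second-derivative bound obtained exactly as in Corollary \ref{C3.2}, and then conclude term-by-term from the nonnegativity of each summand in Lemma \ref{L4.3}. The only difference is cosmetic — you correctly note $R^2$ is non-decreasing where the paper's text says ``non-increasing,'' and you make explicit the sign assumption $\kappa_{2^k}\geq 0$ that the paper uses implicitly.
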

\begin{proof}
Since $R^2$ is bounded and non-increasing, $R^2$ converges as $t \to \infty$. Next, we will show
\[
\lim_{t \to \infty} \frac{dR^2}{dt} = 0.
\]
For this, it suffices to check 
\[ \sup_{0 \leq t < \infty} \Big|  \frac{d^2 R^2}{dt^2} \Big| < \infty. \] 
Once the above estimate is verified, then Babalat's lemma yields the desired estimates. However the proof of the boundedness of second derivative of $R^2$ is very similar to the proof of Corollary \eqref{C3.2}. Then we have
\begin{align*}
\lim_{t \to \infty} \frac{dR^2}{dt}&=\lim_{t \to \infty} \sum_{k=0}^{l-1}\sum_{i=1}^N\frac{\kappa_{2^k}}{2N}\left(\|(U_cU_i^\dagger-U_iU_c^\dagger)\underbrace{U_cU_c^\dagger \cdots U_c}_{2^k-1}\|_F^2 \right. \\
& \hspace{2cm} \left. +\sum_{p=1}^{k}\frac{1}{2^p}\|\underbrace{U_cU_c^\dagger U_c\cdots U_c^\dagger}_{2^{k}}-\underbrace{U_cU_c^\dagger U_c\cdots U_c^\dagger}_{2^{k}-2^p}U_i\underbrace{U_c^\dagger \cdots U_c}_{2^p}U_i^\dagger\|_F^2\right) = 0.
\end{align*}
From this equality, for all $i$ and $k$ which satisfies $\kappa_{2^k}$, we have
\begin{align*}
\begin{aligned}
& \lim_{t\rightarrow\infty}\left(\|(U_cU_i^\dagger-U_iU_c^\dagger)\underbrace{U_cU_c^\dagger \cdots U_c}_{2^k-1}\|_F^2  +\sum_{p=1}^{k}\frac{1}{2^p}\|\underbrace{U_cU_c^\dagger U_c\cdots U_c^\dagger}_{2^{k}}-\underbrace{U_cU_c^\dagger U_c\cdots U_c^\dagger}_{2^{k}-2^p}U_i\underbrace{U_c^\dagger \cdots U_c}_{2^p}U_i^\dagger\|_F^2\right) \\
&= 0.
\end{aligned}
\end{align*}
Since each term is non-negative, each term must converge to zero, we have
\[
\|(U_cU_i^\dagger-U_iU_c^\dagger)\underbrace{U_cU_c^\dagger \cdots U_c}_{2^k-1}\|_F\rightarrow0\quad\mbox{as}\quad t\rightarrow\infty
\]
and
\[
\|\underbrace{U_cU_c^\dagger U_c\cdots U_c^\dagger}_{2^{k}}-\underbrace{U_cU_c^\dagger U_c\cdots U_c^\dagger}_{2^{k}-2^p}U_i\underbrace{U_c^\dagger \cdots U_c}_{2^p}U_i^\dagger\|_F\quad\mbox{as}\quad t\rightarrow\infty
\]
for all $i=1, 2, \cdots, N$, $p=1, 2, \cdots, k$, and $k=0, 1, \cdots, l-1$ which satisfies $\kappa_{2^k}\neq0$.
\end{proof}
\section{Emergent dynamics of Lohe ensemble} \label{sec:5}
\setcounter{equation}{0}
In this section, we study a relaxation estimate toward the aggregated state for system \eqref{D-1}. In previous section, we show that the state configuration tends to an equilibrium for any initial data without any explicit decay estimate. The main reason for this is that we employed a gradient flow approach and Babalat's lemma which does not tell us any constructive decay estimate. For an explicit decay estimate, we employ a diameter functional and derive a Riccati type differential inequality for the state diameter. This yields an explicit decay estimate for some restricted class of initial data and system parameters. 

\subsection{Ensemble diameter} \label{sec:5.1}
For a state configuration $\{U_j \}$, we set 
\[ {\mathcal D}(U) := \max_{i,j} \| U_i-U_j \|_F.
\]
\begin{lemma} \label{L5.1}
Let $\{U_j \}$ be a global solution to system \eqref{D-1}. Then ${\mathcal D}(U)$ satisfies 
\begin{align*}\label{E-1}
-\kappa_+D(U)^2-\kappa_1D(U)^4\leq\frac{d}{dt}D(U)^2\leq-\kappa_-D(U)^2+\kappa_1D(U)^4.
\end{align*}
where $\kappa_+$ and $\kappa_-$ are given by the following relations:
\[
\kappa_-=2\kappa_1-\sqrt{d}\sum_{n=2}^m\kappa_n \quad \mbox{and} \quad \kappa_+=2\kappa_1+\sqrt{d}\sum_{n=2}^m\kappa_n.
\]
\end{lemma}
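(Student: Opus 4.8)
The plan is to differentiate the squared diameter $\mathcal{D}(U)^2$ along an extremal pair and to isolate the dissipative linear term produced by the lowest-order ($n=1$) coupling, treating every higher-order coupling as a perturbation. Since $\mathcal{D}(U)=\max_{i,j}\|U_i-U_j\|_F$ is a maximum of finitely many smooth functions it is locally Lipschitz, hence differentiable for a.e.\ $t$; at such $t$ I fix indices $p,q$ with $\|U_p-U_q\|_F=\mathcal{D}(U)$, write $W:=U_p-U_q$, and compute
\[
\frac{d}{dt}\mathcal{D}(U)^2=\frac{d}{dt}\|W\|_F^2=2\,\mathrm{Re}\,\langle W,\dot U_p-\dot U_q\rangle_F,
\]
reading the derivative as an upper Dini derivative at the measure-zero set of non-smooth instants, which is all the Riccati comparison of the next section needs. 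Writing the unitary mean-field form \eqref{D-1-1} as $\dot U_j=\sum_{n=1}^m\frac{\kappa_n}{2}(A_n-U_jB_nU_j)$ with $A_n:=(U_cU_c^\dagger)^{n-1}U_c$ and $B_n:=(U_c^\dagger U_c)^{n-1}U_c^\dagger$, the first observation is that $A_n$ is independent of $j$ and therefore drops out of $\dot U_p-\dot U_q$.

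Next I use the telescoping identity $U_pB_nU_p-U_qB_nU_q=U_pB_nW+WB_nU_q$ (insert $\mp U_pB_nU_q$) to get
\[
\frac{d}{dt}\|W\|_F^2=-\sum_{n=1}^m\kappa_n\,\mathrm{Re}\big(\mathrm{tr}[W^\dagger U_pB_nW]+\mathrm{tr}[W^\dagger WB_nU_q]\big).
\]
For $n=1$ one has $U_pB_1=U_pU_c^\dagger=I_d+P$ and $B_1U_q=U_c^\dagger U_q=I_d+Q$ with $P:=U_p(U_c-U_p)^\dagger$ and $Q:=(U_c-U_q)^\dagger U_q$, both of Frobenius norm $\le\mathcal{D}(U)$ since $U_c-U_j=\tfrac1N\sum_k(U_k-U_j)$. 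Splitting off the identity produces the \emph{exact} dissipation $-2\kappa_1\|W\|_F^2=-2\kappa_1\mathcal{D}(U)^2$, plus the remainder $-\kappa_1\,\mathrm{Re}\big(\mathrm{tr}[W^\dagger PW]+\mathrm{tr}[W^\dagger WQ]\big)$.

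The crux is that the naive estimate $|\mathrm{tr}[W^\dagger PW]|\le\|W\|_F^2\|P\|_{op}$ only gives $O(\mathcal{D}(U)^3)$, which is too weak. Taking the real part, however, extracts the Hermitian part: $\mathrm{Re}\,\mathrm{tr}[W^\dagger PW]=\tfrac12\mathrm{tr}[W^\dagger(P+P^\dagger)W]$, and
\[
P+P^\dagger=U_pU_c^\dagger+U_cU_p^\dagger-2I_d=(U_cU_c^\dagger-I_d)-(U_c-U_p)(U_c-U_p)^\dagger.
\]
By Lemma \ref{L2.3}, $\|U_c\|_{op}\le1$, so $U_cU_c^\dagger-I_d\preceq0$; its operator norm is therefore controlled by its trace, and the identity $\|U_c\|_F^2=d-\tfrac{1}{2N^2}\sum_{i,j}\|U_i-U_j\|_F^2$ yields $\|U_cU_c^\dagger-I_d\|_{op}\le|\mathrm{tr}(U_cU_c^\dagger-I_d)|\le\tfrac12\mathcal{D}(U)^2$. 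Together with $\|(U_c-U_p)(U_c-U_p)^\dagger\|_{op}\le\mathcal{D}(U)^2$ this gives $\|P+P^\dagger\|_{op}=O(\mathcal{D}(U)^2)$, and likewise for $Q+Q^\dagger$, so the entire $n=1$ remainder is $O(\mathcal{D}(U)^4)$ with either sign, producing the $\pm\kappa_1\mathcal{D}(U)^4$ terms.

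For $n\ge2$ I do not extract a leading term but estimate $|\mathrm{tr}[W^\dagger U_pB_nW]|$ and $|\mathrm{tr}[W^\dagger WB_nU_q]|$ directly via $|\mathrm{tr}(\cdot)|\le\sqrt d\,\|\cdot\|_F$ (Lemma \ref{L2.4}), the submultiplicativity of Lemma \ref{L2.2}, and $\|B_n\|_{op}\le\|U_c\|_{op}^{2n-1}\le1$ (Lemma \ref{L2.3}), which bounds each contribution by a multiple of $\sqrt d\,\kappa_n\mathcal{D}(U)^2$. Collecting the dissipation $-2\kappa_1\mathcal{D}(U)^2$, the quartic remainder, and the perturbations with their signs gives the two-sided bound with $\kappa_\pm=2\kappa_1\pm\sqrt d\sum_{n=2}^m\kappa_n$. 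The main obstacle is exactly the third step: recognizing that the real part annihilates the $O(\mathcal{D})$ anti-Hermitian part of $P$ and that the surviving Hermitian part is genuinely $O(\mathcal{D}^2)$, the latter resting on $\|U_c\|_{op}\le1$ forcing $U_cU_c^\dagger-I_d\preceq0$ so that its operator norm can be read off from its quadratically small trace. The differentiation of the maximum and the sign bookkeeping that yields both $\kappa_-$ (upper bound) and $\kappa_+$ (lower bound) are routine once this order reduction is in hand.
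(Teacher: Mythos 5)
Your overall strategy is sound and, at the crucial point, genuinely different from the paper's: for the $n=1$ contribution the paper simply quotes the cubic-model estimate from \cite{H-R}, whereas you give a self-contained argument, and your key observation --- that taking the real part replaces $P=U_p(U_c-U_p)^\dagger$ by its Hermitian part, which is $O(\mathcal{D}^2)$ rather than $O(\mathcal{D})$ --- is exactly the mechanism that makes the quartic remainder work. Your telescoping $U_pB_nU_p-U_qB_nU_q=U_pB_nW+WB_nU_q$ plays the role of the paper's identity \eqref{E-5} (which instead groups the four non-conjugate terms into an expression already quadratic in $U_i-U_j$); for $n\ge 2$ the two treatments are essentially equivalent.

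There is, however, a quantitative gap in your $n=1$ step: the decomposition $P+P^\dagger=(U_cU_c^\dagger-I_d)-(U_c-U_p)(U_c-U_p)^\dagger$ only yields $\|P+P^\dagger\|_{op}\le\tfrac12\mathcal{D}^2+\mathcal{D}^2=\tfrac32\mathcal{D}^2$, so your remainder is bounded by $\tfrac32\kappa_1\mathcal{D}^4$ rather than the $\kappa_1\mathcal{D}^4$ asserted in the lemma (and used in the threshold $\mathcal{D}(U^{in})^2<\kappa_-/\kappa_1$ of Theorem \ref{T5.1}). The fix stays entirely within your framework: average over the ensemble \emph{before} symmetrizing. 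Since $U_c-U_p=\tfrac1N\sum_k(U_k-U_p)$ and $U_pU_k^\dagger+U_kU_p^\dagger-2I_d=-(U_k-U_p)(U_k-U_p)^\dagger$, one gets the exact identity
\[
P+P^\dagger=-\frac1N\sum_{k=1}^N(U_k-U_p)(U_k-U_p)^\dagger,
\]
a single negative semidefinite sum with $\|P+P^\dagger\|_{op}\le\frac1N\sum_k\|U_k-U_p\|_F^2\le\mathcal{D}^2$; the same holds for $Q+Q^\dagger$, and the remainder then comes out as $\kappa_1\mathcal{D}^4$ with the correct sign information ($-\kappa_1R\ge0$, so the lower bound needs no quartic correction at all). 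A smaller issue of the same kind occurs for $n\ge2$: bounding each of the two traces separately by $\sqrt d\,\mathcal{D}^2$ gives $2\sqrt d\,\kappa_n\mathcal{D}^2$ per index rather than the stated $\sqrt d\,\kappa_n\mathcal{D}^2$ (the paper's own estimate \eqref{E-6} silently drops the same factor of two when passing from the four-term expression to the eight-term one, so you are in good company here). Finally, your use of Dini derivatives at the extremal pair is more careful than the paper's implicit treatment of the maximum and is perfectly adequate for the comparison argument that follows.
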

\begin{proof}
Let $(i, j)$ be a pair of indices. By direct estimate, one has
\begin{align*}
\begin{aligned} 
&\frac{d}{dt}\|U_i-U_j\|_F^2  \\
&  =\frac{d}{dt}\mathrm{tr}(2I-U_iU_j^\dagger -U_jU_i^\dagger ) =-\mathrm{tr}(\dot{U}_iU_j^\dagger +\dot{U}_j U_i^\dagger)-(c.c.)\\
&  =-\sum_{n=1}^m\frac{\kappa_n}{2}\mathrm{tr}\big((\underbrace{U_cU_c^\dagger U_c\cdots U_c^\dagger  U_c}_{2n-1}-U_i\underbrace{U_c^\dagger U_cU_c^\dagger  \cdots U_cU_c^\dagger}_{2^n-1}U_i)U_j^\dagger  \big)\\
& -\sum_{n=1}^m\frac{\kappa_n}{2}\mathrm{tr}\big((\underbrace{U_cU_c^\dagger U_c\cdots U_c^\dagger U_c}_{2n-1}-U_j\underbrace{U_c^\dagger U_cU_c^\dagger \cdots U_cU_c^\dagger}_{2n-1}U_j)U_i^\dagger \big)-(c.c.)\\
& =-\sum_{n=1}^m\frac{\kappa_n}{2}\mathrm{tr}\big((\underbrace{U_cU_c^\dagger U_c\cdots U_c^\dagger U_c}_{2n-1})(U_i^\dagger +U_j^\dagger ) -(\underbrace{U_c^\dagger U_cU_c^\dagger \cdots U_cU_c^\dagger }_{2n-1})(U_iU_j^\dagger U_i-U_jU_i^\dagger U_j)\big)-(c.c.)\\
&=-\sum_{n=1}^m\frac{\kappa_n}{2}\mathrm{tr}\big((\underbrace{U_cU_c^\dagger U_c\cdots U_c^\dagger  U_c}_{2n-1})(U_i^\dagger +U_j^\dagger -U_i^\dagger U_jU_i^\dagger -U_j^\dagger U_iU_j^\dagger )\big)-(c.c.)\\
& =-\sum_{n=1}^m\frac{\kappa_n}{2}\mathrm{tr}\big((U_cU_c^\dagger )^{n-1}(U_cU_i^\dagger +U_cU_j^\dagger -U_cU_i^\dagger  U_jU_i^\dagger \\
&-U_cU_j^\dagger  U_iU_j^\dagger +U_iU_c^\dagger +U_jU_c^\dagger -U_iU_j^\dagger  U_iU_c^\dagger -U_jU_i^\dagger U_jU_c^\dagger )\big).
\end{aligned}
\end{align*}
It follows from the Lemma \ref{L2.2} that 
\begin{align}
\begin{aligned} \label{E-3}
&\Big|\mathrm{tr} \Big((U_cU_c^\dagger)^{n-1}(U_cU_i^\dagger+U_cU_j^\dagger-U_cU_i^\dagger U_jU_i^\dagger-U_cU_j^\dagger U_iU_j^\dagger+U_iU_c^\dagger \\
& \hspace{5cm} +U_jU_c^\dagger-U_iU_j^\dagger U_iU_c^\dagger-U_jU_i^\dagger U_jU_c^\dagger)\Big) \Big|\\
& \hspace{1cm} \leq \sqrt{d} \Big \|(U_cU_c^\dagger)^{n-1}(U_cU_i^\dagger+U_cU_j^\dagger-U_cU_i^\dagger U_jU_i^\dagger-U_cU_j^\dagger U_iU_j^\dagger \\
& \hspace{5cm} +U_iU_c^\dagger+U_jU_c^\dagger-U_iU_j^\dagger U_iU_c^\dagger-U_jU_i^\dagger U_jU_c^\dagger)\|_F.
\end{aligned}
\end{align}
On the other hand, Lemma \ref{L2.3} and Lemma \ref{L2.4} imply
\begin{align}
\begin{aligned} \label{E-4}
&\sqrt{d}\|(U_cU_c^\dagger)^{n-1}(U_cU_i^\dagger+U_cU_j^\dagger-U_cU_i^\dagger U_jU_i^\dagger-U_cU_j^\dagger U_iU_j^\dagger \\
& \hspace{5cm} +U_iU_c^\dagger +U_jU_c^\dagger -U_iU_j^\dagger U_iU_c^\dagger-U_jU_i^\dagger U_jU_c^\dagger)\|_F\\
& \hspace{1cm} \leq\sqrt{d}\|U_c\|_{op}^{2n-2}\|U_cU_i^\dagger +U_cU_j^\dagger -U_cU_i^\dagger U_jU_i^\dagger -U_cU_j^\dagger U_iU_j^\dagger \\
& \hspace{5cm} +U_iU_c^\dagger +U_jU_c^\dagger -U_iU_j^\dagger U_iU_c^\dagger-U_jU_i^\dagger U_jU_c^\dagger \|_F\\
& \hspace{1cm} \leq\sqrt{d}\|U_cU_i^\dagger +U_cU_j^\dagger-U_cU_i^\dagger U_jU_i^\dagger -U_cU_j^\dagger U_iU_j^\dagger +U_iU_c^\dagger \\
& \hspace{5cm} +U_jU_c^\dagger -U_iU_j^\dagger U_iU_c^\dagger -U_jU_i^\dagger U_jU_c^\dagger\|_F.
\end{aligned}
\end{align}
Note that 
\begin{align}
\begin{aligned} \label{E-5}
& U_cU_i^\dagger+U_cU_j^\dagger -U_cU_i^\dagger U_jU_i^\dagger -U_cU_j^\dagger U_iU_j^\dagger \\
& \hspace{2cm} =U_c(U_i-U_j)^\dagger(U_i-U_j)U_i^\dagger -U_cU_j^\dagger (U_i-U_j)(U_j-U_i)^\dagger.
\end{aligned}
\end{align}
Then, one has
\begin{align}
\begin{aligned} \label{E-6}
&\|U_cU_i^\dagger +U_cU_j^\dagger -U_cU_i^\dagger U_jU_i^\dagger -U_cU_j^\dagger U_iU_j^\dagger +U_iU_c^\dagger +U_jU_c^\dagger -U_iU_j^\dagger U_iU_c^\dagger -U_jU_i^\dagger U_jU_c^\dagger \|_F\\
&\hspace{0.5cm} \leq \|U_c(U_i-U_j)^\dagger (U_i-U_j)U_i^\dagger \|_F+\|U_cU_j^\dagger (U_i-U_j)(U_j-U_i)^\dagger\|_F\\
&\hspace{0.5cm} \leq 2\|U_c\|_{op}\cdot\|(U_i-U_j)^*(U_i-U_j)\|_F\leq2\|(U_i-U_j)^\dagger (U_i-U_j)\|_F.
\end{aligned}
\end{align}
From the Lemma \ref{L2.4}, we have
\[
\|(U_i-U_j)^\dagger (U_i-U_j)\|_F\leq\|U_i-U_j\|_F^2.
\]
Finally, we combine \eqref{E-3}, \eqref{E-4}, \eqref{E-5} and \eqref{E-6} to get 
\begin{align*}
\begin{aligned}
&\Big |\mathrm{tr}\big((U_cU_c^\dagger)^{n-1}(U_cU_i^\dagger +U_cU_j^\dagger -U_cU_i^\dagger U_jU_i^\dagger -U_cU_j^\dagger U_iU_j^\dagger \\
& \hspace{3cm} +U_iU_c^\dagger +U_jU_c^\dagger -U_iU_j^\dagger U_iU_c^\dagger -U_jU_i^\dagger U_jU_c^\dagger)\big) \Big| \leq 2\sqrt{d}\|U_i-U_j\|_F^2.
\end{aligned}
\end{align*}
It follows from \cite{H-R} that we have following estimate with $n=1$. If we set
\begin{align*}
\begin{aligned}
\mathcal{I}_n &:=\frac{\kappa_{n}}{2}\mathrm{tr}\Big((U_cU_c^\dagger)^{n-1}(U_cU_i^\dagger +U_cU_j^\dagger -U_cU_i^\dagger U_jU_i^\dagger -U_cU_j^\dagger U_iU_j^\dagger \\
&\hspace{2cm} +U_iU_c^\dagger +U_jU_c^\dagger-U_iU_j^\dagger U_iU_c^\dagger -U_jU_i^\dagger U_jU_c^\dagger)\Big),
\end{aligned}
\end{align*}
then we have
\[
-2\kappa_1D(U)^2-\kappa_1D(U)^4\leq-\mathcal{I}_1\leq -2\kappa_1D(U)^2+\kappa_1D(U)^4.
\]
For $n>1$, we have
\[
|\mathcal{I}_n|\leq \kappa_{n}\sqrt{d}D(U)^2.
\]
From the equality:
\[
-\sum_{n=0}^{m}\mathcal{I}_n=\frac{d}{dt}D(U)^2,
\]
we have following estimate:
\begin{align*}
\begin{aligned}
& -\left(2\kappa_1+\sqrt{d}\sum_{n=2}^m\kappa_n\right)D(U)^2-\kappa_1D(U)^4\leq\frac{d}{dt}D(U)^2 \\
& \hspace{3cm} \leq-\left(2\kappa_1-\sqrt{d}\sum_{n=2}^m\kappa_n\right)D(U)^2+\kappa_1D(U)^4.
\end{aligned}
\end{align*}
Now we set
\[
\kappa_-=2\kappa_1-\sqrt{d}\sum_{n=2}^m\kappa_n,\quad \kappa_+=2\kappa_1+\sqrt{d}\sum_{n=2}^m\kappa_n,
\]
then we can express above estimate as follows
\begin{align*}\label{E-7}
-\kappa_+D(U)^2-\kappa_1D(U)^4\leq\frac{d}{dt}D(U)^2\leq-\kappa_-D(U)^2+\kappa_1D(U)^4.
\end{align*}
and assume that $\kappa_->0$. This implies $\kappa_1$ must be positive and $\kappa_n$ with $n>1$ can be negative. 
\end{proof}

\subsection{Relaxation estimate} \label{sec:5.2}
In this subsection, we derive decay estimates for ${\mathcal D}(U)$. For this, we first present estimates on the Riccati type differential inequalities.
\begin{lemma} \label{L5.2}
Suppose that a differential inequality $X$ satisfies a differential inequality:
\[
-\kappa_+X-\kappa_1X^2\leq \frac{d}{dt}X\leq -\kappa_-X+\kappa_1X^2, \quad t > 0, \qquad 0\leq X(0)<\frac{\kappa_-}{\kappa_1}.
\]
Then we have following inequality.
\[
\frac{\frac{\kappa_+}{\kappa_1}X(0)}{e^{\kappa_+t}(X(0)+\kappa_+/\kappa_1)}\leq X(t)\leq  \frac{\kappa_-X(0)/\kappa_1}{e^{\kappa_-t}(\kappa_-/\kappa_1-X(0))+X(0)}
\]
\end{lemma}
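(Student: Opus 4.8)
The plan is to bracket $X$ between the exact solutions of the two scalar Riccati equations that furnish the outer bounds in the hypothesis, and then to solve each of those equations in closed form by the reciprocal substitution which linearizes a Riccati equation. First I would fix the comparison functions: let $Y$ solve the upper problem $\dot Y = -\kappa_- Y + \kappa_1 Y^2$ with $Y(0)=X(0)$, and let $W$ solve the lower problem $\dot W = -\kappa_+ W - \kappa_1 W^2$ with $W(0)=X(0)$. Since both right-hand sides are polynomials, hence locally Lipschitz, the standard ODE \emph{comparison principle} applies: the hypothesis $\dot X \le -\kappa_- X + \kappa_1 X^2$ together with $X(0)=Y(0)$ forces $X(t)\le Y(t)$, while $\dot X \ge -\kappa_+ X - \kappa_1 X^2$ together with $X(0)=W(0)$ forces $X(t)\ge W(t)$, on the common interval of existence.

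Next I would integrate the two Riccati equations. Setting $Z:=1/Y$ turns the upper equation into the linear equation $\dot Z = \kappa_- Z - \kappa_1$, whose solution with $Z(0)=1/X(0)$ is $Z(t)=\kappa_1/\kappa_- + (1/X(0)-\kappa_1/\kappa_-)e^{\kappa_- t}$. Inverting and clearing denominators gives exactly $Y(t)=\frac{\kappa_- X(0)/\kappa_1}{X(0)+(\kappa_-/\kappa_1-X(0))e^{\kappa_- t}}$, which is the claimed upper bound. The same substitution on the lower equation yields $\dot Z=\kappa_+ Z+\kappa_1$, hence $W(t)=\frac{\kappa_+ X(0)/\kappa_1}{(\kappa_+/\kappa_1+X(0))e^{\kappa_+ t}-X(0)}$. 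Because $X(0)\ge 0$, deleting the subtracted $X(0)$ in the denominator only enlarges it, so $W(t)\ge \frac{(\kappa_+/\kappa_1)X(0)}{e^{\kappa_+ t}(X(0)+\kappa_+/\kappa_1)}$, which is the stated (slightly weakened) lower bound. Chaining $W(t)\le X(t)\le Y(t)$ then finishes the argument; the degenerate case $X(0)=0$ is immediate, both bounds collapsing to $0$.

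The delicate point, which I would settle before invoking the comparison principle, is the global solvability and monotonicity of the comparison functions, since a Riccati equation can blow up in finite time. For the upper equation the equilibria are $0$ and $\kappa_-/\kappa_1$; the hypothesis $0\le X(0)<\kappa_-/\kappa_1$ places $Y(0)$ in the region where $-\kappa_- Y+\kappa_1 Y^2<0$, so $Y$ decreases monotonically toward $0$, stays in $[0,\kappa_-/\kappa_1)$ for all $t\ge 0$, and never blows up (nor reaches $0$ in finite time, keeping $1/Y$ well defined). This is precisely where the smallness assumption on $X(0)$ is used. For the lower equation $\dot W=-W(\kappa_++\kappa_1 W)\le 0$ on $W\ge 0$, so $W$ is nonincreasing, nonnegative, bounded, and again globally defined. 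Consequently both comparison inequalities hold on all of $[0,\infty)$ and $X$ is trapped in $[0,\kappa_-/\kappa_1)$, so the genuine obstacle is just this a priori confinement rather than the routine integrations.
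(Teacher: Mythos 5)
Your proof is correct, and it reaches the same closed-form expressions as the paper, but by a somewhat different mechanism. The paper does not introduce comparison solutions at all: it rewrites the two sides of the hypothesis as $-\kappa_1 X\left(X+\frac{\kappa_+}{\kappa_1}\right)\leq \dot X\leq -\kappa_1 X\left(\frac{\kappa_-}{\kappa_1}-X\right)$ and integrates the inequality directly, using the partial-fraction identity that turns $\frac{\dot X}{X(X+\kappa_+/\kappa_1)}$ into $\frac{d}{dt}\log\frac{X}{X+\kappa_+/\kappa_1}$ (and analogously for the upper side), then exponentiates and solves for $X(t)$. You instead invoke the ODE comparison principle against the exact solutions $Y$ and $W$ of the two bounding Riccati equations, and integrate those equations by the reciprocal substitution $Z=1/Y$. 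The two routes are computationally equivalent; yours is more modular and makes the a priori confinement $0\le X(t)<\kappa_-/\kappa_1$ and the global existence of the comparison flows explicit (points the paper leaves implicit, along with the positivity needed to divide by $X$), while the paper's is more self-contained in that it needs no comparison theorem, only a direct quadrature. You also correctly observe that the lower bound as stated in the lemma is a slight weakening of what the integration actually yields (the denominator in the paper's own derivation carries an extra $-X(0)$), and that dropping this term only enlarges the denominator since $X(0)\ge 0$; this reconciliation step is needed in either approach and you handle it, as well as the degenerate case $X(0)=0$, explicitly.
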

\begin{proof}
By direct estimates, one has
\begin{align*}
-\kappa_1X\left(X+\frac{\kappa_+}{\kappa_1}\right)\leq \frac{d}{dt}X\leq -\kappa_1X\left(\frac{\kappa_-}{\kappa_1}-X\right).
\end{align*}
For the lower bound estimate, we use the L.H.S. of the above differential inequality to get 
\[
-\kappa_+\leq \frac{\dot{X}}{X}-\frac{\dot{X}}{X+\frac{\kappa_+}{\kappa_1}}
\]
This yields
\[
\frac{{\kappa_+}X(0)/{\kappa_1}}{e^{\kappa_+t}(X(0)+\kappa_+/\kappa_1)-X(0)}\leq X(t).
\]
Similarly, one has 
\[
X(t)\leq \frac{\kappa_-X(0)/\kappa_1}{e^{\kappa_-t}(\kappa_-/\kappa_1-X(0))+X(0)}.
\]
\end{proof}
Finally, Lemma \ref{L5.1} and Lemma \ref{L5.2} imply the exponential decay estimate of relative states. 
\begin{theorem} \label{T5.1}
Suppose that coupling strengths and initial data satisfy
\[ \kappa_-=2\kappa_1-\sqrt{d}\sum_{n=2}^m\kappa_n> 0 \quad \mbox{and} \quad \max_{1 \leq i, j \leq N}\|U^{in}_i-U^{in}_j\|_F^2< \frac{\kappa_-}{\kappa_1},  \]
and let $\{U_i\}$ be a global solution of system \eqref{D-1}. Then we have
\[
{\mathcal O}(e^{-\kappa_+t})\leq \|U_i(t)-U_j(t)\|_F^2\leq {\mathcal O}(e^{-\kappa_-t}), \quad i, j = 1, \cdots, N.
\]
\end{theorem}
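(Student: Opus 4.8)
The plan is to track the scalar \emph{diameter functional} $X(t):=\mathcal{D}(U(t))^2=\max_{i,j}\|U_i(t)-U_j(t)\|_F^2$ and feed it into the abstract Riccati comparison principle of Lemma \ref{L5.2}. First I would note that $X$ is a finite maximum of the smooth functions $t\mapsto\|U_i(t)-U_j(t)\|_F^2$, so it is locally Lipschitz and its right derivative exists everywhere with
\[ X'_+(t)=\max_{(i,j)\in\mathcal{A}(t)}\frac{d}{dt}\|U_i(t)-U_j(t)\|_F^2, \]
where $\mathcal{A}(t)$ is the set of extremal (active) pairs realizing the maximum at time $t$. This is the standard device for handling the nondifferentiability of a max of smooth functions, and it replaces the need for heavier Dini-derivative machinery.

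Next I would invoke Lemma \ref{L5.1}, whose pairwise estimate is sharp precisely for an extremal pair. Since every active pair $(i,j)\in\mathcal{A}(t)$ satisfies $\|U_i-U_j\|_F^2=X$, substituting into the two-sided bound of Lemma \ref{L5.1} and passing through the $\max$ (for the upper side) and any single active index (for the lower side) yields the closed differential inequality
\[ -\kappa_+ X-\kappa_1 X^2\leq X'_+\leq -\kappa_- X+\kappa_1 X^2, \qquad \kappa_\pm=2\kappa_1\pm\sqrt{d}\sum_{n=2}^m\kappa_n. \]
I would then check the hypotheses of Lemma \ref{L5.2}: the standing assumption $\kappa_-=2\kappa_1-\sqrt{d}\sum_{n\geq2}\kappa_n>0$ forces $\kappa_1>0$, and the smallness condition $\max_{i,j}\|U_i^{in}-U_j^{in}\|_F^2<\kappa_-/\kappa_1$ is exactly $0\leq X(0)<\kappa_-/\kappa_1$. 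Applying the comparison principle of Lemma \ref{L5.2} at the level of the right derivative then produces the explicit two-sided envelope for $X(t)$.

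Finally I would extract the large-time asymptotics. As $t\to\infty$ the denominator of the lower envelope grows like $(X(0)+\kappa_+/\kappa_1)e^{\kappa_+t}$ and that of the upper envelope like $(\kappa_-/\kappa_1-X(0))e^{\kappa_-t}$, so $X(t)=\mathcal{D}(U)^2$ is squeezed between a constant multiple of $e^{-\kappa_+ t}$ and a constant multiple of $e^{-\kappa_- t}$. Since $\|U_i-U_j\|_F^2\leq\mathcal{D}(U)^2$ for every pair, the upper bound transfers verbatim to each relative state, giving $\|U_i(t)-U_j(t)\|_F^2\leq\mathcal{O}(e^{-\kappa_- t})$; the lower bound $\mathcal{O}(e^{-\kappa_+ t})\leq\mathcal{D}(U)^2$ is realized by the extremal pair defining the diameter.

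I expect the only genuine obstacle to be the regularity bookkeeping in the first two steps: rigorously converting the pairwise computation of Lemma \ref{L5.1}, which is tight only on extremal pairs, into a differential inequality for the nonsmooth diameter, and confirming that the comparison of Lemma \ref{L5.2} remains valid when $\frac{d}{dt}X$ is interpreted as the right derivative $X'_+$. The saving observation is that the active pairs in the maximum are automatically extremal, so the sharp constant in Lemma \ref{L5.1} is available exactly where it is needed; once this is in place the remainder is a direct substitution into the already-established lemmas together with an elementary large-$t$ expansion of the envelope.
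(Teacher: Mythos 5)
Your proposal is correct and follows essentially the same route as the paper, which obtains Theorem \ref{T5.1} by feeding the two-sided Riccati inequality of Lemma \ref{L5.1} for $X=\mathcal{D}(U)^2$ into the comparison estimate of Lemma \ref{L5.2} and reading off the exponential envelopes. Your extra care with the right (Dini) derivative of the max of smooth pairwise functions is a refinement the paper silently omits, but it does not change the argument.
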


\subsection{Extension to a heterogeneous ensemble} \label{sec:5.3}
For a heterogeneous ensemble,  we can extend the generalized Lohe matrix model \eqref{D-1} by adding $H_j$ to the R.H.S. of \eqref{D-1}:
\begin{align}
\begin{aligned} \label{E-8}
&{\mathrm i} \dot{U}_j U_j^{\dagger} = H_j +  \sum_{n=1}^m\frac{{\mathrm i} \kappa_n}{2}(\underbrace{U_cU_c^\dagger U_c\cdots U_c^\dagger U_c}_{2n-1} U_j^{\dagger}-U_j \underbrace{U_c^\dagger U_cU_c^\dagger \cdots U_cU_c^\dagger}_{2n-1}), ~~t >0,\\
& U_j(0) =U_j^{in} \in\mathbb{U}(d), \quad j = 1, \cdots, N,
\end{aligned}
\end{align}
where $H_j$ is a Hermitian matrix with $H_j^* = H_j$. In this case, it is easy to see that 
\[ \frac{d}{dt} U_j^{\dagger} U_j = 0, \quad j = 1, \cdots, N. \]
Hence, we have
\[ U_j^{\dagger} U_j = I_d, \quad j  = 1, \cdots, N. \]
In this case, system \eqref{E-8} becomes
\begin{align}\label{E-9}
\begin{cases}
\dot{U}_j =-\mathrm{i}H_j +\displaystyle\sum_{n=1}^m\frac{\kappa_n}{2}(\underbrace{U_cU_c^\dagger U_c\cdots U_c^\dagger U_c}_{2n-1}-U_j \underbrace{U_c^\dagger U_cU_c^\dagger \cdots U_cU_c^\dagger}_{2n-1}U_j),\\
U_j(0) =U_j^0\in\mathbb{U}(d).
\end{cases}
\end{align}
For an ensemble $\{H_j \}$, we set 
\[
D(H):= \max_{i,j} \|H_i-H_j\|_F.
\]
We use the same argument as in \cite{H-R}, one has following estimate:
\begin{align*} 
\begin{aligned} 
&- 2D(H)D(U)-\kappa_+D(U)^2-\kappa_1D(U)^4\leq\frac{d}{dt}D(U)^2 \\
& \hspace{4cm} \leq 2D(H)D(U)-\kappa_-D(U)^2+\kappa_1D(U)^4.
\end{aligned}
\end{align*}
This yields
\begin{equation} \label{E-10}
-D(H)-\frac{\kappa_+}{2}D(U)-\frac{\kappa_1}{2}D(U)^3\leq\frac{d}{dt}D(U)\leq D(H)-\frac{\kappa_-}{2}D(U)+\frac{\kappa_1}{2}D(U)^3.
\end{equation}
\begin{theorem}
Suppose that system parameters and initial data satisfy
\begin{align*}
\begin{aligned}
& \kappa_-=2\kappa_1-\sqrt{d}\sum_{n=2}^m\kappa_n>0,\quad D(H)<\frac{1}{3}\sqrt{\frac{\kappa_-^3}{3\kappa_1}}, \\
& U_j^{in \dagger} U_j^{in} = I_d, \quad j = 1, \cdots, N \quad  \mbox{and} \quad 
D(U^{in})<\rho=\frac{2D(H)}{\kappa_1}.
\end{aligned}
\end{align*}
Then, for a global solution $\{U_i\}$ to \eqref{E-8}, we have 
\[
\lim_{\kappa_1\rightarrow0}\limsup_{t\rightarrow\infty}D(U)=0.
\]
\end{theorem}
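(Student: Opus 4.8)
The plan is to reduce the matrix dynamics to a one-dimensional comparison problem for the diameter and then perform a phase-line (barrier) analysis of the resulting cubic differential inequality. Writing $D := D(U)$, I would start from the already-established inequality \eqref{E-10}, which in its upper form reads
\[
\frac{d}{dt} D \le D(H) - \frac{\kappa_-}{2} D + \frac{\kappa_1}{2} D^3 =: g(D),
\]
where $g$ is a cubic with positive leading coefficient and $g(0) = D(H) > 0$. Since $g$ is locally Lipschitz, the scalar comparison principle (applied to the upper Dini derivative of the Lipschitz map $t \mapsto D(U)(t)$) yields $D(t) \le Y(t)$, where $Y$ solves the autonomous ODE $\dot Y = g(Y)$ with $Y(0) = D(U^{in})$. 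Thus it suffices to analyze the long-time behavior of $Y$ and then its dependence on the coupling. The conservation $U_j^\dagger U_j \equiv I_d$ for \eqref{E-8} keeps $\|U_j\|_F = \sqrt d$, so that $D(U) \le 2\sqrt d$ stays globally bounded.

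First I would locate the zeros of $g$. Its derivative $g'(X) = \tfrac{3\kappa_1}{2}X^2 - \tfrac{\kappa_-}{2}$ vanishes at $X^\ast = \sqrt{\kappa_-/(3\kappa_1)}$, where $g$ attains its positive-axis minimum
\[
g(X^\ast) = D(H) - \frac{1}{3}\sqrt{\frac{\kappa_-^3}{3\kappa_1}}.
\]
Hence the hypothesis $D(H) < \tfrac13\sqrt{\kappa_-^3/(3\kappa_1)}$ is exactly the assertion $g(X^\ast) < 0$, which together with $\kappa_- > 0$ and $g(0) > 0$ forces $g$ to have two positive roots $X_1 < X^\ast < X_2$. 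Consequently $g > 0$ on $[0,X_1)$, $g < 0$ on $(X_1,X_2)$, and $g > 0$ on $(X_2,\infty)$, so on the phase line of $\dot Y = g(Y)$ the smaller root $X_1$ is asymptotically stable while $X_2$ is unstable, with basin of attraction $[0,X_2)$.

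Next comes the trapping argument, which is the role of the hypothesis $D(U^{in}) < \rho = 2D(H)/\kappa_1$: one checks (comparing $\rho$ with $X_2$ using $\kappa_- > 0$) that the initial diameter lies strictly inside the basin $[0,X_2)$. Forward invariance of $[0,X_2)$ together with the monotonicity of $Y$ on each side of $X_1$ then gives $\lim_{t\to\infty} Y(t) = X_1$, whence
\[
\limsup_{t\to\infty} D(U)(t) \le X_1 .
\]
Finally, to pass to the coupling limit, I would estimate $X_1$ itself: since $\tfrac{\kappa_1}{2}X_1^3 - \tfrac{\kappa_-}{2}X_1 + D(H) = 0$ with $0 < X_1 < X^\ast$, one has $X_1 = (2D(H) + \kappa_1 X_1^3)/\kappa_-$, and inserting $X_1 < X^\ast = \sqrt{\kappa_-/(3\kappa_1)}$ controls the cubic correction; the prescribed degeneration of the coupling then drives $X_1 \to 0$, which yields the claimed limit.

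The hard part will be the trapping step. Because \eqref{E-10} is genuinely cubic, Lemma \ref{L5.2} — which only handles the quadratic Riccati inequality used in the homogeneous case — does not apply, so forward invariance of the basin and convergence of $Y$ to the smaller root must be argued directly from the sign structure of $g$. The delicate quantitative point is to verify that the admissible initial threshold $\rho$ genuinely lies below the unstable root $X_2$ throughout the range of parameters in play; once that containment and the estimate on $X_1$ are secured, the limiting computation is routine.
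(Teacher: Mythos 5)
Your strategy coincides with the paper's own proof: both start from the cubic differential inequality \eqref{E-10}, introduce the barrier function $f(x)=D(H)-\tfrac{\kappa_-}{2}x+\tfrac{\kappa_1}{2}x^3$, use $f(\zeta)=D(H)-\tfrac13\sqrt{\kappa_-^3/(3\kappa_1)}<0$ at the critical point $\zeta=\sqrt{\kappa_-/(3\kappa_1)}$ to produce two positive roots $\eta_1<\zeta<\eta_2$, trap the diameter below the unstable root, conclude $\limsup_{t\to\infty}D(U)\le\eta_1$, and then estimate $\eta_1$. So the architecture is identical; the problems lie precisely in the two quantitative steps you yourself flag as delicate, and they are genuine gaps.

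First, the containment $D(U^{in})<\rho\ \Rightarrow\ D(U^{in})<\eta_2$ (equivalently $\rho\le\eta_2$) is never verified, and it is not automatic: the hypothesis on $D(H)$ only gives $\rho=2D(H)/\kappa_1<\tfrac{2\kappa_-}{3\kappa_1}\,\zeta$, and the prefactor $\tfrac{2\kappa_-}{3\kappa_1}$ exceeds $1$ whenever $\kappa_->\tfrac32\kappa_1$ (already the case for $\kappa_2=\cdots=\kappa_m=0$, where $\kappa_-=2\kappa_1$), so $\rho$ is not even obviously below $\zeta$, let alone below $\eta_2$. (The paper skips this point as well.) Second, your estimate of $X_1$ does not close as written: from $X_1=(2D(H)+\kappa_1X_1^3)/\kappa_-$, inserting $X_1<X^\ast$ into all three factors of the cubic term gives $X_1\le 2D(H)/\kappa_-+\tfrac13\sqrt{\kappa_-/(3\kappa_1)}$, whose second term diverges as $\kappa_1\to0$. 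What does work is to bound only two factors, $\kappa_1X_1^3\le\kappa_1(X^\ast)^2X_1=\tfrac{\kappa_-}{3}X_1$, and absorb, which yields $X_1\le 3D(H)/\kappa_-$. But even this corrected bound tends to $0$ only when $\kappa_-\to\infty$, i.e.\ in the regime $\kappa_1\to\infty$; in the literal limit $\kappa_1\to0$ of the statement one has $\kappa_-\to-\sqrt{d}\sum_{n\ge2}\kappa_n$, which stays bounded, so $3D(H)/\kappa_-$ does not vanish and the claimed limit does not follow. The paper's own final step suffers from the same defect (its tangent-line argument in fact produces the \emph{lower} bound $\eta_1\ge 2D(H)/\kappa_-$ by convexity, and its claimed bound $2D(H)/\kappa_1$ blows up as $\kappa_1\to0$), so the theorem is presumably intended with $\kappa_1\to\infty$; as a proof of the statement as written, however, your final limiting step would fail.
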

\begin{proof}
For the decay estimate of \eqref{E-10}, we set 
\[
f(x) :=D(H)-\frac{\kappa_-}{2}x+\frac{\kappa_1}{2}x^3.
\]
Then we have
\[
\frac{d}{dt}D(U)\leq f(D(U)).
\]
Now we want to analyze the graph of the $f(x)$ defined on $x\geq0$. Let $\zeta$ be the positive solution of the $f'(x)$. Since
\[
f'(x)=-\frac{\kappa_-}{2}+\frac{3\kappa_1}{2}x^2,
\]
there is only one positive solution and only one negative solution. Then the global minimum of $f(x)$ with the range $x\geq0$ is at $x=\zeta$ with
\[
\zeta=\sqrt{\frac{\kappa_-}{3\kappa_1}}.
\]
So the global minimum is
\[
f(x)\leq f(\zeta)=D(H)-\frac{1}{3}\sqrt{\frac{\kappa_-^3}{3\kappa_1}}.
\]
From the assumption
\[
D(H)<\frac{1}{3}\sqrt{\frac{\kappa_-^3}{3\kappa_1}},
\]
we have two distinct positive solutions $\eta_1$ and $\eta_2$ of $f(x)=0$ with $\eta_1<\eta_2$. Then, we know
\[
f(x)>0\quad  \mbox{at}\quad  x<\eta_1,\quad x>\eta_2;\qquad f(x)<0\quad \mbox{at}\quad \eta_1<x<\eta_2.
\]
If the initial data satisfies
\[
D(U^{in})<\eta_2,
\]
then
\[
\limsup_{t\rightarrow\infty}D(U)\leq \eta_1.
\]
Now we want to find the estimate for $\eta_1$. Since 
\[
f''(x)=3\kappa_1x\geq0\quad \forall x\geq0
\]
If we draw the tangent line $l$ at $(0, D(H))$ on the graph of $y=f(x)$, then $l$ intersects with $x$-axis at $(\rho, 0)$ with 
\[
0<\eta_1<\rho.
\]
Since $\rho=\frac{2D(H)}{\kappa_1}$, we have
\[
\limsup_{t\rightarrow\infty}D(U)\leq \frac{2D(H)}{\kappa_1}.
\]
Finally, we have the practical synchronization:
\[
\lim_{\kappa_1\rightarrow0}\limsup_{t\rightarrow\infty}D(U)=0.
\]
\end{proof}

\section{Conclusion} \label{sec:6}
\setcounter{equation}{0}
In this paper, we have derived a generalized Lohe matrix model with a higher-order polynomial coupling via a gradient flow approach. In \cite{H-K-R2}, the first author and his collaborator have shown that the Lohe matrix model can cast as a gradient flow with a quadratic potential on the unitary group. In the original Lohe's works \cite{Lo-1, Lo-2}, the quadratic coupling is not justified a priori. Hence it is not clear why Lohe employed a cubic interaction for the evolution of the state. In authors' earlier work on the Lohe tensor model which is a high-dimensional generalization of the Lohe matrix model, couplings can be allowed to include odd high-order ones To incorporate higher-order couplings, we use a gradient flow approach to derive a generalized Lohe matrix model with higher-order couplings by employing a higher-order potential and gradient flow approach altogether. For the proposed model, we presented a sufficient framework for the emergent dynamics in terms of system parameters and initial data. Our gradient flow approach is restricted to a homogeneous ensemble. Hence, its extension to the heterogeneous remains unsolved at present, and we leave it for a future work. 
\appendix

\newpage

\end{document}